\documentclass[12pt,a4paper]{article}
\usepackage[utf8]{inputenc}
\usepackage{cmap}
\usepackage[T1]{fontenc}
\usepackage{graphicx}
\usepackage{tikz}\usetikzlibrary{calc}
\usepackage{textcomp}

\usepackage[margin=2.6cm]{geometry}
\usepackage{amsmath,amssymb,amsthm}
\usepackage[dvipsnames]{xcolor}
\usepackage[colorlinks=true,linkcolor=MidnightBlue,citecolor=MidnightBlue,urlcolor=MidnightBlue]{hyperref}

\theoremstyle{plain}
\newtheorem{theorem}{Theorem}
\newtheorem{proposition}[theorem]{Proposition}
\newtheorem{lemma}[theorem]{Lemma}
\newtheorem{corollary}[theorem]{Corollary}
\theoremstyle{definition}
\newtheorem{example}[theorem]{Example}
\newtheorem{remark}[theorem]{Remark}

\newcommand{\C}{\mathbb{C}}
\newcommand{\R}{\mathbb{R}}
\newcommand{\Z}{\mathbb{Z}}
\newcommand{\ind}{\mathbf{1}}
\newcommand{\tr}{\operatorname{Tr}}
\newcommand{\Dhat}{\hat{D}}
\newcommand{\nhat}{\hat{n}}
\newcommand{\kk}{\mathfrak{k}}
\newcommand{\tfr}{\mathfrak{t}}

\newcommand{\dist}{\operatorname{dist}}
\newcommand{\bra}[1]{\langle #1|}
\providecommand{\Kbar}{\bar K}
\providecommand{\kbar}{\bar\kappa}
\providecommand{\Dbar}{\bar D}
\newcommand{\ket}[1]{|#1\rangle}

\title{Selection rules for pinned and quasipinned\\ occupation numbers with degeneracy}
\author{Robin Reuvers\\[4pt]
\normalsize Universit\`a degli Studi Roma Tre, Dipartimento di Matematica e Fisica,\\
\normalsize L.go S. L. Murialdo 1, 00146 Roma, Italy\\
\normalsize\texttt{robin.reuvers@uniroma3.it}}
\date{}

\begin{document}
\maketitle
\begin{abstract}
When the natural occupation numbers of a fermionic state saturate a generalized Pauli constraint, the state obeys a selection rule: only those configurations of natural orbitals that saturate the constraint themselves contribute. Pinning thus singles out an active space. The selection rule was proved for non-degenerate occupation numbers; for degenerate ones it was conjectured, and proved for one saturated constraint under an assumption checked only for the constraints known at the time. Here, it is proved for every generalized Pauli constraint, with no assumption, and also in the spin-adapted setting. The rule can only fail for the ordering constraints $\lambda_j\ge\lambda_{j+1}$. Several saturated constraints are served by one basis whenever their common zero face contains a non-degenerate point. In practice, occupation numbers are quasipinned rather than pinned, and the rule survives with an explicit error bound.
\end{abstract}

\noindent\textbf{Keywords:} generalized Pauli constraints, one-body $N$-representability, pinning, quasipinning, natural orbitals, active spaces

\section{Introduction}

The Pauli principle bounds the occupation numbers of a fermionic state by one. Antisymmetry does more. For $N$ fermions in $d$ orbitals, the occupation numbers form a convex polytope, cut out by finitely many further linear inequalities known as \emph{generalized Pauli constraints} \cite{Klyachko06,AltunbulakKlyachko08}.

Two observations make these constraints physical. First, the occupation numbers of real systems---small atoms and molecules, Hubbard models---can lie on or very close to the boundary of the polytope \cite{SchillingGrossChristandl13,Schilling15,SchillingAtoms18,SchillingHubbard15}. Second, the polytope governs reduced density matrix functional theory, in which the one-body density matrix replaces the wave function as the basic variable. The constraints bound the domain of the universal functional \cite{Theophilou15}; the exact functional is built from them; and in the settings studied in \cite{SchillingSchilling19,Maciazek21}, its gradient diverges at the boundary (an ``exchange force'' that drives occupation numbers away from saturation). 

Pinning is not generic. For large systems, the generalized constraints remove only a vanishing fraction of the volume of the Pauli polytope \cite{Reuvers21}---almost everything Pauli allows remains allowed. A state that saturates a constraint is therefore special, and its structure is correspondingly informative: it is built from a restricted set of configurations, that is, a natural active space \cite{Klyachko09,SchillingGrossChristandl13,BenavidesSpringborg15,Schilling20I}. In this paper, I prove this in the case of degenerate occupation numbers, which was mostly left open by previous work (see references below), and show what becomes of it when saturation is only approximate.

\paragraph{Setting.}
The $N$-body space of fermions with one-body space $\C^d$ is $\wedge^N\C^d$. A normalized $\Psi\in\wedge^N\C^d$ has the one-body reduced density matrix
\begin{equation}\label{eq:gamma}
\gamma^\Psi:=N\tr_{2\dots N}\big[\ket{\Psi}\bra{\Psi}\big],
\end{equation}
a positive $d\times d$ matrix of trace $N$. Equivalently $\gamma^\Psi_{pq}=\langle\Psi,f^\dagger_qf_p\Psi\rangle$, where $f_p$ annihilates the $p$th vector of an orthonormal basis. The eigenvalues of $\gamma^\Psi$ in decreasing order,
\[
\lambda^\Psi=(\lambda_1,\dots,\lambda_d),\qquad \lambda_1\ge\dots\ge\lambda_d,
\]
are the \emph{natural occupation numbers}; an orthonormal eigenbasis $B=(u_1,\dots,u_d)$ with $\gamma^\Psi u_j=\lambda_ju_j$ consists of \emph{natural orbitals} \cite{Lowdin55}. The set of all occupation vectors is the polytope
\begin{equation}\label{eq:Delta}
\Delta_{d,N}:=\big\{\lambda^\Psi\ \big|\ \Psi\in\wedge^N\C^d,\ \|\Psi\|=1\big\}.
\end{equation}
That it is a polytope---the convex hull of finitely many points, hence cut out by finitely many linear inequalities---is Klyachko's theorem \cite{Klyachko06}, which is an instance of the convexity theorem for moment maps \cite{GuilleminSternberg82,Kirwan84}. Its faces are the objects of this paper. The classic case is $N=3$, $d=6$ \cite{BorlandDennis72}:
\begin{equation}\label{eq:BD}
\Delta_{6,3}=\big\{\lambda\ \big|\ 1\ge\lambda_1\ge\dots\ge\lambda_6\ge0,\ \ \lambda_j+\lambda_{7-j}=1,\ \ \lambda_4\le\lambda_5+\lambda_6\big\}.
\end{equation}
The constraints are known for all $d\le10$ \cite{AltunbulakKlyachko08}. They have the shape
\begin{equation}\label{eq:GPC}
D(\lambda):=\kappa_0+\sum_{j=1}^d\kappa_j\lambda_j\ \ge\ 0\qquad\text{for all }\lambda\in\Delta_{d,N},
\end{equation}
with real $\kappa_0,\kappa_j$. I call an affine $D$ satisfying \eqref{eq:GPC} \emph{valid}. This includes the ordering constraints $\lambda_j\ge\lambda_{j+1}$, the Pauli constraints $\lambda_1\le1$ and $\lambda_d\ge0$, and the generalized Pauli constraints proper, such as $\lambda_5+\lambda_6\ge\lambda_4$ in \eqref{eq:BD}. In electronic structure theory, one also uses \emph{spin-adapted} constraints, which describe the occupation numbers of particles with spin \cite{Liebert25} (see Section~\ref{sec:main} for the definition).

\paragraph{Pinning and the selection rule.}
A state with $D(\lambda^\Psi)=0$ for a valid $D$ is \emph{pinned} to $D$ \cite{SchillingGrossChristandl13}. Pinning has a structural consequence. Expand $\Psi$ in Slater determinants of natural orbitals,
\begin{equation}\label{eq:expansion}
\Psi=\sum_{I}c_I\ket{u_I},\qquad \ket{u_I}:=\ket{u_{i_1}\wedge\dots\wedge u_{i_N}},
\end{equation}
with $I=\{i_1<\dots<i_N\}$ running over the $N$-subsets of $\{1,\dots,d\}$, and form the one-body operator
\begin{equation}\label{eq:Dhat}
\Dhat_B:=\kappa_0+\sum_{j=1}^d\kappa_j\nhat_j,
\end{equation}
where $\nhat_j$ is the occupation number operator of $u_j$. Slater determinants are its eigenvectors, $\Dhat_B\ket{u_I}=D(\ind_I)\ket{u_I}$, where $\ind_I\in\{0,1\}^d$ is the indicator of $I$. The \emph{selection rule} says that a pinned state satisfies
\begin{equation}\label{eq:rule}
\Dhat_B\Psi=0,\qquad\text{that is,}\qquad c_I=0\ \text{ unless }\ D(\ind_I)=0, 
\end{equation}
that is, only configurations that saturate the constraint themselves contribute. The rule goes back to Klyachko \cite{Klyachko09} and to Schilling, Gross and Christandl \cite{SchillingGrossChristandl13}. It is proved when the occupation numbers are non-degenerate, $\lambda_1>\dots>\lambda_d$ \cite{Klyachko09,Walter14,Maciazek20II,Liebert25}, where it is effectively first-order perturbation theory (see Lemma~\ref{lem:nondeg}). Its consequences for active-space methods and for density matrix functional theory are developed in \cite{Schilling20I,Maciazek20II}, and its approximate form for quasipinned states in \cite{Schilling15,BenavidesSpringborg15}.

\paragraph{The degenerate case.}
Degenerate occupation numbers are common; they might for example be enforced by symmetries. Degeneracy makes the natural orbitals non-unique, and $\Dhat_B$ then depends on the choice (unless $\kappa$ is constant on the degenerate block). The selection rule becomes a statement about a basis, as the following example demonstrates.

\begin{example}\label{ex:A}
Let $u_1,\dots,u_6$ be orthonormal in $\C^6$ and, in the Borland--Dennis normal form of \cite{BorlandDennis72,SchillingGrossChristandl13},
\[
\Psi_A=\sqrt{0.5}\,\ket{u_1\wedge u_2\wedge u_3}+\sqrt{0.3}\,\ket{u_1\wedge u_4\wedge u_5}+\sqrt{0.2}\,\ket{u_2\wedge u_4\wedge u_6}.
\]
Any two of these determinants differ in two orbitals, so $\gamma^{\Psi_A}$ is diagonal by \eqref{eq:diag}, with $\lambda^{\Psi_A}=(0.8,0.7,0.5,0.5,0.3,0.2)$: pinned to $D=\lambda_5+\lambda_6-\lambda_4$ from \eqref{eq:BD}. In the basis $(u_1,\dots,u_6)$ all three configurations have $D(\ind_I)=0$ and the rule holds. But $u_3'=(u_3+u_4)/\sqrt2$, $u_4'=(u_3-u_4)/\sqrt2$ span the same eigenspace, $(u_1,u_2,u_3',u_4',u_5,u_6)$ is again an ordered eigenbasis, and re-expanding produces the configuration $\{1,2,4\}$ with $D(\ind_{\{1,2,4\}})=-1$ and coefficient $\sqrt{0.5}/\sqrt2\ne0$. In this basis, the rule fails.
\end{example}

A correct statement must therefore be existential: \emph{some} ordered basis of natural orbitals obeys the rule. Such a rule was conjectured in \cite{Schilling15}; in this form it is \cite[Conjecture~9]{Schilling20I}, due to Schilling, Benavides-Riveros, Lopes, Maci\k{a}\.zek and Sawicki. They prove it as \cite[Theorem~10]{Schilling20I} for a state saturating exactly one constraint, conditional on a combinatorial hypothesis \cite[Assumption~13]{Maciazek20II} verified for all constraints known at the time ($N\le5$, $d\le11$) but open in general. The recent work \cite{Liebert25} on spin-adapted constraints proves the non-degenerate rule and excludes degeneracies explicitly.

\paragraph{Contributions.}
I prove the selection rule for each generalized Pauli constraint, without that assumption, and also for several constraints in one basis when their common zero face contains a non-degenerate point. Example~\ref{ex:D} shows that the latter hypothesis cannot be dropped. The clean statement is Theorem~\ref{thm:practical}: every state pinned to a generalized Pauli constraint has an ordered basis of natural orbitals obeying the selection rule---degenerate or not, and for spin-adapted constraints as well. The Pauli constraints $\lambda_1\le1$ and $\lambda_d\ge0$ are not always facets and are therefore not covered by the theorem, but for them the rule is elementary (see the fifth remark after Theorem~\ref{thm:main}). Only the ordering constraints are excluded, and they must be: for example, a state pinned to $\lambda_2\ge\lambda_3$ in $\Delta_{6,3}$ violates the rule in every ordered basis of natural orbitals (see Example~\ref{ex:B}). Behind Theorem~\ref{thm:practical} is Theorem~\ref{thm:main}, valid for any inequality whose zero face contains a non-degenerate occupation vector, which also treats several saturated constraints in one and the same basis.

Exact pinning appears to be rare in practice, but the occupation numbers of atoms and molecules might be \emph{quasipinned}, with $D(\lambda^\Psi)$ small but not zero \cite{Schilling15,SchillingAtoms18}, although much of the quasipinning observed so far is a consequence of spin symmetry \cite{Liebert25}. Theorem~\ref{thm:quasi} says what the selection rule then becomes. Degeneracy is an obstruction here: Example~\ref{ex:C} exhibits states with $D(\lambda^\Psi)$ arbitrarily small and half their weight on forbidden configurations, so no bound in a state's own natural orbitals is possible. What does hold is that a quasipinned state is close to a pinned one and inherits its active space: in a natural orbital basis of that pinned state, the given one-body density matrix is close to diagonal and the weight on forbidden configurations is $O(f)$, with $f$ the constraint value and an explicit constant. 
For a perturbed Borland--Dennis state with $f=7\cdot10^{-4}$, whose own natural orbitals put half of its weight on forbidden configurations, this bounds the forbidden weight by $5\cdot10^{-3}$ (Example~\ref{ex:Cnum}). For Borland--Dennis the degenerate case had been settled by a direct computation, with a better constant \cite{SchillingBenavidesVrana17}; what is new here is that every constraint and every pattern of degeneracies is covered.

\paragraph{Method.}
The proof of Theorem~\ref{thm:main} is short: the non-degenerate rule, a compactness argument, and the fact from symplectic geometry that the map $\Psi\mapsto\lambda^\Psi$ is \emph{open} onto $\Delta_{d,N}$ \cite{Knop02,Sjamaar98,LMTW98}. Nothing in it is specific to fermions, and so Appendix~\ref{app:general} proves it for general compact Hamiltonian group actions. The spin-adapted case, the joint spin--orbital constraints of
Altunbulak and Klyachko \cite{AltunbulakKlyachko08}, and distinguishable particles come out at no extra cost. Theorem~\ref{thm:practical} needs in addition to know exactly when $\Delta_{d,N}$ and $\Delta^S_{d,N}$ are full-dimensional, which is discussed in Appendices~\ref{app:dim} and~\ref{app:spin}. The quantitative part of Theorem~\ref{thm:quasi} rests on the identity \eqref{eq:average} and on a lower bound for the slope of $\Psi\mapsto D(\lambda^\Psi)$ at degenerate states, which Ekeland's variational principle \cite{Ekeland74}  turns into a distance to the pinned states with an explicit constant (Appendix~\ref{app:exponent}). The ingredients there are first-order perturbation theory, majorization and Ekeland's principle; no symplectic geometry enters.

\section{Results}\label{sec:main}
\subsection*{Notation}
Throughout, $\Psi\in\wedge^N\C^d$ is normalized, and $\gamma^\Psi$, $\lambda^\Psi$, $\Delta_{d,N}$ are as in \eqref{eq:gamma}--\eqref{eq:Delta}.
\begin{itemize}
\item An affine $D$ is \emph{valid} if $D\ge0$ on $\Delta_{d,N}$. Its \emph{zero face} is $F_D:=\{\lambda\in\Delta_{d,N}\,|\,D(\lambda)=0\}$. When $F_D\ne\emptyset$, it is the set where $D$ attains its minimum over the polytope, since $D\ge0$ there, and hence it is a face. Every face of $\Delta_{d,N}$ is the zero face of some valid $D$. $\Psi$ is \emph{pinned} to $D$ if $\lambda^\Psi\in F_D$. A point $\lambda\in\Delta_{d,N}$ is \emph{non-degenerate} if $\lambda_1>\dots>\lambda_d$. 
\item An \emph{ordered natural orbital basis} of $\Psi$ is an orthonormal basis $B=(u_1,\dots,u_d)$ with $\gamma^\Psi u_j=\lambda^\Psi_ju_j$ for all $j$. It is unique up to phases when $\lambda^\Psi$ is non-degenerate; otherwise any orthonormal basis of each eigenspace, in any order, will do.
\item The \emph{second quantization} of a $d\times d$ matrix $h$ acts by $\Gamma(h)\ket{x_1\wedge\dots\wedge x_N}=\sum_k\ket{x_1\wedge\dots\wedge hx_k\wedge\dots\wedge x_N}$, and by \eqref{eq:gamma},
\begin{equation}\label{eq:Tr}
\langle\Psi,\Gamma(h)\Psi\rangle=\tr\big[h\gamma^\Psi\big].
\end{equation}
Thus, $\nhat_j=\Gamma(\ket{u_j}\bra{u_j})$ and $\Dhat_B=\kappa_0+\Gamma(K_B)$ with $K_B:=\sum_j\kappa_j\ket{u_j}\bra{u_j}$. For unitary $u$, $u^{\wedge N}$ denotes $u^{\otimes N}$ restricted to $\wedge^N\C^d$; then $u^{\wedge N}\Gamma(h)(u^{\wedge N})^{-1}=\Gamma(uhu^{-1})$.
\item A \emph{facet} is a face of codimension one in $\Delta_{d,N}$. The facets are the zero faces of the constraints listed after \eqref{eq:GPC}, which are the inequalities one actually writes down. A facet is an \emph{ordering facet} if, restricted to the polytope, its inequality is a positive multiple of $\lambda_j\ge\lambda_{j+1}$ for some $j$. Every other facet is a Pauli or a generalized Pauli constraint.
\item Two consequences of \eqref{eq:gamma} will be used repeatedly. In any orthonormal basis, the diagonal of $\gamma^\Psi$ is $\langle u_j,\gamma^\Psi u_j\rangle=\sum_{I\ni j}|c_I|^2$, while the off-diagonal element $\langle u_j,\gamma^\Psi u_k\rangle$ receives contributions only from pairs $I$, $I'=(I\setminus\{j\})\cup\{k\}$ of configurations differing in exactly one orbital. Hence, first, if no two configurations in \eqref{eq:expansion} differ in exactly one orbital, then $\gamma^\Psi$ is diagonal in $B$, with
\begin{equation}\label{eq:diag}
\lambda^\Psi=\text{decreasing rearrangement of }\Big(\sum_{I\ni j}|c_I|^2\Big)_{j} .
\end{equation}
Second, in an ordered natural orbital basis we have $\lambda_j=\sum_{I\ni j}|c_I|^2$ exactly, so
\begin{equation}\label{eq:average}
D(\lambda^\Psi)=\sum_I|c_I|^2\,D(\ind_I),
\end{equation}
i.e. the constraint is the average of $D$ over the configurations, weighted by their probabilities.
\end{itemize}

\paragraph{Spin-adapted constraints.}
In electronic structure theory, the one-body space is $\C^d\otimes\C^2$, and one works in the eigenspace $\mathcal{H}_N^{S,M}\subseteq\wedge^N(\C^d\otimes\C^2)$ of total spin $S$ and magnetic quantum number $M$. By Schur--Weyl duality, $\wedge^N(\C^d\otimes\C^2)\cong\bigoplus_S\mathcal{V}^S_{d,N}\otimes\C^{2S+1}$ as a representation of $U(d)\times SU(2)$, where the orbital factor $\mathcal{V}^S_{d,N}$ is the irreducible representation of $U(d)$ whose Young diagram has at most two columns, of lengths $N/2+S$ and $N/2-S$; thus $\mathcal{H}^{S,M}_N=\mathcal{V}^S_{d,N}\otimes\ket{S,M}$. The relevant one-body object is the spatial density matrix $\gamma_l^\Psi:=\tr_{\C^2}\gamma^\Psi$, positive of trace $N$ with eigenvalues $\lambda^\Psi_l\in[0,2]^d$, ordered decreasingly. The polytope $\Delta^{S}_{d,N}:=\{\lambda^\Psi_l\,|\,\Psi\in\mathcal{H}_N^{S,M},\ \|\Psi\|=1\}$ is convex and independent of $M$ \cite{AltunbulakKlyachko08,Liebert25}; its facets are the spin-adapted generalized Pauli constraints. With $u_1,\dots,u_d$ an ordered eigenbasis of $\gamma_l^\Psi$ and $\nhat_j:=\nhat_{j\uparrow}+\nhat_{j\downarrow}$, define $\Dhat_B$ by \eqref{eq:Dhat}. Its eigenvectors are the configuration state functions $\ket{I}$ of \cite{Liebert25}, which span $\mathcal{H}^{S,M}_N$ and satisfy $\Dhat_B\ket{I}=D(w_I)\ket{I}$ for the spatial occupation vector $w_I\in\{0,1,2\}^d$. All terminology transfers verbatim. For $S=0$ there are no spin-adapted constraints beyond $\lambda_1\le2$ \cite[App.~C]{Liebert25}.

\subsection{The selection rule for degenerate occupation numbers}

\begin{theorem}\label{thm:practical}
Let $D$ define a facet of $\Delta_{d,N}$ or of $\Delta^S_{d,N}$ that is not an ordering facet. Then every $\Psi$ pinned to $D$ has an ordered natural orbital basis $B$ with $\Dhat_B\Psi=0$.
\end{theorem}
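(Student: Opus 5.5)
The proof has two steps: first the general Theorem~\ref{thm:main}, then a check that every non-ordering facet satisfies its hypothesis. Theorem~\ref{thm:main} says that if the zero face $F_D$ contains a non-degenerate point, every $\Psi$ pinned to $D$ has an ordered natural orbital basis with $\Dhat_B\Psi=0$.

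\emph{Step 1 (the general rule).} Let $\Psi$ be pinned, with $\lambda^\Psi\in F_D$ possibly degenerate. Since $F_D$ is convex and contains a non-degenerate point $\mu$, the points $(1-t)\lambda^\Psi+t\mu$ for $t\in(0,1]$ lie in $F_D$ and are non-degenerate. They converge to $\lambda^\Psi$ as $t\to0$.

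The map $\Phi\mapsto\lambda^\Phi$ is open onto $\Delta_{d,N}$ \cite{Knop02,Sjamaar98,LMTW98}. Hence points of $\Delta_{d,N}$ near $\lambda^\Psi$ have preimages near $\Psi$, and we get pinned states $\Psi_n\to\Psi$ with non-degenerate $\lambda^{\Psi_n}\in F_D$.

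For each $\Psi_n$ the non-degenerate rule (Lemma~\ref{lem:nondeg}) gives an ordered natural orbital basis $B_n$ with $\Dhat_{B_n}\Psi_n=0$. By compactness of $U(d)$, pass to a subsequence with $B_n\to B$. By continuity, $B$ is an orthonormal basis with $\gamma^\Psi u_j=\lambda^\Psi_j u_j$, so it is an ordered natural orbital basis of $\Psi$. Since $\Dhat_{B_n}=\kappa_0+\Gamma(K_{B_n})$ depends continuously on $B_n$, we also get $\Dhat_B\Psi=0$.

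The same argument works verbatim in the spin-adapted setting, because the $U(d)$-action on $\mathcal{H}^{S,M}_N$ is also a compact Hamiltonian action with moment map $\gamma_l$. Openness holds there too, and the non-degenerate rule is known \cite{Liebert25}.

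\emph{Step 2 (facets meet the non-degenerate locus).} We must show that a facet $F$ that is not an ordering facet contains a non-degenerate point. This needs $\Delta_{d,N}$ (resp.\ $\Delta^S_{d,N}$) to be full-dimensional within the affine hyperplane $\sum\lambda_j=N$ (Appendices~\ref{app:dim}, \ref{app:spin}), restricted to the cases where nontrivial facets exist.

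Suppose every point of $F$ is degenerate. Then $F\subseteq\bigcup_j\{\lambda_j=\lambda_{j+1}\}$. Since $F$ is convex of codimension one, it lies in a single hyperplane $\{\lambda_j=\lambda_{j+1}\}$. Both $F$ and $\{\lambda_j=\lambda_{j+1}\}\cap\Delta$ are faces, and they have the same dimension, so $F=\{\lambda_j=\lambda_{j+1}\}\cap\Delta$. Its defining inequality is therefore, on the polytope, a positive multiple of $\lambda_j\ge\lambda_{j+1}$, which makes $F$ an ordering facet. This is a contradiction, so $F$ contains a non-degenerate point and Step~1 applies.

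\emph{Main obstacle.} Step~1 relies on a deep imported fact, so the delicate part is Step~2's dimension bookkeeping. In the degenerate low-dimensional cases (e.g.\ $N=1$, $N=d-1$, or $S$ extremal), the polytope is not full-dimensional. Then "facet" and "codimension" must be read relative to its affine hull, and one must check that the hull is not contained in some $\{\lambda_j=\lambda_{j+1}\}$. I would handle these cases by direct inspection, using the appendix results on when $\Delta$ is full-dimensional, and treat the Pauli facets as noted in the remarks.
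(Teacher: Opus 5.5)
Your Step~1 is the paper's proof of Theorem~\ref{thm:main}. Your Step~2 is a correct, elementary substitute for the wall lemma (Lemma~\ref{lem:wall}) that the paper imports from Walter. A convex set covered by finitely many walls lies in one of them, so a facet with no non-degenerate point equals $W_j\cap\Delta$; facet inequalities are then unique up to a positive factor modulo the affine hull. Read relative to the affine hull, this works whenever $\Delta$ itself is not contained in a wall, which also covers $\Delta_{6,3}$.

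The gap is the case you defer to ``direct inspection'': polytopes lying entirely inside walls. These are $\Delta_{d,2}$ (every two-fermion state has $\lambda_1=\lambda_2$), its dual $\Delta_{d,d-2}$, and the spin-adapted polytopes with $N\in\{2S,2d-2S\}$ that reduce to them. They are not $N=1$ or $N=d-1$, which are points without facets. In these cases your Step~2 claim is false. For even $d\ge4$, the facet $\lambda_d=0$ of $\Delta_{d,2}$ is not an ordering facet, yet it contains no non-degenerate point. So neither Theorem~\ref{thm:main} nor your approximation argument applies: there is no non-degenerate state at all to approximate by. Your proposed check, that the hull avoids the walls, fails exactly here.

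The fifth remark after Theorem~\ref{thm:main} does not close this either. When $\Delta$ is not full-dimensional, the facet fixes $D$ only modulo the hull equations; here $D=b\lambda_d+\sum_ia_i(\lambda_{2i-1}-\lambda_{2i})+a_0(\sum_j\lambda_j-2)$, and then the choice of basis matters. Take $\Psi=(u_1\wedge u_2+u_3\wedge u_4)/\sqrt2\in\wedge^2\C^6$, with $\lambda^\Psi=(\tfrac12,\tfrac12,\tfrac12,\tfrac12,0,0)$. It is pinned to $D=\lambda_6+\lambda_1-\lambda_2$, which is valid and defines the facet $\lambda_6=0$. But the ordered natural orbital basis $(u_1,u_3,u_2,u_4,u_5,u_6)$ gives $\Dhat_B\Psi=(u_1\wedge u_2-u_3\wedge u_4)/\sqrt2\ne0$.

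The missing idea is the paper's direct argument:
\begin{enumerate}
\item Take the basis of the normal form $\Psi=\sum_ic_i\,u_{2i-1}\wedge u_{2i}$ with $|c_1|\ge\dots\ge|c_k|$, which is an ordered natural orbital basis.
\item Pinning forces $c_k=0$.
\item Each surviving configuration is a pair at which $\lambda_d=0$ and all hull relations hold, so $\Dhat_B\Psi=0$.
\end{enumerate}
The cases $N=d-2$ and $N\in\{2S,2d-2S\}$ then follow by transporting states through particle--hole duality and $\mathcal H^{S,M}_{2S}=\wedge^{2S}\C^d\otimes\ket{S,M}$ (Propositions~\ref{prop:ph} and~\ref{prop:polarized}). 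You must check that these maps carry ordered natural orbital bases, and the equation $\Dhat_B\Psi=0$, along with them.
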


Theorem~\ref{thm:practical} is proved in Section~\ref{sec:proof}, using two facts about the polytopes established in Appendices~\ref{app:dim} and~\ref{app:spin}. Figure~\ref{fig:bd} shows the statement, and two of the examples, for the Borland--Dennis polytope. In terms of \eqref{eq:rule}: some choice of natural orbitals makes only configurations with $D(\ind_I)=0$ contribute. This is the conjecture of \cite{Schilling20I} without Assumption~13 of \cite{Maciazek20II}, and with the spin-adapted constraints of \cite{Liebert25} included. Ordering facets cannot be added: Example~\ref{ex:B} gives a state pinned to $\lambda_2\ge\lambda_3$ for which the rule fails in every ordered natural orbital basis. The conjecture of \cite{Schilling20I} is stated for generalized Pauli constraints, which by definition exclude the ordering constraints, and Example~\ref{ex:B} shows that this exclusion is necessary rather than a convenience; in \cite[Def.~3.1]{Walter14} these facets are singled out as the trivial ones.

Theorem~\ref{thm:practical} follows from a general statement whose hypothesis is that the constraint does not by itself force a degeneracy.

\begin{theorem}\label{thm:main}
Let $D$ be valid with $F_D$ containing a non-degenerate point. Then every $\Psi$ pinned to $D$ has an ordered natural orbital basis $B$ with $\Dhat_B\Psi=0$. Moreover, $B$ can be chosen such that $\Dhat'_B\Psi=0$ simultaneously for every valid $D'$ that vanishes on $F_D$.
\end{theorem}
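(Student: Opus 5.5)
The plan follows the method announced in the introduction: the non-degenerate rule (Lemma~\ref{lem:nondeg}), openness of the moment map $\Psi\mapsto\lambda^\Psi$ onto $\Delta_{d,N}$, and compactness.

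First I reduce to one convenient constraint. Choose a valid $D_0$ with zero face exactly $F:=F_D$; the sum of $D$ and finitely many facet inequalities whose facets contain $F$ works. Every valid $D'$ vanishing on $F$ has $F_{D'}\supseteq F$. I claim it suffices to find an ordered natural orbital basis $B$ in which every configuration $I$ with $c_I\ne0$ satisfies $\ind_I\in\operatorname{aff}(F)$. Indeed, $D'$ is affine and vanishes on $F$, hence on $\operatorname{aff}(F)$, so $D'(\ind_I)=0$ for every contributing $I$. That gives $\Dhat'_B\Psi=0$ for all such $D'$ at once, $D$ included. The target is therefore a basis-dependent statement about a single affine subspace $A:=\operatorname{aff}(F)$.

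Next I approximate $\Psi$ by non-degenerate pinned states. Let $\mu\in F$ be non-degenerate. The points $\lambda_t:=(1-t)\lambda^\Psi+t\mu$, $t\in(0,1]$, lie in the relative interior of $F$ or at least in $F$, and they are non-degenerate: degeneracy $\lambda_j=\lambda_{j+1}$ is a linear condition that fails at $t=1$, so it fails for all but finitely many $t$. Openness of the moment map gives, for each neighbourhood $U$ of $\Psi$ in the unit sphere, a neighbourhood of $\lambda^\Psi$ in $\Delta_{d,N}$ contained in $\lambda(U)$. Hence there are states $\Psi_n\to\Psi$ with $\lambda^{\Psi_n}=\lambda_{t_n}$ non-degenerate, $t_n\to0$, and pinned to $D_0$. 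By Lemma~\ref{lem:nondeg} applied to $D_0$, each $\Psi_n$ has its unique (up to phases) ordered natural orbital basis $B_n$, with $\Dhat_{0,B_n}\Psi_n=0$. So $\Psi_n$ is supported on configurations with $D_0(\ind_I)=0$, which means $\ind_I\in F_{D_0}$ as a point of $\{0,1\}^d$. Is that a point of $\Delta$ with $D_0=0$? Only if $\ind_I\in\Delta$. To avoid this issue I instead apply Lemma~\ref{lem:nondeg} to each facet inequality containing $F$, which yields $\ind_I$ in the intersection of their hyperplanes, namely $A$.

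Finally, compactness. Pass to a subsequence with $B_n\to B$ in the compact unitary group. Then $\gamma^{\Psi_n}\to\gamma^\Psi$ and $\lambda^{\Psi_n}\to\lambda^\Psi$, so $B$ is an ordered natural orbital basis of $\Psi$. The coefficients $c_I^{(n)}=\langle u^{(n)}_I,\Psi_n\rangle$ converge to $c_I$, so every $I$ with $c_I\ne0$ has $c_I^{(n)}\ne0$ for large $n$, hence $\ind_I\in A$. By the first step this proves both claims. For $\Delta^S_{d,N}$ the same argument runs with the $U(d)$ moment map on $\mathcal{V}^S_{d,N}$ and configuration state functions, which is why the proof is phrased so as to need only openness.

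The main obstacle is the openness step. The approximating points must lie in $F$ and be non-degenerate, and the general openness theorem \cite{Knop02} is stated relative to the polytope (open onto $\Delta$, not onto $\R^d$). I expect this form to be exactly what is needed, but I would check carefully that "open onto the image" suffices to hit the points $\lambda_{t_n}$, which lie on the boundary of $\Delta$.
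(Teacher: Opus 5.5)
Your argument has the same structure as the paper's proof. You apply Lemma~\ref{lem:nondeg} at states $\Psi_n\to\Psi$ whose occupation numbers lie on the segment from $\lambda^\Psi$ to a non-degenerate $\mu\in F$ (Lemma~\ref{lem:dense}). You then take a compactness limit of the bases, which is Lemma~\ref{lem:closed}; you phrase it through $c^{(n)}_I\to c_I$ rather than $\Dhat_{B_n}\to\Dhat_B$, and both work.

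Two minor points:
\begin{itemize}
\item Your worry about openness is unfounded. Theorem~\ref{thm:open} gives a neighbourhood of $\lambda^\Psi$ relative to $\Delta_{d,N}$. That neighbourhood contains $\lambda_t$ for small $t$, because $\lambda_t\in\Delta_{d,N}$ and $\lambda_t\to\lambda^\Psi$, whether or not $\lambda_t$ lies on the boundary.
\item $\lambda_t$ is non-degenerate for every $t\in(0,1]$, since $(1-t)(\lambda^\Psi_j-\lambda^\Psi_{j+1})+t(\mu_j-\mu_{j+1})>0$. Your auxiliary $D_0$ can simply be $D$.
\end{itemize}

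The step that fails as written is ``the intersection of their hyperplanes, namely $A$''. The hyperplanes of the facets containing $F$ cut out $\operatorname{aff}(F)$ only inside $\operatorname{aff}(\Delta_{d,N})$. But $\Delta_{d,N}$ is not full-dimensional in at least one case the theorem covers.

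Take $\Delta_{6,3}$ and its Borland--Dennis facet $F$. It contains the non-degenerate point $(0.85,0.75,0.6,0.4,0.25,0.15)$, so the theorem applies. $F$ is the only facet containing $F$, so your intersection is the single hyperplane $\lambda_4=\lambda_5+\lambda_6$ of $\R^6$. By contrast, $\operatorname{aff}(F)$ also satisfies $\lambda_j+\lambda_{7-j}=1$. The configuration $\{2,4,5\}$ lies on that hyperplane but has $\lambda_1+\lambda_6=0$. Meanwhile $D'=\lambda_1+\lambda_6-1$ is valid and vanishes on $F$, so your argument does not yield $\Dhat'_B\Psi=0$ for this $D'$. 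Similarly, if $F=\Delta_{d,N}$ there are no facets containing $F$ at all.

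The repair is immediate and makes the affine-hull detour unnecessary. Each non-degenerate $\Psi_n$ is pinned to every valid $D'$ vanishing on $F$, including $\pm$ the defining equations of $\operatorname{aff}(\Delta_{d,N})$. So apply Lemma~\ref{lem:nondeg} to all of these $D'$, using the same unique basis $B_n$. Your coefficient-support limit is a statement about each configuration $I$ separately, so it handles this infinite family at once. This is exactly the paper's argument with the set $R_{\mathcal D}$.
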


Theorem~\ref{thm:main} is proved in Section~\ref{sec:proof}. Five remarks.
\begin{enumerate}
\item The theorem is about faces: if $\Psi$ is pinned to $D_1,\dots,D_k$, it is pinned to
$D_1+\dots+D_k$, whose zero face is $F_{D_1}\cap\dots\cap F_{D_k}$; if that face contains a
non-degenerate point, one basis obeys all $k$ rules.
\item The choice of basis is irrelevant when $\kappa$ is non-decreasing on each degenerate block
of $\lambda^\Psi$: then \emph{every} ordered natural orbital basis obeys the rule, for every
valid $D$ and with no hypothesis on $F_D$ (Corollary~\ref{cor:mono}; the constant case is
trivial, as $\Dhat_B$ then does not depend on $B$), which proves the first case of \cite[Conjecture~1]{Schilling15}. In Examples~\ref{ex:A} and~\ref{ex:B},
$\kappa$ decreases along the degenerate block.
\item The `moreover' covers the $D'$ that $D$ implies, not every constraint $\Psi$ happens
to saturate. $\Psi_A$ of Example~\ref{ex:A} is pinned to Borland--Dennis and a basis obeys
that rule, but it also saturates $\lambda_3\ge\lambda_4$, and each of its three
configurations has exactly one particle in the eigenspace of $\lambda_3=\lambda_4$, so no
basis obeys that one (Lemma~\ref{lem:pair}).

\item The hypothesis on the face cannot be deleted (Example~\ref{ex:D}), and so \cite[Conjecture~9]{Schilling20I}, which explicitly covers several saturated constraints served by one basis, is false as stated. The hypothesis is
sufficient but not necessary, though: two facets of
$\Delta_{10,4}$ whose zero faces meet only in the vertex $(1,1,1,1,0,\dots,0)$ have no
non-degenerate point on that intersection, yet every state there is a Slater determinant
and both rules hold. Such exceptions do, however, seem to require a common face too small to carry any state
that violates the selection rule, and I expect them to be very special for that reason.
\item The Pauli constraints $\lambda_1\le1$ and $\lambda_d\ge0$ need not define facets, and
then neither theorem reaches them; but their rule is immediate and holds in \emph{every}
ordered natural orbital basis (see e.g.\ \cite[Lemma~16]{Reuvers21}).
\end{enumerate}

The following result for spin-adapted constraints is proved in Appendix~\ref{app:general}, which studies general moment maps.

\begin{corollary}[Spin-adapted constraints]\label{cor:spin}
Theorem~\ref{thm:main} holds verbatim for $\mathcal{H}_N^{S,M}$, $\gamma_l^\Psi$ and $\Delta^S_{d,N}$: if $D$ is valid on $\Delta^S_{d,N}$ and $F_D$ contains a non-degenerate point, then every $\Psi\in\mathcal{H}_N^{S,M}$ pinned to $D$ has an ordered eigenbasis $B$ of $\gamma_l^\Psi$ with $\Dhat_B\Psi=0$.
\end{corollary}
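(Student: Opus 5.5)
The proof of Theorem~\ref{thm:main} goes over to $\mathcal{H}^{S,M}_N$ unchanged once its three ingredients are available there. These are the non-degenerate selection rule, compactness of the unit sphere together with continuity of the natural orbitals, and openness of the map $\Psi\mapsto\lambda^\Psi$ onto the polytope. I would therefore put the spin-adapted setting into a general frame. Take the compact group $K=U(d)$ acting on the $K$-representation $V=\mathcal{V}^S_{d,N}\cong\mathcal{H}^{S,M}_N$, since $SU(2)$ only acts on the factor $\ket{S,M}$. The map $\Psi\mapsto\gamma^\Psi_l$ is the moment map: for $h\in\mathfrak{u}(d)$ one has $\langle\Psi,\Gamma(h\otimes\ind_2)\Psi\rangle=\tr[h\gamma_l^\Psi]$, which is the analogue of \eqref{eq:Tr}. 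The map $\Psi\mapsto\lambda^\Psi_l$ is its composition with the projection to the positive Weyl chamber. Restricted to the projectivised orbit closure, or simply to the unit sphere, which is $K$-invariant, Kirwan's theorem gives convexity of $\Delta^S_{d,N}$. The openness theorem of Knop, Sjamaar and Lerman--Meinrenken--Tolman--Woodward gives that $\Psi\mapsto\lambda^\Psi_l$ is open onto $\Delta^S_{d,N}$, because the unit sphere in an irreducible representation is connected and the action is Hamiltonian.

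The steps are as follows. First, prove the non-degenerate rule for $\mathcal{H}^{S,M}_N$. Let $\lambda^\Psi_l$ be non-degenerate and pinned, and take $h$ Hermitian. By \eqref{eq:Tr}, $D(\lambda_l^{e^{ith}\Psi})=\kappa_0+\sum_j\kappa_j\lambda_j(t)$, and first-order perturbation theory gives derivative $\tfrac{d}{dt}\lambda_j=\langle u_j,[ih,\gamma_l]u_j\rangle=0$. So the first-order variation along unitary orbits vanishes. The useful variations are instead arbitrary $\Psi_t=(\Psi+t\Phi)/\|\cdot\|$ with $\Phi\in\mathcal{H}^{S,M}_N$. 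Their first-order change is $2t\operatorname{Re}\langle\Phi,(\Dhat_B-D(\lambda))\Psi\rangle=2t\operatorname{Re}\langle\Phi,\Dhat_B\Psi\rangle$. This holds because the diagonal entries of $\gamma_l$ in the fixed basis $B$ equal the eigenvalues to first order when these are non-degenerate, and because $\Dhat_B$ preserves $\mathcal{H}^{S,M}_N$: the operators $\nhat_j$ commute with total spin. Minimality of $D\ge0$ at $\Psi$ forces this to vanish for all $\Phi$, so $\Dhat_B\Psi=0$. This is exactly the argument of Lemma~\ref{lem:nondeg}, with the configuration state functions replacing determinants.

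Second, handle the degenerate case. Let $\Psi$ be pinned with $\lambda^\Psi_l\in F_D$, and let $\mu\in F_D$ be non-degenerate. By convexity of the face, points $\lambda^{(n)}=(1-\tfrac1n)\lambda^\Psi_l+\tfrac1n\mu$ lie in $F_D$, are non-degenerate, and converge to $\lambda^\Psi_l$. Openness of the map lets us lift them to states $\Psi_n\to\Psi$ with $\lambda^{\Psi_n}_l=\lambda^{(n)}$. More precisely, openness at $\Psi$ gives, for every neighbourhood $U$ of $\Psi$, that $\lambda(U)$ is a neighbourhood of $\lambda^\Psi_l$ in the polytope, and a diagonal argument then produces the sequence. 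Each $\Psi_n$ has a unique (up to phase) ordered eigenbasis $B_n$ with $\Dhat_{B_n}\Psi_n=0$. For the moreover part, any valid $D'$ vanishing on $F_D$ also gives $\Dhat'_{B_n}\Psi_n=0$, since $\Psi_n$ is pinned to $D'$. By compactness of $U(d)$, pass to a subsequence $B_n\to B$. Continuity gives $\gamma^\Psi_l u_j=\lambda_j u_j$, so $B$ is an ordered eigenbasis of $\gamma_l^\Psi$. Continuity of $B\mapsto\Dhat_B$ then gives $\Dhat_B\Psi=0$ and $\Dhat'_B\Psi=0$ for all such $D'$ simultaneously, because the subsequence does not depend on $D'$.

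The main obstacle is the openness step in this setting. The cited theorems are stated for Hamiltonian actions on connected compact symplectic manifolds with moment map composed with the chamber projection, onto the moment polytope. I must check that the unit sphere modulo phase, which is $\mathbb{P}(\mathcal{V}^S_{d,N})$, qualifies, and that the image there is $\Delta^S_{d,N}$. Both hold: projective space of a unitary representation is a connected compact Hamiltonian $U(d)$-manifold with moment map $[\Psi]\mapsto\gamma^\Psi_l$ (up to a shift and an $i$). Openness of the map from the sphere then follows because the quotient map to projective space is open. Everything else is routine, so I would state the argument once for a general compact Hamiltonian action, which is what Appendix~\ref{app:general} does, and read off the corollary.
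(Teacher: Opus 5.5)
Your proposal is correct and follows the paper's route. The paper checks that $\mathbb{P}(\mathcal{H}^{S,M}_N)$ is a compact connected Hamiltonian $U(d)$-manifold with moment map $[\Psi]\mapsto\gamma_l^\Psi$, and then runs the three steps of Theorem~\ref{thm:general}: the non-degenerate rule, closedness, and density via openness. You do the same, except that your direct first-order variation over $\Phi\in\mathcal{H}^{S,M}_N$ replaces the symplectic Step~1.

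There is one slip. In your first paragraph you apply Kirwan and Knop to the unit sphere, which is not symplectic. You correct this yourself later by passing to $\mathbb{P}(\mathcal{V}^S_{d,N})$ and using that the quotient map is open, which is exactly what the paper does.
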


Let us now state the examples mentioned above. They rely on one observation about ordering constraints. 

\begin{lemma}[Ordering constraints]\label{lem:pair}
Let $\lambda^\Psi_j=\lambda^\Psi_{j+1}$ with two-dimensional eigenspace $W$, and let $D=\lambda_j-\lambda_{j+1}$. Write $\Psi=\Psi_0+\Psi_1+\Psi_2$ for the decomposition of $\Psi$ by the number of particles in $W$, that is $\Psi_k\in\wedge^kW\wedge\wedge^{N-k}W^\perp$. Then $\Dhat_B\Psi=0$ for some, equivalently for every, ordered natural orbital basis $B$ of $\Psi$ if and only if $\Psi_1=0$.
\end{lemma}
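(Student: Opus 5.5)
For $D=\lambda_j-\lambda_{j+1}$ we have $\kappa_0=0$, $\kappa_j=1$, $\kappa_{j+1}=-1$ and all other $\kappa$'s zero. The plan is to work directly with the operator. For any ordered natural orbital basis $B$, the vectors $u_j,u_{j+1}$ form an orthonormal basis of $W$, so
\[
\Dhat_B=\nhat_j-\nhat_{j+1}=\Gamma\big(\ket{u_j}\bra{u_j}-\ket{u_{j+1}}\bra{u_{j+1}}\big),
\]
and the one-body matrix $K_B$ is supported on $W$, traceless there, with eigenvalues $\pm1$. The decomposition $\Psi=\Psi_0+\Psi_1+\Psi_2$ is by eigenvalue of $\Gamma(P_W)$, where $P_W$ is the orthogonal projection onto $W$. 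Since $K_B$ commutes with $P_W$, $\Gamma(K_B)$ commutes with $\Gamma(P_W)$ and preserves each summand. So $\Dhat_B\Psi=0$ if and only if $\Gamma(K_B)\Psi_k=0$ for $k=0,1,2$.

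Next I check the three sectors separately. On $\Psi_0$ no particle is in $W$, so $\Gamma(K_B)\Psi_0=0$. On $\Psi_2$ the $W$-part is $u_j\wedge u_{j+1}$, on which $\Gamma(K_B)$ acts by $\tr K_B|_W=0$; so $\Gamma(K_B)\Psi_2=0$ for every $B$. On $\Psi_1$, expand $\Psi_1=u_j\wedge\alpha+u_{j+1}\wedge\beta$ with $\alpha,\beta\in\wedge^{N-1}W^\perp$. Then
\[
\Gamma(K_B)\Psi_1=u_j\wedge\alpha-u_{j+1}\wedge\beta,
\]
which vanishes exactly when $\alpha=\beta=0$, i.e.\ $\Psi_1=0$. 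Hence $\Dhat_B\Psi=0\iff\Psi_1=0$ for each fixed $B$. The right-hand side does not depend on $B$, so "for some" and "for every" are equivalent. Ordered natural orbital bases with $u_j,u_{j+1}\in W$ always exist, so both conditions are meaningful.

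The only point needing care is confirming that $u_j,u_{j+1}$ span $W$ in every ordered basis. This holds because the eigenspace of the value $\lambda_j=\lambda_{j+1}$ is exactly $W$ and is two-dimensional by hypothesis, so the two positions $j,j+1$ carrying that eigenvalue must be filled by a basis of $W$. I also need that $\Gamma$ acts on $\wedge^kW\wedge\wedge^{N-k}W^\perp$ through its action on the $W$ factor. This follows from the derivation property of $\Gamma$ and is routine. No hypothesis beyond the degeneracy of $\lambda_j,\lambda_{j+1}$ is used.
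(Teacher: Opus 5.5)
Your proof is correct and is essentially the paper's own argument: writing $\Dhat_B=\Gamma(A)$ with $A=\ket{u_j}\bra{u_j}-\ket{u_{j+1}}\bra{u_{j+1}}$, it annihilates $\Psi_0$ and $\Psi_2$ (the latter because $A$ is traceless on $W$). On $\Psi_1=u_j\wedge\alpha+u_{j+1}\wedge\beta$ it gives $u_j\wedge\alpha-u_{j+1}\wedge\beta$, which vanishes if and only if $\alpha=\beta=0$ because the two terms are orthogonal (say this explicitly), a condition independent of $B$; your observation that $\Gamma(K_B)$ commutes with $\Gamma(P_W)$ is harmless but unnecessary, since $\Psi_0$ and $\Psi_2$ are killed outright.
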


\begin{proof}
An ordered natural orbital basis has $(u_j,u_{j+1})=(v,v')$, an orthonormal basis of $W$, and then $\Dhat_B=\Gamma(A)$ with $A:=\ket{v}\bra{v}-\ket{v'}\bra{v'}$, so $Av=v$, $Av'=-v'$ and $A$ vanishes on $W^\perp$. Being a derivation, $\Gamma(A)$ therefore annihilates $\wedge^{N}W^\perp$, whence $\Gamma(A)\Psi_0=0$; and $\Gamma(A)(v\wedge v'\wedge\varphi)=(Av)\wedge v'\wedge\varphi+v\wedge(Av')\wedge\varphi=0$ for $\varphi\in\wedge^{N-2}W^\perp$, whence $\Gamma(A)\Psi_2=0$. Writing $\Psi_1=v\wedge\psi+v'\wedge\psi'$ with $\psi,\psi'\in\wedge^{N-1}W^\perp$ gives $\Gamma(A)\Psi_1=v\wedge\psi-v'\wedge\psi'$, two orthogonal terms. So $\Gamma(A)\Psi=0$ if and only if $\psi=\psi'=0$, that is $\Psi_1=0$, a condition independent of $(v,v')$.
\end{proof}

\begin{example}[Sharpness]\label{ex:B}
Let $u_1,\dots,u_6$ be orthonormal in $\C^6$ and
\[
\Psi_B=\sqrt{0.4}\,\ket{u_1\wedge u_2\wedge u_3}+\sqrt{0.4}\,\ket{u_1\wedge u_4\wedge u_5}
+\sqrt{0.2}\,\ket{u_2\wedge u_4\wedge u_6}.
\]
By \eqref{eq:diag}, $\gamma^{\Psi_B}$ is diagonal and $\lambda^{\Psi_B}=(0.8,0.6,0.6,0.4,0.4,0.2)$:
pinned to $\lambda_2\ge\lambda_3$ and to no other facet. The eigenspace of $\lambda_2=\lambda_3$ is $W=\mathrm{span}(u_2,u_4)$, and the first
two configurations have exactly one particle in $W$, so the rule for $D=\lambda_2-\lambda_3$
fails in every ordered natural orbital basis (Lemma~\ref{lem:pair}).
\end{example}

Figure~\ref{fig:bd} shows Examples \ref{ex:A} and \ref{ex:B} in $\Delta_{6,3}$, and indicates the idea behind the proof of the theorems.

\begin{example}[Several constraints]\label{ex:D}
Let $u_0,\dots,u_6$ be orthonormal in $\C^{10}$ and
\[
\Psi_C=u_0\wedge\big(\sqrt{0.02}\,\ket{u_1\wedge u_2\wedge u_3}
+\sqrt{0.02}\,\ket{u_1\wedge u_4\wedge u_5}
+\sqrt{0.96}\,\ket{u_2\wedge u_4\wedge u_6}\big)\in\wedge^4\C^{10},
\]
the state of Example~\ref{ex:B} with other weights and a spectator orbital, which lifts it
into $\Delta_{10,4}$, where several generalized Pauli constraints exist. By
\eqref{eq:diag},
\[
\lambda^{\Psi_C}=(1,0.98,0.98,0.96,0.04,0.02,0.02,0,0,0).
\]
It saturates
$68$ of the $124$ constraints of $\Delta_{10,4}$ (these are listed in the electronic supplement to \cite{AltunbulakKlyachko08}). These four are among them:
\begin{align*}
D_1&=2-(1,1,1,-1,1,-1,-1,-1,1,-1)\cdot\lambda,\\
D_2&=11-(4,9,4,-6,4,-6,-1,-6,-1,-1)\cdot\lambda,\\
D_3&=36-(9,29,9,-11,19,-21,-1,-21,-11,-1)\cdot\lambda,\\
D_4&=36-(9,9,19,-1,29,-21,-11,-21,-1,-11)\cdot\lambda.
\end{align*}
All four are facets, and one can check that
\[
-15D_1+6D_2-2D_3+D_4=10\,(\lambda_2-\lambda_3).
\]
The same identity holds for
the operators, so a basis with $\Dhat_{i,B}\Psi_C=0$ for all $i$ would annihilate
$(\nhat_2-\nhat_3)\Psi_C$, with $\nhat_j$ the occupation number operators of $B$. But the eigenspace of
$\lambda_2=\lambda_3$ is again $W=\mathrm{span}(u_2,u_4)$ with the first two configurations
meeting it in one orbital, which Lemma~\ref{lem:pair} forbids. Therefore, no ordered natural orbital
basis serves all four constraints. Their common zero face contains no non-degenerate point (the
identity forces $\lambda_2=\lambda_3$ on it)---Theorem~\ref{thm:main} does not apply, and
its hypothesis cannot be dropped.
\end{example}

\begin{figure}[t]
\centering
\begin{tikzpicture}[scale=1.25]
\coordinate (v0) at (-2.263,-0.798);
\coordinate (v1) at (0.425,-1.025);
\coordinate (v2) at (-1.276,-1.384);
\coordinate (v3) at (3.114,3.207);
\coordinate (A) at (0.085,-1.097);
\coordinate (B) at (-0.793,-0.231);
\coordinate (S) at (0.623,-0.250);
\draw[gray!55,dashed] (v0)--(v1);
\fill[gray!25] (v0)--(v2)--(v3)--cycle;
\fill[blue!27] (v1)--(v2)--(v3)--cycle;
\draw[gray!65,thin] (v0)--(v2) (v0)--(v3);
\draw[blue!55!black,thick] (v1)--(v2)--(v3)--cycle;
\draw[red!75!black,very thick] (A)--(S);
\fill[red!75!black] (A) circle(2.2pt);
\fill[red!75!black] (S) circle(2.2pt);
\fill[black] (B) circle(2.2pt);
\node[red!75!black,anchor=north] at ($(A)+(0,-0.08)$) {\small $\lambda^{\Psi_A}$};
\node[red!75!black,anchor=north west] at ($(S)+(0.05,-0.02)$) {\small $\lambda^{*}$};
\node[anchor=south east] at ($(B)+(-0.04,0.06)$) {\small $\lambda^{\Psi_B}$};
\node[blue!55!black] at (1.45,0.55) {\small Borland--Dennis};
\node[gray!45!black] at (-0.55,0.75) {\small $\lambda_2=\lambda_3$};
\node[anchor=south] at ($(v3)+(0,0.05)$) {\small $(1,1,1,0,0,0)$};
\node[anchor=north] at ($(v2)+(0,-0.06)$) {\small $(1,\tfrac12,\tfrac12,\tfrac12,\tfrac12,0)$};
\node[anchor=north west] at ($(v1)+(0.02,-0.06)$) {\small $(\tfrac34,\tfrac34,\tfrac12,\tfrac12,\tfrac14,\tfrac14)$};
\node[anchor=east] at ($(v0)+(-0.08,0)$) {\small $(\tfrac12,\dots,\tfrac12)$};
\end{tikzpicture}
\caption{The Borland--Dennis polytope $\Delta_{6,3}$ \cite{BorlandDennis72}, a tetrahedron
with vertices as labelled. Theorem~\ref{thm:practical} applies on the blue facet
$\lambda_4=\lambda_5+\lambda_6$, while on the grey ordering facet $\lambda_2=\lambda_3$ the rule
can fail, as $\lambda^{\Psi_B}$ of Example~\ref{ex:B} shows. The two hidden facets are the
ordering facets $\lambda_1=\lambda_2$ and $\lambda_3=\lambda_4$. The ordering facets meet the blue facet in its edges, and
$\lambda^{\Psi_A}$ from Example \ref{ex:A} sits on the lower one, shared with $\lambda_3=\lambda_4$. To prove Theorem~\ref{thm:main}, we approach $\lambda^{\Psi_A}$ along a path (in red) that contains non-degenerate $\lambda^*$ in the relative interior of the blue
facet. For states corresponding to these $\lambda^*$, we know the selection rule is satisfied because they have no degenerate eigenvalues.}
\label{fig:bd}
\end{figure}

\subsection{Quasipinning}
\label{quasipinning}

As discussed in the introduction, in practice, $D(\lambda^\Psi)$ is small rather than zero. Write $f(\Psi):=D(\lambda^\Psi)$, and, for an orthonormal basis $B=(u_1,\dots,u_d)$ of $\C^d$ with $\Psi=\sum_Ic_I\ket{u_I}$, let
\[
W_B(\Psi):=\sum_{D(\ind_I)\ne0}|c_I|^2
\]
be the weight of $\Psi$ outside the active space of $B$. The selection rule \eqref{eq:rule} says $W_B(\Psi)=0$ in an ordered natural orbital basis. The question is what becomes of it when $f$ is small, but non-zero.

For non-degenerate occupation numbers the answer is known. If $f$ is small compared with the gaps between consecutive occupation numbers, a gradient flow moves $\Psi$ by at most $\sqrt{2f}$ to a pinned state, so that $W_B(\Psi)\le2f$ in the natural orbital basis $B$ of that pinned state, which is close to that of $\Psi$ \cite{SchillingBenavidesVrana17}. For Borland--Dennis, the weight in the natural orbitals of $\Psi$ itself is $O(f/(\lambda_3-\lambda_4))$ \cite[Theorem~4]{Schilling15}, and in the degenerate case it is $2f+O(f^2)$ after a rotation of the third and fourth natural orbitals \cite[Supplement]{SchillingBenavidesVrana17}, where the degenerate case is conjectured in general. Bounds in the state's own natural orbitals must degenerate as the gaps close, as demonstrated by the following example.

\begin{example}\label{ex:C}
Perturb $\Psi_A$ of Example~\ref{ex:A} to $\Psi_A+\epsilon\ket{u_1\wedge u_2\wedge u_4}$, normalized. The occupation numbers move by $O(\epsilon)$, so $f=|\epsilon|/\sqrt2+O(\epsilon^2)$ is small. But the perturbation also splits the degenerate pair $\lambda_3=\lambda_4$, and the natural orbitals it selects for the split pair converge, as $\epsilon\to0$, to $(u_3\pm u_4)/\sqrt2$: to the basis in which Example~\ref{ex:A} fails, not to the one in which it holds. In that limit, the six configurations carry weights $0.25,0.25,0.15,0.15,0.10,0.10$ and $D$-values $0,-1,1,0,1,0$. By \eqref{eq:average}, $f\to0.25(-1)+0.15(1)+0.10(1)=0$, while $W_B\to0.50$ in that basis. The constraint vanishes by \emph{cancellation} between allowed and forbidden configurations, not because the forbidden ones are absent.
\end{example}

So in the state's own natural orbitals, $W_B$ is not controlled by $f$. The reason is that a small perturbation of a degenerate pinned state fixes the natural orbitals by its own direction, with no regard for $D$. Diagonalizing a computed one-body density matrix near a degeneracy therefore returns orbitals whose selection-rule content is not controlled by the constraint value. What survives is that a quasipinned state is close to a pinned one, and the pinned one has a good basis.

\begin{theorem}[Quasipinning]\label{thm:quasi}
Let $D$ be valid with $F_D$ containing a non-degenerate point and $f\not\equiv0$, and let
$\beta_D>0$ be the combinatorial constant \eqref{eq:beta} of Appendix~\ref{app:exponent}. Every normalized $\Psi$ with
$f(\Psi)<\beta_D$ admits an orthonormal basis $B$ of $\C^d$ in which, with
$\eta:=\arcsin\sqrt{f(\Psi)/\beta_D}$,
\begin{align*}
\gamma^\Psi&=\Lambda+E\quad\text{with $\Lambda$ diagonal, its entries non-increasing, and }\ \|E\|\le\eta,\\
W_B(\Psi)&\le\eta^2,\qquad\text{and}\qquad \eta\le\sqrt{2f(\Psi)/\beta_D}\ \text{ if }
f(\Psi)\le\beta_D/2 .
\end{align*}
Here $\beta_D\le a$, the smallest positive value of $D(\ind_I)$ over configurations $I$, and
for $D$ with integer coefficients, such as every generalized Pauli constraint,
\[
\beta_D\ \ge\ \frac{2^d}{d\,(d+1)^{(d+1)/2}} .
\]
\end{theorem}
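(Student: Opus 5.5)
The plan is to show that $f$ has a uniform slope away from its zero set, convert that slope into a distance to a pinned state by Ekeland's principle, and then transport the selection rule of Theorem~\ref{thm:main} from the pinned state back to $\Psi$.

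\textbf{Step 1: a uniform slope bound.} Work on the unit sphere $S$ of $\wedge^N\C^d$ with the angle metric $\theta(\Phi,\Psi)=\arccos|\langle\Phi,\Psi\rangle|$. Set $f(\Phi)=D(\lambda^\Phi)$, which is continuous and $\ge0$. The target is the slope bound
\[
|\nabla f|(\Phi)\ \ge\ 2\sqrt{\beta_D\,f(\Phi)\,(1-f(\Phi)/\beta_D)}
\qquad\text{whenever } 0<f(\Phi)<\beta_D .
\]
Equivalently, writing $f=\beta_D\sin^2\theta$, the quantity $\theta=\arcsin\sqrt{f/\beta_D}$ has slope at least $1$.

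To get it, fix $\Phi$ and an ordered natural orbital basis $B$ of $\Phi$. Let $P$ be the spectral projection of $\Dhat_B$ onto the kernel, and split $\Phi=\cos\theta\,\Phi_0+\sin\theta\,\Phi_1$ with $\Phi_0\in\operatorname{ran}P$ and $\Phi_1\perp\operatorname{ran}P$.

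The main tool is first-order perturbation theory together with majorization. For every $\Phi'$ and every basis $B'$, the diagonal of $\gamma^{\Phi'}$ in $B'$ is majorized by $\lambda^{\Phi'}$. Combined with validity of $D$ and \eqref{eq:average}, this gives the key inequality
\[
f(\Phi')\le\langle\Phi',\Dhat_{B}\Phi'\rangle + (\text{ordering correction}).
\]
The ordering correction is what forces us to pick, among the ordered bases, one that makes $\kappa$ non-decreasing within each degenerate block (compare Corollary~\ref{cor:mono}). With that choice, $\Phi\mapsto\langle\Phi,\Dhat_B\Phi\rangle$ is a quadratic upper envelope of $f$ that touches $f$ at $\Phi$.

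Moving $\Phi$ along the geodesic toward the kernel part $\Phi_0$ then decreases $f$ at a rate controlled by the spectrum of $\Dhat_B$ restricted to the complement of its kernel. The combinatorial constant $\beta_D$ of \eqref{eq:beta} should be exactly the constant that lower-bounds this rate after optimizing over the sign cancellations seen in Example~\ref{ex:C}. For this reason $\beta_D\le a$: taking $\Phi$ a single configuration with $D(\ind_I)=a$ saturates it.

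\textbf{Step 2: a pinned state near $\Psi$ via Ekeland.} Apply Ekeland's variational principle to $g=\arcsin\sqrt{f/\beta_D}$ on the complete metric space $S$ (complete in the projective angle metric), with parameter $1-\epsilon$. Since $g$ has slope at least $1$ wherever $0<g<\pi/2$, the Ekeland point cannot lie where $g>0$. This yields a $\Phi$ with $f(\Phi)=0$ and $\theta(\Psi,\Phi)\le g(\Psi)/(1-\epsilon)$; letting $\epsilon\to0$ and using compactness gives $\theta(\Psi,\Phi)\le\eta$.

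\textbf{Step 3: transport the rule back to $\Psi$.} Apply Theorem~\ref{thm:main} to the pinned state $\Phi$ to obtain an ordered natural orbital basis $B$ with $\Dhat_B\Phi=0$. In $B$, take $\Lambda=\gamma^\Phi$, which is diagonal with non-increasing entries. After choosing phases so that $\langle\Phi,\Psi\rangle=\cos\theta$, write $\Psi=\cos\theta\,\Phi+\sin\theta\,\chi$ with $\chi\perp\Phi$.

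The weight outside the active space is then $W_B(\Psi)=\sin^2\theta\,\|Q\chi\|^2\le\sin^2\eta\le\eta^2$, where $Q$ projects onto the configurations with $D(\ind_I)\ne0$.

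For $E=\gamma^\Psi-\gamma^\Phi$, use \eqref{eq:Tr}: for a unit vector $x$, $\langle x,Ex\rangle$ is a difference of expectations of $\Gamma(\ket x\bra x)$, an operator with spectrum in $[0,1]$. The difference of expectations of such an operator in two states at angle $\theta$ is at most $\sin\theta$, so $\|E\|\le\sin\eta\le\eta$. For the last inequality, $\arcsin\sqrt t\le\sqrt{2t}$ holds for $t\le1/2$.

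\textbf{Step 4: the integer lower bound.} For $D$ with integer coefficients, $\beta_D$ is defined combinatorially as a minimum over faces or configurations. A Cramer's-rule argument bounds it below: the relevant determinants are of $\{0,1\}$-matrices of size at most $d+1$, and Hadamard's bound together with the $2^d$ normalization gives $\beta_D\ge 2^d/(d(d+1)^{(d+1)/2})$.

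\textbf{Main obstacle.} The hardest step is Step 1, the uniform slope bound at degenerate $\Phi$. There $\lambda^\Phi$ is not differentiable and the natural orbitals are not unique, so the argument must choose the block ordering of $B$ carefully and establish the envelope inequality through majorization rather than through differentiating eigenvalues.
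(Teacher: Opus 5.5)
Your outer architecture matches the paper's: a slope bound $|\partial f|\ge2\sqrt{\beta_Df-f^2}$, Ekeland's principle applied to $\arcsin\sqrt{f/\beta_D}$, Theorem~\ref{thm:main} at the resulting pinned state, and the elementary transport estimates. Your Step~3 is correct, and even gives $\sin\eta$ in place of $\eta$.

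The gap is Step~1, which is the entire content of the theorem. You propose to ``pick, among the ordered bases, one that makes $\kappa$ non-decreasing within each degenerate block''. No such choice exists. $\kappa_j$ is attached to the position $j$ in the ordering, and changing the ordered natural orbital basis inside a block $b$ only changes which vectors occupy the positions of $b$, never the coefficients $\kappa_b$. Whether $\kappa$ is monotone on a block is therefore fixed by $D$ and the degeneracy pattern alone.

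When $\kappa$ drops inside a block (Borland--Dennis at $\lambda_3=\lambda_4$ has $\kappa_3=0>\kappa_4=-1$), Lemma~\ref{lem:dirderiv}(i) gives the one-sided derivative along $\Theta$ as $\mathrm{Re}\langle G_{B_\Theta},\Theta\rangle$, with a basis $B_\Theta$ selected by $\Theta$ itself. So in general no single $B$ gives even a first-order upper envelope; this is exactly the mechanism of Example~\ref{ex:C}. Your Step~1 thus covers only states with no decreasing block, where the constant is simply $a$ (Remark~\ref{rem:local}). Ekeland, however, needs the slope bound at \emph{every} point with $0<f<\beta_D$, degenerate ones included. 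The sentence that $\beta_D$ ``should be exactly the constant'' bounding the rate is an assertion, not an argument.

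Even in the monotone case your descent direction is the wrong one. Along the geodesic toward $\Phi_0$ the envelope decreases at rate $2\sqrt{f(m-f)}$ with $m=\langle\Phi_1,\Dhat_B\Phi_1\rangle$. Cancellation between forbidden configurations of opposite sign can make $m$ arbitrarily small. The direction $-G_B$ instead gives $\|G_B\|=2\sqrt{\|\Dhat_B\Phi\|^2-f^2}\ge2\sqrt{af-f^2}$, since $x^2\ge ax$ for every value $x=D(\ind_I)$.

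Two ideas are missing.

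\emph{Smearing.} Let $\mathcal K(\Psi)$ be the Hermitian operators that agree with $K_{B_0}$ off the decreasing blocks and have, on each decreasing block, spectrum majorized by $\kappa_b$. This set is convex (Ky Fan). It contains $K_B$ for every ordered natural orbital basis $B$ agreeing with $B_0$ off the decreasing blocks; on the non-decreasing blocks, Lemma~\ref{lem:dirderiv}(ii) lets you keep $B_0$. Hence $C=\{2(\Dhat_{\Kbar}\Psi-f\Psi)\,|\,\Kbar\in\mathcal K(\Psi)\}$ is a convex compact set containing every direction-selected gradient. Take $G^*$ the point of $C$ nearest to $0$ and $\Theta^*=-G^*/\|G^*\|$. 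Then $|\partial f|(\Psi)\ge\|G^*\|$ whatever basis $\Theta^*$ selects, i.e.\ $|\partial f|^2\ge4(\min_{\Kbar}\|\Dhat_{\Kbar}\Psi\|^2-f^2)$.

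\emph{The lower bound that defines $\beta_D$.} The smeared values $\Dbar(\ind_I)$ no longer avoid $(0,a)$, so $x^2\ge ax$ fails; this is exactly where $\beta_D$ enters. The weights $(|\bar c_I|^2)$ lie in the cone of vectors with constant occupation along decreasing blocks. Decompose them into its extreme rays, the minimal balanced vectors. Smearing leaves $\sum_Iv_I\Dbar(\ind_I)$ unchanged for balanced $v$, and Cauchy--Schwarz on each ray then gives $\|\Dhat_{\Kbar}\Psi\|^2\ge\beta_Df$.

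Two smaller points. First, $\beta_D\le a$ follows directly from the definition, not from your saturation heuristic: the singleton pattern is admissible, and its minimal balanced vectors are single configurations. Second, Step~4 needs the actual count:
\begin{itemize}
\item an extreme ray has at most $d$ non-zero $r$-entries;
\item Cramer's rule with $0,1$-minors of order at most $d$ gives $|r|\le d\,h(d)$;
\item $|r|\langle D\rangle_r$ is a positive integer;
\item $h(n)\le2^{-n}(n+1)^{(n+1)/2}$ follows by bordering with ones, passing to $J-2A$, and applying Hadamard.
\end{itemize}
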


So a quasipinned state is, up to a small error, supported on the configurations that
saturate the constraint, in a basis that almost diagonalizes $\gamma^\Psi$ but need not be its own natural orbital basis---indeed need not be close to one, see Example~\ref{ex:C}. That
basis is the natural orbital basis of a pinned $\Phi$ near $\Psi$, chosen by
Theorem~\ref{thm:main} to obey the rule: with $\eta$ as in the theorem and
$\|\Psi-\Phi\|\le\eta$, the bounds $\|\gamma^\Psi-\gamma^\Phi\|\le\eta$ and
$W_B(\Psi)\le\eta^2$ are elementary, and everything rests on finding such a $\Phi$, which is
done in Theorem~\ref{thm:exponent}.

This theorem is proved by a descent, as in \cite{SchillingBenavidesVrana17}, here extended to
degenerate states. Where $f>0$, the slope of $f$ is at least $2\sqrt{f(\beta_D-f)}$. Since $f$
has no gradient at a degenerate state, it is Ekeland's variational principle \cite{Ekeland74},
rather than a flow, that turns a slope into a distance. Degeneracy costs, by the mechanism of
Example~\ref{ex:C}: the direction of the perturbation selects the natural orbitals in which the
derivative is computed, so no single basis controls the descent and one must average over them.
The constant $\beta_D$ measures how much of $f$ that averaging can cancel. The exponent
$\tfrac12$ is the best possible, already for Slater determinants (Appendix~\ref{app:exponent},
Remark~\ref{rem:sharp}). The closed-form bound of $\beta_D$ in Theorem~\ref{thm:quasi} is easy
to compute, whereas its exact value, defined in \eqref{eq:beta}, needs some computation. For a
given state the constant can improve (Proposition~\ref{prop:local}), as Example~\ref{ex:Be} shows.

\begin{example}[Example~\ref{ex:C} continued]\label{ex:Cnum}
Take $\epsilon=10^{-3}$ in Example~\ref{ex:C}. Then $f=7.1\cdot10^{-4}$, and in the natural orbital
basis of the perturbed state the forbidden weight is $W_B=0.498$. For Borland--Dennis
$\beta_D=\tfrac17$ (Appendix~\ref{app:exponent}, Remark~\ref{rem:computing}), so
Theorem~\ref{thm:quasi} gives a basis with $\|E\|\le\arcsin\sqrt{7f}=0.070$ and
$W_B\le\arcsin^2\sqrt{7f}=4.95\cdot10^{-3}$. The basis $(u_1,\dots,u_6)$ is one, with
$\|E\|=7.1\cdot10^{-4}$ and $W_B=\epsilon^2/(1+\epsilon^2)=1.0\cdot10^{-6}$. For Borland--Dennis
the bound $2f+O(f^2)=1.4\cdot10^{-3}$ of \cite[Supplement]{SchillingBenavidesVrana17} is sharper;
Theorem~\ref{thm:quasi} applies to every constraint and every pattern of degeneracies.
\end{example}

\begin{example}[Beryllium]\label{ex:Be}
For the variational ground state of beryllium in its spin-triplet sector within five $s$-type
spatial orbitals, hence ten spin-orbitals, the occupation numbers are known to fifteen digits
\cite[Table~I]{SchillingAtoms18}. Among the 124 generalized Pauli constraints of
$\Delta_{10,4}$, the most nearly saturated is $D=4-2\lambda_1-\lambda_2-\lambda_3-\lambda_5-\lambda_8$,
with $f=4.50\cdot10^{-8}$ \cite{SchillingAtoms18}. This quasipinning is a consequence of spin
symmetry \cite{Liebert25}. To fifteen digits, $\lambda_1+\lambda_2+\lambda_3+\lambda_5+\lambda_8=3$
and $\lambda_4+\lambda_6+\lambda_7+\lambda_9+\lambda_{10}=1$: these are the numbers of spin-up and
spin-down electrons of the $M=1$ state, whose natural orbitals are spin-pure. Hence $D=1-\lambda_1$
identically, the Pauli constraint $\lambda_1\le1$ is saturated equally, and in the atom's own
natural orbitals $\Dhat_B=1-\nhat_1$ on states with three spin-up electrons, so that
$W_B(\Psi)=1-\lambda_1=f$ exactly.

The state still illustrates Proposition~\ref{prop:local}. The occupation numbers form two
clusters, $\{1,\dots,4\}$ and $\{5,\dots,10\}$, each of spread about $7\cdot10^{-4}$ and separated
by $\lambda_4-\lambda_5\approx1$. Take $\rho=10^{-2}$. On the first cluster
$\kappa=(-2,-1,-1,0)$ is non-decreasing, so no interval there is admissible; on the second,
$\kappa$ drops only between positions $7$ and $8$, and the worst interval is $\{5,\dots,10\}$,
with $\beta_D(\Psi,10^{-2})=\tfrac1{11}$ (Appendix~\ref{app:exponent},
Remark~\ref{rem:computing}). Proposition~\ref{prop:local} then gives
$\dist(\Psi,R_D)\le\eta=\arcsin\sqrt{11f}=7.0\cdot10^{-4}<\rho$, hence a basis with
$\|E\|\le\eta$ and $W_B(\Psi)\le\eta^2=4.95\cdot10^{-7}$, within a factor $11$ of the weight
$f$ that the natural orbitals attain here.
\end{example}

All of this holds in the spin-adapted setting, with $\gamma_l^\Psi$ for $\gamma^\Psi$ and
Corollary~\ref{cor:spin} for Theorem~\ref{thm:main}. A spatial occupation number operator
has eigenvalues $0,1,2$, so the bound becomes $\|E\|\le2\eta$, and the displayed lower
bound for $\beta_D$ loses a factor $2^d$.

\section{Proofs}
\label{sec:proof}
The exact pinning results in Sections~\ref{labelmain} and \ref{labelproof} are self-contained, given
Theorem~\ref{thm:open} (that a small change of the occupation numbers within
$\Delta_{d,N}$ can be made by a small change of the state), which is imported from \cite{Knop02}. Appendices~\ref{app:dim} and~\ref{app:spin} supply the two facts used in
Section~\ref{labelproof}: they determine exactly when the polytopes are full-dimensional, which allows for an elegant formulation of the main theorems, but they are not essential to the proof idea. Section~\ref{label3} rests on Theorem~\ref{thm:open} and 
Theorem~\ref{thm:exponent} (that a quasipinned state is within $\arcsin\sqrt{f/\beta_D}$ of a pinned one),
proved in Appendix~\ref{app:exponent}. Appendix~\ref{app:general} supplies the moment-map
background behind Theorem~\ref{thm:open} and proves the general version of
Theorem~\ref{thm:main}, from which the spin-adapted result Corollary~\ref{cor:spin} follows.
\subsection{Proof of Theorem~\ref{thm:main}}
\label{labelmain}
I argue by approximation. Lemma~\ref{lem:nondeg} settles the states whose occupation numbers are non-degenerate, Lemma~\ref{lem:closed} shows that the conclusion survives limits, and Lemma~\ref{lem:dense} shows that under the hypothesis of the theorem every pinned state is such a limit. Together they give the theorem. Everything is elementary except the openness theorem behind Lemma~\ref{lem:dense}.

\begin{lemma}[Non-degenerate case]\label{lem:nondeg}
Let $D$ be valid and $\Psi$ pinned to $D$ with non-degenerate $\lambda^\Psi$. Then $\Dhat_B\Psi=0$ for the ordered natural orbital basis $B$ of $\Psi$.
\end{lemma}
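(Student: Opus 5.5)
Since $\lambda^\Psi$ lies in $F_D$, the function $g(\Phi):=D(\lambda^\Phi)\ge0$ attains its minimum value $0$ over normalized $\Phi$ at $\Phi=\Psi$. The plan is to use first-order perturbation theory along one-parameter families of states and read off $\Dhat_B\Psi=0$ from the stationarity of $g$ at $\Psi$.

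Non-degeneracy is what makes this work: it guarantees that $\Phi\mapsto\lambda^\Phi$ is differentiable near $\Psi$. Each eigenvalue $\lambda_j$ of $\gamma^\Phi$ stays simple nearby and is therefore a smooth function of $\gamma^\Phi$. Its derivative is given by the Hellmann--Feynman formula,
\[
\delta\lambda_j=\langle u_j,\delta\gamma\,u_j\rangle .
\]
Combined with \eqref{eq:Tr}, this gives for any smooth curve $\Phi_t$ with $\Phi_0=\Psi$
\[
\frac{d}{dt}\Big|_{t=0}D(\lambda^{\Phi_t})=\sum_j\kappa_j\frac{d}{dt}\langle\Phi_t,\nhat_j\Phi_t\rangle=2\,\mathrm{Re}\,\langle\dot\Phi,\Gamma(K_B)\Psi\rangle .
\]
Here $B$ is the natural orbital basis of $\Psi$, held fixed. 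Because $g$ has a minimum at $\Psi$, this derivative must vanish for every admissible tangent vector $\dot\Phi$.

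The admissible tangent vectors are those of the unit sphere, i.e. vectors with $\mathrm{Re}\langle\dot\Phi,\Psi\rangle=0$. Taking $\dot\Phi=\chi$ and $\dot\Phi=i\chi$ for arbitrary $\chi\perp\Psi$ yields $\langle\chi,\Gamma(K_B)\Psi\rangle=0$ for all such $\chi$. Hence $\Gamma(K_B)\Psi=c\Psi$ for some scalar $c$. To pin down $c$, I take the expectation value in $\Psi$ and use \eqref{eq:Tr} and \eqref{eq:average}:
\[
c=\langle\Psi,\Gamma(K_B)\Psi\rangle=\sum_j\kappa_j\lambda_j=D(\lambda^\Psi)-\kappa_0=-\kappa_0 .
\]
It follows that $\Dhat_B\Psi=(\kappa_0+c)\Psi=0$.

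An alternative route avoids differentiating the sphere constraint. Since $D(\lambda^\Psi)=\sum_I|c_I|^2D(\ind_I)$ by \eqref{eq:average}, one could instead show that each $D(\ind_I)$ with $c_I\ne0$ is zero. I prefer the variational argument above because it is direct.

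The main obstacle is justifying the derivative formula. One must check that the eigenvalues stay ordered and simple along the curve, so that $\lambda^{\Phi_t}$ is the same smooth branch $(\lambda_1(t),\dots,\lambda_d(t))$ and $D(\lambda^{\Phi_t})$ is differentiable at $t=0$. This holds for small $t$ by continuity of eigenvalues, since the gaps of $\lambda^\Psi$ are positive. The rest is routine. Notably, this is exactly the step that breaks down at degenerate points: there the eigenvalue branches can cross, and the derivative depends on the direction $\dot\Phi$, as Example~\ref{ex:C} illustrates.
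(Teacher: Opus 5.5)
Your proof is correct and is essentially the paper's own argument. The paper cites first-order perturbation theory and carries it out as Step~1 of Theorem~\ref{thm:general}: smoothness of $\pi$ at a regular point, with $d\pi_\alpha$ the restriction map, is your Hellmann--Feynman formula, and $d\mu^\xi_m=\omega(\xi_M,\cdot\,)=0\Rightarrow\xi_M(m)=0$ is your passage from the vanishing derivative along $\chi$ and $i\chi$ to $\Gamma(K_B)\Psi\in\C\Psi$, with the scalar then fixed as $\tr[K_B\gamma^\Psi]=-\kappa_0$ exactly as you do. (Your aside about an ``alternative route'' merely restates the goal, since the $D(\ind_I)$ can be negative and \eqref{eq:average} alone gives nothing, but you do not rely on it.)
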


This is first-order perturbation theory \cite{Klyachko09,Walter14,Maciazek20II,Liebert25};
in the generality of Appendix~\ref{app:general} it is \cite[Lemma~2.13]{Walter14}, proved
there as Step~1 of Theorem~\ref{thm:general}. For the proof of Theorem \ref{thm:main}, we will need the following two facts.

\begin{lemma}[Closedness]\label{lem:closed}
Let $\mathcal{D}$ be a set of affine functions on $\R^d$, and $R_\mathcal{D}$ the set of normalized $\Psi$ having an ordered natural orbital basis $B$ with $\Dhat_B\Psi=0$ for all $D\in\mathcal{D}$. Then $R_\mathcal{D}$ is closed in the unit sphere.
\end{lemma}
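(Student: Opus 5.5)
The plan is a sequential compactness argument in the unit sphere of $\wedge^N\C^d$ combined with compactness of the unitary group $U(d)$. Let $\Psi_n\in R_\mathcal{D}$ with $\Psi_n\to\Psi$ in norm. For each $n$, pick an ordered natural orbital basis $B_n=(u^n_1,\dots,u^n_d)$ of $\Psi_n$ with $\Dhat_{B_n}\Psi_n=0$ for all $D\in\mathcal{D}$. Encode $B_n$ as a unitary $U_n\in U(d)$ whose $j$th column is $u^n_j$. Since $U(d)$ is compact, we pass to a subsequence with $U_n\to U$, and let $B=(u_1,\dots,u_d)$ be the columns of $U$. This is an orthonormal basis, and the task is to show that $B$ witnesses $\Psi\in R_\mathcal{D}$.

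The first step is to show that $B$ is an ordered natural orbital basis of $\Psi$. The map $\Psi\mapsto\gamma^\Psi$ is continuous, being quadratic in $\Psi$ by \eqref{eq:gamma}, so $\gamma^{\Psi_n}\to\gamma^\Psi$. The decreasingly ordered eigenvalues depend continuously on a Hermitian matrix (Weyl's inequality), so $\lambda^{\Psi_n}\to\lambda^\Psi$. Passing to the limit in $\gamma^{\Psi_n}u^n_j=\lambda^{\Psi_n}_ju^n_j$ then gives $\gamma^\Psi u_j=\lambda^\Psi_ju_j$ for every $j$. Note that this holds even where $\lambda^\Psi$ is degenerate and the $\lambda^{\Psi_n}$ are not; this is exactly why the limit basis can be a ``good'' basis of a degenerate state.

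The second step passes the selection rule to the limit. Write $\Dhat_{B_n}=\kappa_0+\Gamma(K_{B_n})$ with $K_{B_n}=U_n\,\mathrm{diag}(\kappa)\,U_n^{*}$. Then $K_{B_n}\to K_B$, and since $\Gamma$ is linear on the finite-dimensional space of $d\times d$ matrices, $\Dhat_{B_n}\to\Dhat_B$ in operator norm. Consequently
\[
\|\Dhat_B\Psi\|\le\|\Dhat_B-\Dhat_{B_n}\|+\|\Dhat_{B_n}\|\,\|\Psi-\Psi_n\|\to0 ,
\]
using that $\|\Dhat_{B_n}\|$ is bounded uniformly in $n$, for instance by $|\kappa_0|+N\max_j|\kappa_j|$.

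The point needing care is that $\mathcal{D}$ may be infinite, so we must extract a single subsequence. This is fine because the subsequence was chosen from $U_n$ alone, independently of $D$; for each fixed $D\in\mathcal{D}$ the estimate above then applies along that same subsequence. Hence $\Dhat_B\Psi=0$ for all $D\in\mathcal{D}$, so $\Psi\in R_\mathcal{D}$, and $R_\mathcal{D}$ is closed. I expect no real obstacle here; the only subtle step is the continuity of the ordered eigenvalues, which is standard.
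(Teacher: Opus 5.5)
Your proof is correct and follows the paper's argument essentially step for step. You extract a convergent subsequence of bases by compactness, use continuity of $\gamma^\Psi$ and of the ordered eigenvalues to show that the limit is an ordered natural orbital basis, and pass $\Dhat_{B_n}\Psi_n=0$ to the limit via norm continuity of $B\mapsto\Dhat_B$. Your explicit remark that a single subsequence serves every $D\in\mathcal{D}$ is implicit in the paper's version.
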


Here, the ordering of the basis matters: a limit of ordered natural orbital bases is again one.

\begin{proof}
Let $\Psi_n\in R_\mathcal{D}$ converge to $\Psi$, with bases $B_n=(u^{(n)}_j)$ as in the definition. Orthonormal bases form a compact set, so along a subsequence $B_n\to B=(u_j)$. By \eqref{eq:gamma}, $\gamma^{\Psi_n}\to\gamma^\Psi$, and ordered eigenvalues are continuous, so $\lambda^{\Psi_n}\to\lambda^\Psi$; passing to the limit in $\gamma^{\Psi_n}u^{(n)}_j=\lambda^{\Psi_n}_ju^{(n)}_j$ shows $B$ is an ordered natural orbital basis of $\Psi$. Finally $\Dhat_{B_n}\to\Dhat_B$ in norm, since $\Dhat_B$ depends continuously on $B$, so $\Dhat_B\Psi=\lim\Dhat_{B_n}\Psi_n=0$ for every $D\in\mathcal{D}$.
\end{proof}

\begin{theorem}[Openness \cite{Knop02,Sjamaar98}]\label{thm:open}
The map $\Psi\mapsto\lambda^\Psi$ from the unit sphere of $\wedge^N\C^d$ onto $\Delta_{d,N}$ is open: images of open sets are open in $\Delta_{d,N}$. Equivalently, for every $\Psi$ and every neighbourhood $U$ of $\Psi$, the set $\{\lambda^{\Psi'}\,|\,\Psi'\in U\}$ is a neighbourhood of $\lambda^\Psi$ in $\Delta_{d,N}$.
\end{theorem}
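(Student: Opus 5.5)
The plan is to treat Theorem~\ref{thm:open} as an instance of the openness theorem for invariant moment maps of compact Hamiltonian actions, and to obtain it by combining local convexity with connectedness of fibres.

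\textbf{Reduction to projective space.} First I pass from the unit sphere to projective space. The map $\Psi\mapsto\lambda^\Psi$ is invariant under phases, so it factors through $S\to\mathbb{P}(\wedge^N\C^d)$. This quotient map is open, and the image of an open set $U$ equals the image of its saturation $\pi(U)$. It therefore suffices to prove openness for the induced map on $M:=\mathbb{P}(\wedge^N\C^d)$. This $M$ is a compact connected symplectic manifold carrying the action $[\Psi]\mapsto[u^{\wedge N}\Psi]$ of $U(d)$. By \eqref{eq:Tr}, $[\Psi]\mapsto\gamma^\Psi$ is a moment map for this action, after identifying $\mathfrak{u}(d)^*$ with Hermitian matrices. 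The map $\lambda^\Psi$ is then the composition with the spectrum map to the Weyl chamber $\{\lambda_1\ge\dots\ge\lambda_d\}$, that is, the invariant moment map $\Phi_+$.

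\textbf{Local convexity.} Next I invoke local convexity (Sjamaar, via the Marle--Guillemin--Sternberg local normal form). For every $m\in M$ there is a convex polyhedral cone $C_m$ with apex $\Phi_+(m)$ such that every sufficiently small invariant neighbourhood $V$ of the orbit of $m$ satisfies
\[
\Phi_+(V)=C_m\cap O,
\]
for some neighbourhood $O$ of $\Phi_+(m)$. Moreover, the same identity holds for a basis of such neighbourhoods. Openness at $m$ is thus reduced to showing that $C_m$ coincides near its apex with the global polytope $\Delta_{d,N}$. Equivalently, $C_m$ must equal the tangent cone $T_{\lambda}\Delta_{d,N}$ at $\lambda=\Phi_+(m)$. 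The inclusion $C_m\subseteq T_\lambda\Delta_{d,N}$ is automatic.

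\textbf{Equality of cones.} The reverse inclusion is the main obstacle, and I would handle it with two facts.
\begin{enumerate}
\item Compactness of $M$ gives the following. For $O$ small, $\Phi_+^{-1}(O)$ lies in any prescribed neighbourhood of the fibre $\Phi_+^{-1}(\lambda)$. Hence $\Delta_{d,N}\cap O$ is the union of the local images $C_{m'}\cap O$ over $m'$ in the fibre, and $T_\lambda\Delta_{d,N}=\bigcup_{m'}C_{m'}$.
\item The fibre $\Phi_+^{-1}(\lambda)$ is connected. This is the Kirwan/Lerman--Meinrenken--Tolman--Woodward connectedness theorem for compact $M$.
\end{enumerate}
Local convexity also shows that $m'\mapsto C_{m'}$ is locally constant on the fibre. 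For $m''$ in the fibre near $m'$, the neighbourhood of $m'$ contains one of $m''$. This gives $C_{m''}\subseteq C_{m'}$, and since both are cones at the same apex with $\Phi_+$ locally onto both, they are equal. Connectedness then forces all the cones $C_{m'}$ to coincide. Hence $C_m=T_\lambda\Delta_{d,N}$, and $\Phi_+(V)$ is a neighbourhood of $\lambda$ in $\Delta_{d,N}$.

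\textbf{Equivalent formulation.} The ``equivalently'' clause is formal. Images of open sets are open if and only if every neighbourhood of each point has an image that is a neighbourhood of the image point. The delicate step is the local constancy of the cones along the fibre. There I would follow Knop's argument, or simply cite \cite{Knop02} for the general statement.
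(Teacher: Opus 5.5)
Your reduction to $\mathbb{P}(\wedge^N\C^d)$ via the open quotient map is exactly what the paper does, and so is your fallback of citing \cite{Knop02}. The paper's whole proof is that compact connected Hamiltonian manifolds are convex in Knop's sense (Thm.~4.2(i)), and that convexity includes openness of the invariant moment map onto its image (Thm.~2.2(iii)). The self-contained route you sketch in between, however, breaks at the step you flag yourself. From ``a good neighbourhood $V'$ of $m'$ contains a good neighbourhood of $m''$'' you correctly get $C_{m''}\subseteq C_{m'}$. The reverse inclusion does not follow from both being cones at $\lambda$ onto which $\Phi_+$ maps locally. Your local convexity statement only controls images of neighbourhoods of the \emph{centre} $m'$, and the radius on which it holds at $m''$ may shrink as $m''\to m'$, so you cannot swap the roles. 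Nothing you state prevents small neighbourhoods of $m''$ from filling only a proper subcone of $C_{m'}$. Upper semicontinuity of $m\mapsto C_m$ plus connectedness of the fibre does not force constancy. Nor does $T_\lambda\Delta_{d,N}=\bigcup_{m'}C_{m'}$ close the gap, since a convex cone can be a finite union of proper convex subcones (a half-plane is the union of two quadrants).

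What is missing is the stronger form of local convexity: on a good neighbourhood $V'$ of $m'$, the restriction $\Phi_+|_{V'}$ is open onto $\Phi_+(V')$ at \emph{every} point of $V'$, not only at $m'$. This is the form in which local convexity data enter the local--global principle of Condevaux--Dazord--Molino and Hilgert--Neeb--Plank. Applied at $m''\in V'\cap\Phi_+^{-1}(\lambda)$, it gives $C_{m'}\subseteq C_{m''}$, hence local constancy of the cones along the fibre. Your fibre-connectedness argument and your first fact then yield $C_m=T_\lambda\Delta_{d,N}$, as you intended. This local openness is itself the non-trivial content. With it supplied, your sketch is a legitimate variant of the local--global principle, importing global fibre connectedness (Kirwan, Lerman--Meinrenken--Tolman--Woodward) in place of the usual metric argument. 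Without it, the sketch does not stand on its own, and the proof rests, as in the paper, on the citation.
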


Openness is the converse of continuity: not only do nearby states have nearby occupation
numbers, but every nearby point of $\Delta_{d,N}$ is realized by a nearby state. This is not
elementary---the matrices $\gamma^\Psi$ are a proper subset of the positive matrices of trace
$N$, and a small change of $\lambda$ might in principle require a large change of $\Psi$.
That it does not is a theorem about moment maps: $\mathbb{P}(\wedge^N\C^d)$ is a compact
connected Hamiltonian $U(d)$-manifold, hence convex in Knop's sense
\cite[Thm.~4.2(i)]{Knop02}, and convexity in that sense implies openness
\cite[Thm.~2.2(iii)]{Knop02}. Appendix~\ref{app:general} makes the translation.

We can now prove the following lemma, and with it the theorem.

\begin{lemma}[Density]\label{lem:dense}
Let $D$ be valid with $F_D$ containing a non-degenerate point. Then every $\Psi$ pinned to $D$ is a limit of states pinned to $D$ with non-degenerate occupation numbers.
\end{lemma}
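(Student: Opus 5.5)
The plan is to combine convexity of the face $F_D$ with the openness theorem, Theorem~\ref{thm:open}. Let $\Psi$ be pinned to $D$, so $\lambda:=\lambda^\Psi\in F_D$, and let $\lambda^*\in F_D$ be non-degenerate, as the hypothesis provides. Since $F_D$ is a face of a convex polytope, it is convex, so the segment $\lambda(t):=(1-t)\lambda+t\lambda^*$, $t\in[0,1]$, lies in $F_D\subseteq\Delta_{d,N}$. The first step is to check that $\lambda(t)$ is non-degenerate for every $t\in(0,1]$. Both endpoints are in $\Delta_{d,N}$ and hence ordered decreasingly, so for each $j$,
\[
\lambda_j(t)-\lambda_{j+1}(t)=(1-t)(\lambda_j-\lambda_{j+1})+t(\lambda^*_j-\lambda^*_{j+1}).
\]
The first term is $\ge0$ and the second is $>0$ when $t>0$. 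This is exactly the path of Figure~\ref{fig:bd}.

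The second step applies openness. Fix $\epsilon>0$ and let $U$ be the open $\epsilon$-ball around $\Psi$ in the unit sphere. By Theorem~\ref{thm:open}, the set $\{\lambda^{\Psi'}\,|\,\Psi'\in U\}$ is a neighbourhood of $\lambda$ in $\Delta_{d,N}$, so it contains $\lambda(t)$ for all sufficiently small $t>0$. Choosing such a $t$ gives a state $\Psi_\epsilon\in U$ with $\lambda^{\Psi_\epsilon}=\lambda(t)$. This state is pinned to $D$, because $\lambda(t)\in F_D$, and its occupation numbers are non-degenerate. Letting $\epsilon\to0$ yields the required sequence $\Psi_\epsilon\to\Psi$.

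There is no real obstacle left once Theorem~\ref{thm:open} is granted, since that theorem carries all the non-elementary content. The only point needing care is that openness is relative to $\Delta_{d,N}$, not to $\R^d$. That is why the approximating occupation vectors must be taken inside $\Delta_{d,N}$, and indeed inside $F_D$; the segment construction guarantees both.

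For completeness, this lemma together with Lemmas~\ref{lem:nondeg} and~\ref{lem:closed} gives Theorem~\ref{thm:main}, including the `moreover' part. Any valid $D'$ vanishing on $F_D$ also vanishes at each $\lambda(t)$, so Lemma~\ref{lem:nondeg} applies to every such $D'$ at once in the unique basis of the non-degenerate state $\Psi_\epsilon$. Lemma~\ref{lem:closed} with $\mathcal{D}$ the set of all such $D'$ then passes this to the limit $\Psi$.
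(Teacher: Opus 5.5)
Your proposal is correct and follows the paper's proof essentially step for step. Like the paper, you move from $\lambda^\Psi$ along the segment in $F_D$ toward a non-degenerate $\lambda^*$, note that it is non-degenerate for $t>0$, and use the openness of Theorem~\ref{thm:open} to realize $\lambda(t)$ for small $t$ by states arbitrarily close to $\Psi$. Your closing paragraph on how this yields Theorem~\ref{thm:main} also matches the paper's argument.
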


\begin{proof}
Let $\lambda:=\lambda^\Psi$ and let $\lambda^*\in F_D$ be non-degenerate. The segment $\lambda(t):=\lambda+t(\lambda^*-\lambda)$, $t\in[0,1]$, lies in $F_D$ by convexity and is non-degenerate for $t>0$, since $\lambda_j(t)-\lambda_{j+1}(t)=(1-t)(\lambda_j-\lambda_{j+1})+t(\lambda^*_j-\lambda^*_{j+1})>0$. By Theorem~\ref{thm:open}, the occupation vectors of states within $1/n$ of $\Psi$ fill a neighbourhood of $\lambda$, which contains $\lambda(t_n)$ for some $t_n>0$. Pick $\Psi_n$ within $1/n$ of $\Psi$ with $\lambda^{\Psi_n}=\lambda(t_n)$.
\end{proof}

\begin{proof}[Proof of Theorem~\ref{thm:main}]
Let $\mathcal{D}$ be the set of valid $D'$ vanishing on $F_D$. A state $\Psi'$ pinned to $D$ with non-degenerate $\lambda^{\Psi'}$ has $\lambda^{\Psi'}\in F_D$, hence is pinned to every $D'\in\mathcal{D}$, and Lemma~\ref{lem:nondeg} applied to each $D'$, with the same unique basis $B$, gives $\Psi'\in R_\mathcal{D}$. A general $\Psi$ pinned to $D$ is a limit of such $\Psi'$ by Lemma~\ref{lem:dense}, and $R_\mathcal{D}$ is closed by Lemma~\ref{lem:closed}. So $\Psi\in R_\mathcal{D}$.
\end{proof}

Corollary~\ref{cor:spin} has the same proof with spin-adapted
ingredients, and Appendix~\ref{app:general} does it once for all moment maps. The hypothesis entered once, in Lemma~\ref{lem:dense}, where it supplies the non-degenerate
$\lambda^*$ to move towards. On a face all of whose points are degenerate, an ordering face
for instance, there is nothing to approximate with. Example~\ref{ex:B} shows that this is not
an artefact of the method.

\subsection{Proof of Theorem~\ref{thm:practical}}
\label{labelproof}
Two facts about the polytopes are needed; both are proved in the appendices. Write $W_j:=\{\lambda\,|\,\lambda_j=\lambda_{j+1}\}$ for the $j$th \emph{wall}; a point is non-degenerate exactly when it lies on no wall.
\begin{itemize}
\item[(a)] If $\Delta_{d,N}$ or $\Delta^S_{d,N}$ is full-dimensional, a facet containing no non-degenerate point is an ordering facet (Lemma~\ref{lem:wall}).
\item[(b)] $\Delta_{d,N}$ is full-dimensional unless $N\le2$, $N\ge d-2$ or $(d,N)=(6,3)$ (Proposition~\ref{prop:dim}). $\Delta^S_{d,N}$ is full-dimensional unless $N\in\{2S,2d-2S\}$ and $\Delta_{d,2S}$ is not, in which case it equals $\Delta_{d,2S}$ or its particle--hole image (Corollary~\ref{cor:spinfull}).
\end{itemize}

\begin{proof}[Proof of Theorem~\ref{thm:practical}]
Let $D$ define a facet that is not an ordering facet. If $F_D$ contains a non-degenerate
point, Theorem~\ref{thm:main} or Corollary~\ref{cor:spin} applies. Otherwise the polytope is
not full-dimensional, since (a) would then make $D$ an ordering facet; so by (b) it is
$\Delta_{6,3}$, or $\Delta_{d,m}$ or its particle--hole image with $m\le2$ or $m\ge d-2$,
where $m=N$ in the fermionic case and $m=2S$ in the spin-adapted one.
Propositions~\ref{prop:ph} and~\ref{prop:polarized} realize these identifications by maps on
states, which carry ordered natural orbital bases and configurations to ordered natural orbital
bases and configurations, and preserve the equation $\Dhat_B\Psi=0$. It is therefore enough to treat $\Delta_{6,3}$ and $m\le2$. The
case $m\ge d-2$ is dual to the latter.

For $\Delta_{6,3}$, three of the four facets are ordering facets, and the zero face of the
fourth contains the non-degenerate point $(0.85,0.75,0.6,0.4,0.25,0.15)$, so
Theorem~\ref{thm:main} applies.

For $m\le1$, the polytope is a single point and has no facets. For $m=2$, the Slater decomposition $\Psi=\sum_ic_i\,u_{2i-1}\wedge u_{2i}$
\cite{Youla61,Yang62}, ordered so that $|c_1|\ge\dots\ge|c_k|$, gives
$\lambda=(t_1,t_1,\dots,t_k,t_k)$ with $t_i=|c_i|^2$ and makes $B=(u_1,\dots,u_d)$ an ordered
natural orbital basis. The polytope is a simplex in $t$, all of whose facets are ordering
facets except $t_k=0$ for even $d$. Being a facet of a simplex, it is cut out by $\lambda_d$
alone, so $D=b\,\lambda_d$ with $b>0$ modulo $\sum_j\lambda_j=2$ and
$\lambda_{2j-1}=\lambda_{2j}$. Pinning gives $c_k=0$, and every surviving configuration is a
pair $u_{2i-1}\wedge u_{2i}$, at which $\lambda_d=0$ and both relations hold. So
$\Dhat_B\Psi=0$.
\end{proof}

\subsection{Proof of Theorem~\ref{thm:quasi}}
\label{label3}
Everything here is elementary except the distance bound of Theorem~\ref{thm:exponent}, which
is proved in Appendix~\ref{app:exponent}. The rest is the observation that a state close to a
pinned one inherits its basis. 
\begin{proof}
Let $R_D$ be the set of normalized states admitting an ordered natural orbital basis $B$ with
$\Dhat_B\Psi=0$.

\emph{Step 1: $R_D$ is the set of pinned states.} In an ordered natural orbital basis
$\langle\Psi,\nhat_j\Psi\rangle=\lambda^\Psi_j$, so
$\langle\Psi,\Dhat_B\Psi\rangle=D(\lambda^\Psi)=f(\Psi)$; hence $\Dhat_B\Psi=0$ forces
$f(\Psi)=0$, and $\Psi$ is pinned. Conversely every pinned state lies in $R_D$ by
Theorem~\ref{thm:main}. So $R_D=\{f=0\}$, which is closed.

\emph{Step 2: a nearby pinned state.} By Theorem~\ref{thm:exponent},
$\dist(\Psi,R_D)\le\eta:=\arcsin\sqrt{f(\Psi)/\beta_D}$ for every normalized $\Psi$ with $f(\Psi)<\beta_D$. As $R_D$ is closed the distance is attained, say at $\Phi$; let $B$ be an
ordered natural orbital basis of $\Phi$ with $\Dhat_B\Phi=0$. Then, $\|\Psi-\Phi\|\le\eta$.

\emph{Step 3: the two bounds.} Take $\Lambda:=\gamma^\Phi$, diagonal in $B$ with non-increasing
entries, and $E:=\gamma^\Psi-\gamma^\Phi$. For a unit vector $x$, $\nhat_x:=\Gamma(\ket{x}\bra{x})$
is the occupation number operator of the orbital $x$, an orthogonal projection. The
expectations of a projection in two pure states differ by at most their trace distance, which
for pure states is at most their Euclidean distance, so by \eqref{eq:Tr}
\[
\big|\langle x,Ex\rangle\big|
=\big|\langle\Psi,\nhat_x\Psi\rangle-\langle\Phi,\nhat_x\Phi\rangle\big|\le\|\Psi-\Phi\|\le\eta,
\]
and $\|E\|=\sup_x|\langle x,Ex\rangle|\le\eta$. For the leakage, let $\Pi_B$ project onto the
span of the configurations with $D(\ind_I)\ne0$, so that $W_B(\Psi)=\|\Pi_B\Psi\|^2$. The
selection rule for $\Phi$ says exactly $\Pi_B\Phi=0$, whence
$W_B(\Psi)=\|\Pi_B(\Psi-\Phi)\|^2\le\eta^2$.
\end{proof}

\section{Conclusion}\label{sec:conclusion}

Pinning implies a selection rule also when the occupation numbers are degenerate. For every generalized Pauli constraint, the rule holds in a suitable ordered basis of natural orbitals, with no further hypothesis (Theorem~\ref{thm:practical}). For `ordinary' Pauli constraints, it holds in every such basis, for elementary reasons (see the fifth remark after Theorem~\ref{thm:main}), and it can fail only for the ordering constraints, where it actually sometimes does (Example~\ref{ex:B}). Theorem~\ref{thm:practical} thus removes Assumption~13 of \cite{Maciazek20II} from the degenerate selection rule of \cite[Theorem~10]{Schilling20I}, and Theorem~\ref{thm:main} lifts the restriction to a single saturated constraint whenever the constraints saturated do not jointly force a degeneracy. When they do, one basis need not exist (Example~\ref{ex:D}), which refutes \cite[Conjecture~9]{Schilling20I} as stated. 

The proofs use little about fermions. The one essential input is that the map from states to ordered occupation numbers is open onto its polytope, a standard property of moment maps not previously applied to this problem. The proofs therefore apply to every one-body quantum marginal problem, and Appendix~\ref{app:general} works this out for the spin-adapted constraints, for the joint constraints of Altunbulak and Klyachko on spin
and orbital occupation numbers, and for distinguishable particles.

For quasipinned states, the rule degrades gracefully, but not in the state's own natural orbitals, where Example~\ref{ex:C} rules out any bound. What holds is that a quasipinned state is close to a pinned one and inherits its active space: in a natural orbital basis of that pinned state, $\gamma^\Psi$ is within $O(\sqrt f)$ of a non-increasing diagonal matrix and the forbidden weight is $O(f)$ (Theorem~\ref{thm:quasi}). 
For the perturbed Borland--Dennis state of Example~\ref{ex:Cnum}, whose own natural orbitals put half of its weight on forbidden configurations, Theorem~\ref{thm:quasi} leaves at most $5\cdot10^{-3}$ outside the active space in a nearby basis. For the quasipinned beryllium state of \cite{SchillingAtoms18}, spin symmetry makes the constraint equal to $1-\lambda_1$, and the bound is within a factor $11$ of the weight the natural orbitals attain (Example~\ref{ex:Be}). Whether the exact constant of the method, Remark~1 of
Appendix~\ref{app:exponent}, is the smallest positive critical value of $D(\lambda^\Psi)$ in a
suitable degenerate sense is left open.

\section*{Author contribution statement}

I used Claude Opus 5 and Fable 5.1 (Anthropic) to prepare this manuscript from handwritten notes containing the proofs of Theorems \ref{thm:practical} and \ref{thm:main}. The model drafted the text, and under my guidance it found a way of improving the statement of the theorems by demonstrating the results of Appendices \ref{app:dim} and \ref{app:spin}, applied my proof method to the spin-adapted polytopes, produced the counterexample of Example \ref{ex:D}, and extended to quasipinning by proving the results of Appendix \ref{app:exponent}. I reviewed and edited as necessary. I checked all statements and proofs and take full responsibility for the content of this article.

\section*{Acknowledgements}

I thank Michael Walter for answering a question by email. This paper was written while I was visiting the Isaac Newton Institute for Mathematical Sciences, Cambridge, during the programme \emph{Mathematics of Many-Body Entanglement}, and I thank the Institute for its support and hospitality. This work was partially supported by a grant from the Simons Foundation, through a Simons Foundation Fellowship of the Institute, and by EPSRC grant EP/Z000580/1. I was partially supported by the European Research Council (ERC) under the European Union's Horizon 2020 research and innovation programme (ERC CoG UniCoSM, Grant Agreement No.~724939). I am a member of the \emph{Gruppo Nazionale per la Fisica Matematica} (GNFM) of the \emph{Istituto Nazionale di Alta Matematica Francesco Severi} (INdAM).

\appendix

\section{Dimensions of the polytopes}\label{app:dim}
This appendix proves fact (a) in the proof of Theorem~\ref{thm:practical}, and fact (b) for
$\Delta_{d,N}$. The spin-adapted half of (b) is Appendix~\ref{app:spin}. Throughout, we say that $\Delta_{d,N}$ is \emph{full-dimensional} if it has non-empty interior relative to $\{\lambda\in\R^d\,|\,\sum_j\lambda_j=N\}$, that is, dimension $d-1$. 

\subsection*{The wall lemma}

\begin{lemma}\label{lem:wall}
Let $d\ge2$ and suppose $\Delta_{d,N}$, respectively $\Delta^S_{d,N}$, is full-dimensional.
Let $D$ be valid with $F_D$ a facet containing no non-degenerate point. Then
\begin{equation}\label{eq:trivialfacet}
D(\lambda)=c\,(\lambda_j-\lambda_{j+1})+a\Big(\sum_i\lambda_i-N\Big)
\end{equation}
for some $j$, some $c>0$ and some $a\in\R$; consequently $\Dhat_B=c\,(\nhat_j-\nhat_{j+1})$ for
every ordered natural orbital basis $B$.
\end{lemma}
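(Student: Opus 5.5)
The plan is to show that a facet contained in the union of the walls must lie in a single wall, and then to identify its defining inequality with that wall's.

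\emph{Step 1: the facet lies in one wall.} Let $A$ be the affine hyperplane $\{\sum_i\lambda_i=N\}$ and $H$ the affine hull of $F_D$. Full-dimensionality gives $\dim\Delta=d-1$, so $H$ is a hyperplane of $A$ of dimension $d-2$. By hypothesis $F_D\subseteq\bigcup_j W_j$. Each $W_j\cap H$ is an affine subspace of $H$. Since $F_D$ is convex with nonempty relative interior in $H$, it is not a finite union of proper affine subspaces of $H$. (Baire, or a measure argument: each proper subspace has measure zero in $H$.) Hence $H\subseteq W_j$ for some $j$. Because $W_j\cap A$ is itself a hyperplane of $A$ (here $d\ge2$ is used), we get $H=W_j\cap A$.

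\emph{Step 2: identify $D$.} The functions $D$ and $\lambda_j-\lambda_{j+1}$ both vanish on $H$, which has codimension one in $A$. So their restrictions to $A$ are proportional: $D=c(\lambda_j-\lambda_{j+1})+a(\sum_i\lambda_i-N)$ on $\R^d$, with $c\ne0$. The sign of $c$ comes from validity together with full-dimensionality. Since $\Delta$ has interior in $A$, there is a point $\mu\in\Delta$ not on $H$. Every point of $\Delta$ has $\mu_j\ge\mu_{j+1}$, so $\mu_j>\mu_{j+1}$, while $D(\mu)\ge0$. If $D(\mu)=0$, then $\mu\in F_D\subseteq H$, a contradiction, so $D(\mu)>0$ and hence $c>0$.

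\emph{Step 3: the operator statement.} Replacing $\lambda_i$ by $\nhat_i$ gives $\Dhat_B=c(\nhat_j-\nhat_{j+1})+a(\sum_i\nhat_i-N)$. The operator $\sum_i\nhat_i$ is the particle number, equal to $N$ on $\wedge^N\C^d$. In the spin-adapted case, $\sum_i\nhat_i$ summed over both spins is again $N$ on $\mathcal{H}^{S,M}_N$. So the second term vanishes identically, for every basis $B$.

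The spin-adapted case goes through verbatim, since only convexity, the ordering $\lambda_1\ge\dots\ge\lambda_d$ of points in the polytope, the trace normalization and full-dimensionality were used. The main obstacle is Step 1, the passage from "every point lies on some wall" to "the whole facet lies on one wall". I would do it via relative interiors: a convex set with nonempty interior in $H$ cannot be covered by finitely many proper affine subspaces of $H$.
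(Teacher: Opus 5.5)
Your proof is correct, but it takes a different route from the paper. The paper does not prove the first part itself. It cites \cite[Lemma~3.2]{Walter14}, which covers moment polytopes of arbitrary representations of maximal dimension, and checks that $\Delta^S_{d,N}$ falls under it (group $SU(d)$, $\dim i\mathfrak{t}^*=d-1$). Only the operator statement is argued, and that argument is exactly your Step~3: the trace is fixed to $N$.

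You instead give a direct, elementary argument:
\begin{enumerate}
\item A convex set with nonempty relative interior in $H$ cannot be covered by finitely many proper affine subspaces $W_j\cap H$, so $H=W_j\cap A$ for a single $j$.
\item Two affine functions vanishing on a common hyperplane of $A$ are proportional on $A$.
\item A point of $\Delta$ off $H$ must satisfy $\mu_j>\mu_{j+1}$ and $D(\mu)>0$, which fixes $c>0$.
\end{enumerate}

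Your version is self-contained. It uses only convexity, the ordering $\lambda_1\ge\dots\ge\lambda_d$ of points of the polytope, the trace normalization and full-dimensionality. It therefore handles $\Delta_{d,N}$ and $\Delta^S_{d,N}$ at once, without matching the spin-adapted polytope to Walter's hypotheses, and it would work verbatim with kernels of simple roots in place of the walls $W_j$. The citation buys brevity and places the lemma in the general moment-map setting of Appendix~\ref{app:general}.
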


This is \cite[Lemma~3.2]{Walter14}, proved there for the moment polytope of any representation
of maximal dimension, which covers $\Delta^S_{d,N}$ as well (the group being $SU(d)$
\cite[Sec.~2.3]{Walter14}, $\dim\Delta_K=\dim i\tfr^*=d-1$, which is the dimension we are studying). The consequence for $\Dhat_B$ follows since the trace is fixed to $N$.

\subsection*{Dimensions}

Two constructions produce states with prescribed occupation numbers. If $\Psi_1\in\wedge^{N_1}\C^{d_1}$ and $\Psi_2\in\wedge^{N_2}\C^{d_2}$ are normalized, then $\Psi_1\wedge\Psi_2$ is a normalized state in $\wedge^{N_1+N_2}\C^{d_1+d_2}$ with $\gamma^{\Psi_1\wedge\Psi_2}=\gamma^{\Psi_1}\oplus\gamma^{\Psi_2}$. And particle--hole duality $\lambda\mapsto(1-\lambda_d,\dots,1-\lambda_1)$ maps $\Delta_{d,N}$ affinely onto $\Delta_{d,d-N}$ \cite{Ruskai70}. Applying the first construction with one factor a single orbital, either occupied or empty, gives
\begin{equation}\label{eq:embed}
\{1\}\times\Delta_{d-1,N-1}\ \subseteq\ \Delta_{d,N}\qquad\text{and}\qquad \Delta_{d-1,N}\times\{0\}\ \subseteq\ \Delta_{d,N}.
\end{equation}

\begin{lemma}[Configurations]\label{lem:config}
Let $A_1,\dots,A_m\subseteq\{1,\dots,d\}$ be configurations of $N$ particles, no two of which
differ in exactly one orbital. Then every point of the convex hull of the $\ind_{A_t}$ is the
vector of occupation numbers of a state, up to a relabelling of the orbitals. If $m=d$ and the $\ind_{A_t}$ are linearly independent in $\R^d$, $\Delta_{d,N}$ is
full-dimensional.
\end{lemma}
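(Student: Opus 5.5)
The plan has two parts. For the first, take the orthonormal basis $u_1,\dots,u_d$ and the state
\[
\Psi=\sum_{t=1}^m c_t\,\ket{u_{A_t}},\qquad \sum_t|c_t|^2=1 .
\]
Since no two of the $A_t$ differ in exactly one orbital, the off-diagonal entries of $\gamma^\Psi$ in this basis vanish, so $\gamma^\Psi$ is diagonal. This is the first consequence of \eqref{eq:gamma} recorded in the Notation. By \eqref{eq:diag}, the diagonal is $\sum_t|c_t|^2\ind_{A_t}$, and $\lambda^\Psi$ is its decreasing rearrangement. Choosing $|c_t|^2=p_t$ for given convex weights $p_t$ therefore realizes every point $\sum_tp_t\ind_{A_t}$ of the convex hull as $\lambda^\Psi$, up to the permutation that sorts it. That permutation is the relabelling of orbitals in the statement.

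For the second part, assume $m=d$ and that the $\ind_{A_t}$ are linearly independent. I would show that $\Delta_{d,N}$ contains an open subset of the hyperplane $H=\{\sum_j\lambda_j=N\}$.

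1. The points $\ind_{A_t}$ all lie in $H$, which does not pass through the origin because $N\ge1$.
2. Linear independence of $d$ vectors lying in an affine hyperplane off the origin implies affine independence. So their convex hull $P$ is a $(d-1)$-simplex in $H$, and it contains a relatively open ball $U\subseteq H$.
3. Let $s:\R^d\to\R^d$ be the map sending a vector to its decreasing rearrangement. By the first part, $s(U)\subseteq\Delta_{d,N}$.
4. It remains to see that $s(U)$ has nonempty interior in $H$. Shrink $U$ to avoid the finitely many hyperplanes $\{x_i=x_j\}$. This is possible because each of them meets $H$ in a proper subspace: $H$ contains points with $x_i\ne x_j$ whenever $d\ge2$. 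On the shrunken $U$ the sorting permutation is constant, so $s$ restricted to $U$ is a linear permutation of coordinates. It preserves $H$, so $s(U)$ is open in $H$.

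The main obstacle is exactly this rearrangement issue: the state realizes the unsorted vector, and one must check that sorting does not collapse dimension. I would handle it as in step 4 rather than by invoking Weyl-chamber symmetry of the polytope.

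Everything else is immediate from \eqref{eq:diag}.
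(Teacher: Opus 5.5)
Your proposal is correct and follows the paper's proof essentially step for step. The state $\sum_t\sqrt{p_t}\,\ket{u_{A_t}}$ has diagonal $\gamma^\Psi$ and so realizes the hull, and full-dimensionality comes from a small ball in the $(d-1)$-dimensional hull avoiding the walls $x_i=x_j$, on which sorting is a fixed coordinate permutation preserving $\{\sum_j\lambda_j=N\}$; your steps 2 and 4 only spell out details the paper states in one line each.
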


\begin{proof}
For $p_t\ge0$ with $\sum_tp_t=1$ and $(u_j)$ an orthonormal basis,
$\Psi_p:=\sum_t\sqrt{p_t}\ket{u_{A_t}}$ has $\gamma^{\Psi_p}$ diagonal, since off-diagonal
entries come from configurations differing in exactly one orbital, with eigenvalues
$\sum_tp_t\ind_{A_t}$ by \eqref{eq:diag}. If moreover $m=d$ and the $\ind_{A_t}$ are
independent, their hull has dimension $d-1$, so it contains a point with distinct entries.
Around it we may choose a ball, inside $\{\sum_j\lambda_j=N\}$, so small that no two entries
cross on it. There, sorting is one fixed permutation of the coordinates, so the image of the
ball is an open subset of $\Delta_{d,N}$.
\end{proof}

\begin{proposition}\label{prop:dim}
Let $d\ge2$. Then $\Delta_{d,N}$ is full-dimensional if and only if $3\le N\le d-3$ and $(d,N)\ne(6,3)$.
\end{proposition}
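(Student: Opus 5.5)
The proof has two directions, necessity and sufficiency, and I will treat them separately.

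\emph{Necessity.} Suppose first $N\le2$. For $N\le1$ the polytope is a single point, since $\gamma^\Psi$ is a rank-one projection or zero. For $N=2$, the Slater decomposition used in the proof of Theorem~\ref{thm:practical} shows that every $\lambda\in\Delta_{d,2}$ has the paired form $(t_1,t_1,t_2,t_2,\dots)$. Hence $\lambda_1=\lambda_2$ holds identically, and the polytope lies in a proper affine subspace of $\{\sum_j\lambda_j=N\}$. Particle--hole duality maps $\Delta_{d,N}$ affinely onto $\Delta_{d,d-N}$, so it carries this to $N\ge d-2$. For $(d,N)=(6,3)$, the Borland--Dennis description \eqref{eq:BD} contains the equalities $\lambda_j+\lambda_{7-j}=1$, so the dimension is at most $3<5$. (The case $d<6$ with $3\le N\le d-3$ is vacuous.)

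\emph{Sufficiency.} I want to apply Lemma~\ref{lem:config}. That requires $d$ configurations of $N$ particles, pairwise never differing in exactly one orbital, whose indicators are linearly independent. I would build these by induction using \eqref{eq:embed}, which gives two embeddings:
\begin{itemize}
\item $\{1\}\times\Delta_{d-1,N-1}\subseteq\Delta_{d,N}$;
\item $\Delta_{d-1,N}\times\{0\}\subseteq\Delta_{d,N}$.
\end{itemize}
If $\Delta_{d-1,N-1}$ (or $\Delta_{d-1,N}$) is full-dimensional, its image is a $(d-2)$-dimensional subset of $\Delta_{d,N}$. To conclude, it then suffices to find one point of $\Delta_{d,N}$ off the hyperplane $\lambda_1=1$ (respectively $\lambda_d=0$). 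Such a point exists as soon as the polytope contains anything other than Slater-like vertices, for instance the point $(1/2,\dots)$ coming from a two-determinant state. This gives dimension $d-1$ by convexity.

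The induction then reduces everything to base cases. From $(d,N)$ with $3\le N\le d-3$ and $(d,N)\ne(6,3)$, I can descend to $(d-1,N-1)$ or $(d-1,N)$ while staying in the allowed range and avoiding $(6,3)$. Up to duality, the base cases are $(7,3)$ and $(8,4)$, and possibly $(8,3)$ if the descent forces it.

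For the base cases, I will exhibit $d$ explicit configurations, pairwise at distance at least two, with independent indicators, e.g. for $(7,3)$ the seven lines of the Fano plane. Any two lines meet in exactly one point, so they differ in two orbitals. The Fano incidence matrix is invertible over $\R$, because $AA^T=2I+J$ is nonsingular. For $(8,4)$ I would take the complements-closed design of the extended Hamming code's weight-4 words (14 blocks, pairwise intersections $0$ or $2$) and select 8 with independent indicators.

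The main obstacle is this base-case verification: checking rank $d$ for the chosen $(8,4)$ blocks, and making sure the induction bookkeeping never passes through $(6,3)$. If a direct route is blocked, I would instead apply Lemma~\ref{lem:config} directly in general, taking a $2$-design–like family built by wedging a Fano block with a full-dimensional smaller state, i.e. using the wedge-product direct sum $\gamma^{\Psi_1}\oplus\gamma^{\Psi_2}$ together with one mixing configuration.
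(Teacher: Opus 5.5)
Your strategy is the paper's own. Necessity is argued the same way, and sufficiency goes by induction on $d$ through the inclusions \eqref{eq:embed}, particle--hole duality, and Lemma~\ref{lem:config} for the base. It can be completed, but two steps need repair: the descent bookkeeping, and the point off the hyperplane.

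\emph{Bookkeeping.} The descent closes with a single base case.
\begin{itemize}
\item For $4\le N\le d-4$ (so $d\ge8$), both $(d-1,N-1)$ and $(d-1,N)$ lie in the allowed range and differ from $(6,3)$.
\item For $N=3$ and $d\ge8$, the pair $(d-1,3)$ lies in the allowed range.
\item $N=d-3$ is dual to $N=3$.
\end{itemize}
So only $(7,3)$ is needed; $(8,4)$ and $(8,3)$ both reduce to it. Your $(8,4)$ claim is in fact true: the incidence matrix $M$ of the $3$-$(8,4,1)$ design satisfies $MM^T=4I+3J$. But it is superfluous.

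Your Fano plane is exactly the $d=7$ instance of the paper's base case, the cyclic shifts $\{t,t+1,t+3\}\subseteq\Z_d$. The paper takes all $N=3$, $d\ge7$ as base at once, using the circulant eigenvalues $1+\omega^k+\omega^{3k}\ne0$. Its induction step therefore only meets $4\le N\le d-4$.

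\emph{The point off the hyperplane.} This step is not justified as stated.
\begin{itemize}
\item Being non-Slater does not move a point off $\lambda_1=1$: every state $u\wedge\Phi$ has $\lambda_1=1$.
\item A two-determinant state has $\lambda_1<1$ only if its two configurations are disjoint, which needs $2N\le d$.
\item It has $\lambda_d>0$ only if its configurations cover all orbitals, which needs $2N\ge d$.
\end{itemize}
This fails exactly where your route departs from the paper's. Climbing from $(7,3)$ to $(d,3)$ with $d\ge8$ through $\Delta_{d-1,3}\times\{0\}$ requires a point of $\Delta_{d,3}$ with $\lambda_d>0$. No two-determinant state provides one, since $d>6=2N$.

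The fix is easy.
\begin{itemize}
\item For $4\le N\le d-4$, take the point with $\lambda_1<1$ from the full-dimensional $\Delta_{d-1,N}\times\{0\}$, as the paper does.
\item For $N=3$, take the equal superposition of the $d$ cyclic shifts of $\{0,1,3\}$ in $\Z_d$. These pairwise share at most one orbital, so by \eqref{eq:diag} they give $\lambda=(3/d,\dots,3/d)$.
\end{itemize}
Once you use those shifts, you may as well run the paper's circulant argument for every $d\ge7$ directly.
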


\begin{proof}
\emph{Exceptions.} For $N\in\{0,1,d-1,d\}$ the polytope is a point. For $N=2$ every state has
paired occupation numbers $\lambda_1=\lambda_2$ \cite{Youla61,Yang62}, so $\Delta_{d,2}$
contains no non-degenerate point, and $N=d-2$ follows by duality. For
$(d,N)=(6,3)$ the equalities in \eqref{eq:BD} reduce the dimension.

\emph{Base case $N=3$, $d\ge7$.} Let $A_t:=\{t,t+1,t+3\}\subseteq\Z_d$ be the $d$ cyclic shifts
of $A_0$. We check the hypotheses of Lemma~\ref{lem:config}. If $A_s$ and $A_t$ shared two
elements $a+s=b+t$ and $c+s=e+t$ with $a\ne c$ in $A_0$, the value $t-s$ would arise twice as a
difference $a-b$ of distinct $a,b\in A_0$; but those differences are $\pm1,\pm2,\pm3$, distinct
modulo $d$ once $d\ge7$. So $A_s$ and $A_t$ share at most one element. Stacked as rows, the $\ind_{A_t}$
form a circulant matrix with eigenvalues $1+\omega^k+\omega^{3k}$, $\omega:=e^{2\pi i/d}$, none
of which vanishes --- if $1+z+z^3=0$ with $|z|=1$, conjugating and multiplying by $z^3$ gives
$z^3+z^2+1=0$, so $z=z^2$ and $z=1$, where the value is $3$ --- so they are independent vectors in $\R^d$.

\emph{Induction on $d$.} Duality settles $N=d-3$. Let $4\le N\le d-4$. Then $(d-1,N-1)$ and
$(d-1,N)$ both satisfy $3\le N'\le d'-3$ with $d'\ge7$, so both polytopes are full-dimensional
by induction. By \eqref{eq:embed}, $\Delta_{d,N}$ contains $\{1\}\times\Delta_{d-1,N-1}$, of
dimension $d-2$ in the hyperplane $\lambda_1=1$, together with a point of
$\Delta_{d-1,N}\times\{0\}$ having $\lambda_1<1$, so their hull has dimension $d-1$.
\end{proof}

\section{The spin-adapted polytopes}\label{app:spin}
This appendix determines when $\Delta^S_{d,N}$ is full-dimensional
(Corollary~\ref{cor:spinfull}), which is the spin-adapted half of fact (b) in the proof of
Theorem~\ref{thm:practical}. Throughout, $d\ge2$, as in
Appendix~\ref{app:dim}, and $S$ is a possible total spin, so that $2S\le N$ and $N-2S$ is even, with
$\mathcal{H}^{S,M}_N\ne0$, which holds exactly when $K_0\ge0$ and $K_0+2S\le d$ for
\[
K_0:=\frac{N-2S}{2},
\]
the largest number of doubly occupied orbitals a spatial configuration of spin $S$ can have.
The answer will be that the only degenerate cases are the fully polarized one, $N=2S$, and its
particle--hole image, $N=2d-2S$. 

\begin{proposition}[Particle--hole duality]\label{prop:ph}
$\Delta^S_{d,2d-N}=\big\{(2-\lambda_d,\dots,2-\lambda_1)\ \big|\ \lambda\in\Delta^S_{d,N}\big\}$.
\end{proposition}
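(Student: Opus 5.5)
The plan is to build the particle--hole map explicitly on states and then check that it intertwines the spatial density matrices. On $\wedge(\C^d\otimes\C^2)$, which has $2d$ spin-orbitals, fix an orthonormal basis $e_{p\sigma}$. Let $\Omega=\bigwedge_{p,\sigma}e_{p\sigma}$ be the filled state. The Hodge-type map
\[
\star:\wedge^N(\C^{2d})\to\wedge^{2d-N}(\C^{2d}),\qquad \langle\star\Psi,\Phi\rangle\text{ determined by }\Psi\wedge\bar\Phi\propto\Omega,
\]
is an antiunitary bijection. It sends a Slater determinant to the determinant of the complementary orbitals, with a sign. In second quantization it is implemented by exchanging $f_{p\sigma}\leftrightarrow f^\dagger_{p\sigma}$, up to phases, composed with complex conjugation.

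The first step is the one-body identity. In a determinant basis the occupation of a spin-orbital $x$ in $\star\Psi$ is one minus its occupation in $\Psi$. The off-diagonal elements transform with a transpose or conjugate, so
\[
\gamma^{\star\Psi}=\ind-\overline{\gamma^\Psi}
\]
on $\C^{2d}$. This is the standard result of \cite{Ruskai70}, and the fermionic duality of Appendix~\ref{app:dim} already uses it. Taking the partial trace over $\C^2$ then gives $\gamma_l^{\star\Psi}=2\cdot\ind_d-\overline{\gamma_l^\Psi}$. Its eigenvalues are $2-\lambda_j$, and in decreasing order these are $(2-\lambda_d,\dots,2-\lambda_1)$.

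The second step is spin. I have to show that $\star$ maps $\mathcal{H}^{S,M}_N$ into $\mathcal{H}^{S,-M}_{2d-N}$, or with a suitably modified map into $\mathcal{H}^{S,M}_{2d-N}$. Since $\Delta^S$ is independent of $M$, either one suffices. The map $\star$ conjugates the generators by $X\mapsto-\bar X$ of the $U(2d)$ action, up to a scalar shift. The total spin operators are second quantizations $\Gamma(\ind\otimes s_a)$, so they are carried to $-\Gamma(\ind\otimes\bar s_a)$ plus a constant. The constant vanishes, because $\tr s_a=0$. The map $s_a\mapsto-\bar s_a$ is the spin-flip automorphism of $\mathfrak{su}(2)$, namely conjugation by $i\sigma_y$. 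It preserves $\mathbf{S}^2$ and sends $S_z$ to $-S_z$. Hence $\star\mathcal{H}^{S,M}_N=\mathcal{H}^{S,-M}_{2d-N}$, and if desired I compose with the unitary spin flip $\Gamma$ of $\ind\otimes i\sigma_y$ to return to $M$. That flip commutes with $\gamma_l$, since it acts only on the spin factor.

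Both inclusions then follow. Given $\lambda\in\Delta^S_{d,N}$, the state $\star\Psi$ realizes the reversed, complemented vector in $\Delta^S_{d,2d-N}$. For the reverse inclusion I apply the same map to go from $2d-N$ back to $N$, using that $\star$ is an involution up to sign. The natural orbitals of $\star\Psi$ are the conjugates $\bar u_j$ in reversed order, and configurations go to complementary configurations, with $w_I\mapsto 2-w_I$. This is the property used in the proof of Theorem~\ref{thm:practical}.

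I expect the main obstacle to be the careful bookkeeping of phases and the antilinearity in the claim $\gamma^{\star\Psi}=\ind-\overline{\gamma^\Psi}$, together with its compatibility with the spin operators. I would check the sign of the constant shift on $\Gamma(h)$, which is $-\tr h$ for $h\mapsto-\bar h$, directly on Slater determinants. The rest is formal.
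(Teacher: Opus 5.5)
Your proposal is correct and is essentially the paper's proof. The paper's antiunitary $\mathcal{C}$, with $\mathcal{C}f_p\mathcal{C}^{-1}=f_p^\dagger$, is your $\star$, and the argument runs the same way: $\gamma^{\mathcal{C}\Psi}=\ind-\gamma^\Psi$ from \cite{Ruskai70}, traced over spin; $\mathcal{C}$ preserves $\hat S^2$ and reverses $\hat S_z$, so it maps $\mathcal{H}^{S,M}_N$ onto $\mathcal{H}^{S,-M}_{2d-N}$; and $\Delta^S_{d,N}$ does not depend on $M$.

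There is one harmless slip in your conjugation bookkeeping. For an antiunitary map of this kind there is no complex conjugation: $\gamma^{\star\Psi}=\ind-\gamma^\Psi$ and $\star\hat S_a\star^{-1}=-\hat S_a$. The bars in $\ind-\overline{\gamma^\Psi}$ and $s_a\mapsto-\bar s_a$ belong to the linear version of the map. So the natural orbitals of $\star\Psi$ are the $u_j$ in reverse order, not the $\bar u_j$. The spectrum and the spin conclusion are unaffected either way.
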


\begin{proof}
Let $\mathcal{C}$ be the antiunitary map on $\wedge^\bullet(\C^d\otimes\C^2)$ with
$\mathcal{C}f_p\mathcal{C}^{-1}=f^\dagger_p$ for every spin-orbital $p$. It sends $N$-particle
states to $(2d-N)$-particle states, with $\gamma^{\mathcal{C}\Psi}=\ind-\gamma^\Psi$ \cite[Eq.~(19)]{Ruskai70}, hence $\gamma_l^{\mathcal{C}\Psi}=2\cdot\ind_d-\gamma_l^\Psi$. It commutes with
$\hat S^2$ and reverses $\hat S_z$, so it maps $\mathcal{H}^{S,M}_N$ bijectively onto
$\mathcal{H}^{S,-M}_{2d-N}$. The spatial occupation numbers of $\mathcal{C}\Psi$ are therefore
$(2-\lambda_d,\dots,2-\lambda_1)$, and since $\Delta^S_{d,N}$ does not depend on $M$
\cite{AltunbulakKlyachko08,Liebert25}, the claim follows.
\end{proof}

The same map with $\C^d$ in place of $\C^d\otimes\C^2$ gives the fermionic duality at the level
of states: $\mathcal{C}$ sends $\wedge^N\C^d$ to $\wedge^{d-N}\C^d$ with
$\gamma^{\mathcal{C}\Psi}=\ind-\gamma^\Psi$, so it carries an ordered natural orbital basis
$B=(u_1,\dots,u_d)$ of $\Psi$ to the reversed basis $B'=(u_d,\dots,u_1)$ of $\mathcal{C}\Psi$.
Since $\mathcal{C}\nhat_p\mathcal{C}^{-1}=1-\nhat_p$, the constraint $D'$ with
$\kappa'_i:=-\kappa_{d+1-i}$, $\kappa'_0:=\kappa_0+\sum_j\kappa_j$ has
$\mathcal{C}\Dhat_B\mathcal{C}^{-1}=\Dhat'_{B'}$; so $\Dhat_B\Psi=0$ if and only if
$\Dhat'_{B'}\mathcal{C}\Psi=0$, and ordering facets correspond to ordering facets.

\begin{proposition}[Configurations]\label{prop:spinconfig}
Let $w^{(1)},\dots,w^{(m)}\in\{0,1,2\}^d$ be spatial occupation vectors with
$\sum_jw^{(t)}_j=N$, each with at most $K_0$ entries equal to $2$, no two differing by a vector
$e_p-e_q$. Then every point of the convex hull of the $w^{(t)}$ is the vector of spatial
occupation numbers of a state in $\mathcal{H}^{S,M}_N$, up to a relabelling of the orbitals. If
$m=d$ and the $w^{(t)}$ are linearly independent in $\R^d$, then $\Delta^S_{d,N}$ is
full-dimensional.
\end{proposition}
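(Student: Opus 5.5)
The plan mirrors Lemma~\ref{lem:config}, with configuration state functions in place of Slater determinants. Fix an orthonormal basis $u_1,\dots,u_d$ of $\C^d$. For each $w^{(t)}$, let $D_t$ be its set of doubly occupied orbitals and $S_t$ its set of singly occupied ones. Then $|S_t|=N-2|D_t|\ge 2S$, since $|D_t|\le K_0$, so a spin-$S$ coupling of the open shell exists. I will pick a normalized configuration state function $\ket{w^{(t)}}\in\mathcal{H}^{S,M}_N$ supported on that spatial configuration: the doubly occupied orbitals are filled with both spins, and the $|S_t|$ singly occupied orbitals are coupled to total spin $S$ and projection $M$. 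It is an eigenvector of every $\nhat_j=\nhat_{j\uparrow}+\nhat_{j\downarrow}$ with eigenvalue $w^{(t)}_j$. For weights $p_t\ge0$ with $\sum_t p_t=1$, I then set $\Psi_p:=\sum_t\sqrt{p_t}\,\ket{w^{(t)}}$, which is normalized because distinct spatial occupation vectors give orthogonal states, and which lies in $\mathcal{H}^{S,M}_N$.

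The key step is to show that $\gamma^{\Psi_p}_l$ is diagonal in $(u_j)$. Its entries are $(\gamma_l)_{jk}=\langle\Psi_p,\sum_\sigma f^\dagger_{k\sigma}f_{j\sigma}\Psi_p\rangle$. For $j\ne k$, the operator $\sum_\sigma f^\dagger_{k\sigma}f_{j\sigma}$ maps a state of spatial occupation $w$ to one of occupation $w-e_j+e_k$. A cross term between $\ket{w^{(s)}}$ and $\ket{w^{(t)}}$ therefore survives only if $w^{(s)}=w^{(t)}-e_j+e_k$, which the hypothesis excludes for $s\ne t$. The diagonal terms $s=t$ vanish because $w^{(t)}\ne w^{(t)}-e_j+e_k$. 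Hence $\gamma_l^{\Psi_p}$ is diagonal with entries $\sum_tp_tw^{(t)}_j$, and its ordered eigenvalues are the decreasing rearrangement of $\sum_tp_tw^{(t)}$. This gives the first claim up to relabelling. The only point needing care here is the existence of a spin-$(S,M)$ state on each configuration, and that is exactly what the bound on the number of $2$'s guarantees.

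For full-dimensionality, I will copy the end of the proof of Lemma~\ref{lem:config}. If $m=d$ and the $w^{(t)}$ are linearly independent, they all lie in the affine hyperplane $\sum_j\lambda_j=N$, which does not contain the origin. Their convex hull is then a $(d-1)$-simplex in that hyperplane, so it has nonempty relative interior and contains a point with pairwise distinct coordinates. A small relative ball around that point still has distinct coordinates, so sorting acts on it as one fixed permutation. The image of the ball is therefore an open subset of the hyperplane contained in $\Delta^S_{d,N}$, and $\Delta^S_{d,N}$ has dimension $d-1$.
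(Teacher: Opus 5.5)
Your proposal is correct and follows the paper's proof essentially step for step: you build a configuration state function in $\mathcal{H}^{S,M}_N$ on each $w^{(t)}$, show that $\gamma_l^{\Psi_p}$ is diagonal because the one-body hopping operators shift spatial occupations by $e_k-e_j$, and finish with the same sorting argument as in Lemma~\ref{lem:config}. The one detail you leave implicit is the parity condition for coupling $|S_t|$ open-shell spins to total spin $S$, which holds automatically because $|S_t|-2S=2(K_0-|D_t|)$ is even.
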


\begin{proof}
Let $w$ be one of the $w^{(t)}$, with $m_w$ entries equal to $2$. A configuration with these
spatial occupations has $m_w$ doubly occupied orbitals, which are singlets, and $N-2m_w$ singly
occupied ones, coupling to total spin $S$ precisely when $2S\le N-2m_w$ with $N-2m_w-2S$ even.
Both hold since $N-2m_w-2S=2(K_0-m_w)$ and $m_w\le K_0$. So there is a state
$\ket{w}\in\mathcal{H}^{S,M}_N$ with spatial occupations $w$.

The rest is the proof of Lemma~\ref{lem:config}, with the $\ket{w^{(t)}}$ in place of the Slater
determinants $\ket{u_{A_t}}$. For $i\ne j$ we have $(\gamma_l^\Psi)_{ji}=\langle\Psi,\hat
E_{ij}\Psi\rangle$, where $\hat E_{ij}$ moves one electron from spatial orbital $j$ to spatial
orbital $i$; it sends $\ket{w^{(t)}}$ to a state with spatial occupations $w^{(t)}-e_j+e_i$,
which is no $w^{(s)}$, hence orthogonal to every $\ket{w^{(s)}}$. So $\gamma_l^{\Psi_p}$ is
diagonal for $\Psi_p:=\sum_t\sqrt{p_t}\ket{w^{(t)}}$, with eigenvalues $\sum_tp_tw^{(t)}$ since
$\nhat_j\ket{w^{(t)}}=w^{(t)}_j\ket{w^{(t)}}$.
\end{proof}

\begin{proposition}\label{prop:polarized}
If $N=2S$ then $\Delta^S_{d,N}=\Delta_{d,N}$. If $N=2d-2S$ then $\Delta^S_{d,N}$ is the image of $\Delta_{d,2S}$ under $\lambda\mapsto(2-\lambda_d,\dots,2-\lambda_1)$.
\end{proposition}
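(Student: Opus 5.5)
The plan is to work at the level of states: $N=2S$ means every spin is aligned, and the $M=S$ member of the multiplet is a state of $N$ spin-up fermions in $d$ spatial orbitals. Fix $M=S$, which is allowed because $\Delta^S_{d,N}$ does not depend on $M$. For $N=2S$ the Young diagram has a single column of length $N$, so $\mathcal{V}^S_{d,N}\cong\wedge^N\C^d$.

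Concretely, the map $\phi\mapsto\phi^\uparrow$ sends $\phi\in\wedge^N\C^d$ to $\wedge^N(\C^d\otimes\ket{\uparrow})\subseteq\wedge^N(\C^d\otimes\C^2)$, with $f_p\mapsto f_{p\uparrow}$. Its image consists of states with $\hat S_z=N/2$. Since $S\le N/2$ in general, $\hat S_z=N/2$ forces $S=N/2$, so the image is contained in $\mathcal{H}^{S,S}_N$. Conversely, a state in $\mathcal{H}^{S,S}_N$ has $\hat S_z=N/2$, so every particle is spin-up, and the image is all of $\mathcal{H}^{S,S}_N$; this agrees with the dimension count $\dim\mathcal{V}^S_{d,N}=\binom dN$.

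For such a state $\gamma^{\phi^\uparrow}=\gamma^\phi\oplus0$ on $\C^d\otimes\ket\uparrow\oplus\C^d\otimes\ket\downarrow$. Taking the partial trace over $\C^2$ gives $\gamma_l^{\phi^\uparrow}=\gamma^\phi$, so $\lambda_l=\lambda^\phi$, and therefore $\Delta^S_{d,N}=\Delta_{d,N}$. The same map also carries natural orbitals to natural orbitals. The operator $\nhat_j=\nhat_{j\uparrow}+\nhat_{j\downarrow}$ acts as $\nhat_{j\uparrow}$ on this image, so $\Dhat_B\phi^\uparrow$ corresponds to $\Dhat_B\phi$. This gives the state-level identification that the proof of Theorem~\ref{thm:practical} uses.

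For $N=2d-2S$, apply Proposition~\ref{prop:ph}, which gives $\Delta^S_{d,N}=\{(2-\lambda_d,\dots,2-\lambda_1)\mid\lambda\in\Delta^S_{d,2d-N}\}$ with $2d-N=2S$. By the first part, $\Delta^S_{d,2S}=\Delta_{d,2S}$, which proves the claim. At the level of states the map is $\mathcal{C}\circ(\cdot)^\uparrow$, and by the remark after Proposition~\ref{prop:ph} it preserves $\Dhat_B\Psi=0$ for the transformed constraint.

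The only point that needs care is that the image of $(\cdot)^\uparrow$ is exactly $\mathcal{H}^{S,S}_N$ rather than a proper subspace of it. I handle this through the $\hat S_z$ argument above and not through Schur--Weyl duality. Everything else is bookkeeping with the partial trace.
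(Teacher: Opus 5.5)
Your proof is correct and has the same skeleton as the paper's. You identify the fully polarized sector with $\wedge^N\C^d$, so that $\gamma_l$ becomes the ordinary fermionic one-body density matrix. You then get the case $N=2d-2S$ from Proposition~\ref{prop:ph} with $2d-N=2S$.

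Where you differ is in how the identification is made. The paper reads it off Schur--Weyl duality, since the Young diagram has columns of lengths $N$ and $0$. That gives $\mathcal{H}^{S,M}_N=\wedge^N\C^d\otimes\ket{S,M}$ for every $M$ at once. Then $\gamma_l$ is the one-body density matrix of the orbital factor, because the isomorphism is $U(d)$-equivariant and hence unitary up to a scalar.

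You work only at $M=S$. There $\mathcal{H}^{S,S}_N$ is literally the $\hat S_z=N/2$ eigenspace $\wedge^N(\C^d\otimes\ket{\uparrow})$, and $\gamma_l^{\phi^\uparrow}=\gamma^\phi$ is a direct computation. This is more elementary and gives an explicit map on states. The price is that you must invoke the cited $M$-independence of $\Delta^S_{d,N}$, which is harmless for the polytope identity.

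Your closing remark about Theorem~\ref{thm:practical} needs one more step. Pinned states there may lie in $\mathcal{H}^{S,M}_N$ with $M\ne S$. To reach them, compose your map with normalized powers of $\hat S_-$. These act unitarily between the $M$-sectors and commute with $\Gamma(h\otimes\ind_2)$, so they preserve $\gamma_l$ and commute with $\Dhat_B$. The paper's all-$M$ identification gives this without extra work.
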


\begin{proof}
If $N=2S$ the orbital factor is $\mathcal{V}^S_{d,2S}=\wedge^N\C^d$, so $\mathcal{H}^{S,M}_N=\wedge^N\C^d\otimes\ket{S,M}$ for \emph{every} $M$, and $\gamma_l$ is the fermionic one-body density matrix of the orbital factor. The case $N=2d-2S$ follows from Proposition~\ref{prop:ph}, since then $2d-N=2S$.
\end{proof}

\begin{theorem}\label{thm:spindim}
Suppose $K_0\ge1$ and $K_0+2S\le d-1$. Then $\Delta^S_{d,N}$ is full-dimensional.
\end{theorem}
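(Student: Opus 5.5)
The plan is induction on $d$, reducing to the single base family $K_0=1$, $d=2S+2$ (so $N=2S+2$), which is settled by Proposition~\ref{prop:spinconfig}.

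\emph{Embeddings.} First I will establish the spin-adapted analogue of \eqref{eq:embed}. Take a state $\Phi\in\mathcal{H}^{S,M}_{N-2}$ over $d-1$ spatial orbitals and wedge it with a doubly occupied extra orbital $v\uparrow\wedge v\downarrow$. This pair is a singlet, so the result lies in $\mathcal{H}^{S,M}_N$ over $d$ orbitals. Its spatial density matrix is $\gamma_l^\Phi\oplus 2$, exactly as in the fermionic construction. Similarly, adding an empty orbital maps $\mathcal{H}^{S,M}_N$ on $d-1$ orbitals into $\mathcal{H}^{S,M}_N$ on $d$ orbitals. This gives
\[
\{2\}\times\Delta^S_{d-1,N-2}\subseteq\Delta^S_{d,N},\qquad \Delta^S_{d-1,N}\times\{0\}\subseteq\Delta^S_{d,N}.
\]
The first embedding replaces $K_0$ by $K_0-1$ and keeps $S$. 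The second keeps $K_0$ and $S$ and lowers $d$.

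\emph{Induction.} Suppose $K_0\ge1$ and $K_0+2S\le d-1$, and that we are not in the base family. Then one of two reductions applies.
\begin{itemize}
\item If $K_0+2S\le d-2$, then $(d-1,N,S)$ still satisfies the hypotheses. By induction $\Delta^S_{d-1,N}\times\{0\}$ has dimension $d-2$ inside the hyperplane $\lambda_d=0$. I will add one point of $\Delta^S_{d,N}$ with $\lambda_d>0$, for instance a configuration state with all $d$ orbitals occupied, which exists when $N\ge d$, or some spread-out configuration otherwise. The convex hull then has dimension $d-1$.
\item If $K_0+2S=d-1$ and $K_0\ge2$, then $(d-1,N-2,S)$ satisfies the hypotheses, since $K_0-1\ge1$ and $K_0-1+2S=d-2$. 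By induction $\{2\}\times\Delta^S_{d-1,N-2}$ has dimension $d-2$ in the hyperplane $\lambda_1=2$. A point with $\lambda_1<2$ completes the dimension count. Such a point exists because the configuration with only $K_0<d$ doubly occupied orbitals can be relabelled, or because one can use a configuration whose largest entry is $1$.
\end{itemize}
Iterating these two reductions ends in $K_0=1$, $d=2S+2$.

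\emph{Base case.} Here $d=2S+2=N$. I will apply Proposition~\ref{prop:spinconfig} with the $d$ vectors
\[
w^{(a)}:=\mathbf{1}+e_a-e_{a+1},\qquad a\in\Z_d .
\]
Each has exactly one entry $2$ (allowed, since $K_0=1$), one entry $0$, and sum $N$. The difference of two of them is $e_a-e_{a+1}-e_b+e_{b+1}$. For $b\ne a$ this has four nonzero entries, or equals $2e_a-e_{a+1}-e_{a-1}$ when the indices touch (for $d=2$ it is $\pm2(e_1-e_2)$). In no case is it of the form $e_p-e_q$.

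Stacked as rows, the $w^{(a)}$ form a circulant matrix $J+I-P$, where $J$ is the all-ones matrix and $P$ the cyclic shift. Its eigenvalues are $d$ for $k=0$ and $1-\omega^k\ne0$ for $k\ne0$. So the vectors are linearly independent, and Proposition~\ref{prop:spinconfig} gives full dimension.

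\emph{Main obstacle.} I expect the delicate step to be the extra point off the hyperplane in the induction, together with checking carefully that wedging with a singlet pair keeps the state in $\mathcal{H}^{S,M}_N$. These are routine but need care. If the first gives trouble, I will instead produce the extra point directly from Proposition~\ref{prop:spinconfig} using a single configuration vector.
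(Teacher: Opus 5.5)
Your reduction scheme is viable. A doubly occupied orbital is an $SU(2)$ singlet, so both embeddings land in $\mathcal{H}^{S,M}_N$, and the two reductions do terminate at $K_0=1$, $d=2S+2$. Your base case is also correct: the vectors $\mathbf{1}+e_a-e_{a+1}$ are, up to reflection, the paper's cyclic shifts of $v=(2,1,\dots,1,0,\dots,0)$ in the special case $N=d$, where the no-hop check is trivial.

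\textbf{The gap.} The gap is the step you flagged: producing a point of $\Delta^S_{d,N}$ off the hyperplane. Your justifications fail exactly where that point is needed. A single configuration $w$ has occupation numbers equal to the decreasing rearrangement of $w$.
\begin{itemize}
\item In the first branch, whenever $N<d$, every configuration leaves an orbital empty, so $\lambda_d=0$. This happens, for instance, always when $K_0=1$, since then $d\ge 2S+3=N+1$.
\item In the second branch $N=K_0+d-1\ge d+1$, so every configuration has a doubly occupied orbital and $\lambda_1=2$. Relabelling changes nothing, and a configuration with largest entry $1$ does not exist.
\end{itemize}
Your fallback, Proposition~\ref{prop:spinconfig} with a single vector, fails for the same reason. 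Only in the first branch with $N\ge d$ does your configuration with no empty orbital work. The point is also not automatic: in $\Delta^1_{3,2}=\Delta_{3,2}$ every state has $\lambda_3=0$, so the hypothesis $K_0\ge1$ must genuinely enter at this step.

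\textbf{How to close it.} You need a genuine superposition. Take several configurations, pairwise not differing by any $e_p-e_q$, whose supports jointly cover all $d$ orbitals, and feed them into the first part of Proposition~\ref{prop:spinconfig}. The second branch can be transported to this situation by Proposition~\ref{prop:ph}, since $\lambda_1<2$ becomes $\lambda_d>0$ for a dual state with $K_0=1$ and fewer than $d$ particles.

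In the interior case $K_0\ge2$, $K_0+2S\le d-2$, you could instead take the extra point from the other embedded polytope, as Proposition~\ref{prop:dim} does. At the boundary cases $K_0=1$ or $K_0+2S=d-1$, however, your inductive hypothesis covers only one embedded polytope, and a separate construction is needed. For example, with $K_0=1$, $S=0$ the other piece is the single point $(2,0,\dots,0)$.

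\textbf{Comparison with the paper.} The cyclic shifts of $v$ supply such a covering family. But verifying the no-hop condition for them when $N<d$ is exactly Step~3 of the paper's proof. Once you have that family, the circulant symbol also gives linear independence. This is why the paper applies it directly after reducing to $N\le d$, with no induction. Your induction relocates the combinatorial core of the argument into the extra-point step rather than removing it.
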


\begin{proof}
It suffices to find $d$ spatial occupation vectors satisfying the hypotheses of
Proposition~\ref{prop:spinconfig}.

\emph{Reduction to $N\le d$.} Particle--hole duality sends $(N,S,d)$ to $(2d-N,S,d)$,
hence $K_0$ to $d-K_0-2S$, exchanging the two hypotheses. As one of $N$ and $2d-N$ is at most
$d$, Proposition~\ref{prop:ph} lets us assume $N\le d$; the second hypothesis is then automatic,
since $K_0+2S=\tfrac N2+S\le N-1\le d-1$ once $K_0\ge1$, and it was needed only to make this
reduction available. Note $N=2K_0+2S\ge2$.

\emph{Step 1: the configurations.} Index $\Z_d$ and let
\[
v:=(2,\underbrace{1,\dots,1}_{N-2},\underbrace{0,\dots,0}_{d-N+1})\in\{0,1,2\}^d,
\]
with cyclic shifts $w^{(t)}:=\tau^tv$, where $\tau$ shifts by one position. Each $w^{(t)}$ has
entry sum $N$ and a single entry $2$, so at most $K_0$ of them as $K_0\ge1$. It remains to check
the two hypotheses of Proposition~\ref{prop:spinconfig}.

\emph{Step 2: linear independence.} The $w^{(t)}$ are the rows of a circulant matrix with symbol
\[
f(z)=2+z+\dots+z^{N-2}=1+\frac{z^{N-1}-1}{z-1},
\]
whose eigenvalues are $f(\omega^k)$, $\omega:=e^{2\pi i/d}$. Here $f(1)=N\ne0$, and for
$\omega^k\ne1$ the equation $f(\omega^k)=0$ reads $\omega^{k(N-1)}+\omega^k=2$, a sum of two
unit-modulus terms, forcing $\omega^k=1$.

\emph{Step 3: no two shifts differ by $e_p-e_q$.} Such a difference would make $v$ invariant
under $j\mapsto j-s$ away from two positions, hence periodic with period $g=\gcd(d,s)$, which
the block shape of $v$ forbids. In detail, suppose $w^{(0)}-w^{(s)}=e_p-e_q$ with
$s\not\equiv0$. The entry $2$ of $v$ sits at $0$ and that of $\tau^sv$ at $s$, all other entries
being at most $1$, so the difference is $\ge1$ at $0$ and $\le-1$ at $s$; hence $p=0$, $q=s$,
$v_s=v_{-s}=1$, and $v_j=v_{j-s}$ for $j\notin\{0,s\}$. Put $g:=\gcd(d,s)\le d/2$. The cycles of
$j\mapsto j-s$ are the cosets of $g\Z_d$, of $d/g\ge2$ elements each. The relation $v_j=v_{j-s}$
holds everywhere except at $j=0$ and $j=s$, which lie in the cycle $g\Z_d$ and isolate $0$ in
it; so $v$ is constant on every other cycle and equals $v_s=1$ on $g\Z_d\setminus\{0\}$.

Hence $g\Z_d\setminus\{0\}\subseteq\{1,\dots,N-2\}$, so $d-g\le N-2$. The zero positions
$\{N-1,\dots,d-1\}$ are a non-empty union of cycles, so contain a coset of $g\Z_d$, which the
arc $\{0,\dots,N-2\}$ must miss; impossible unless $N-1<g$, as any $g$ consecutive integers meet
every coset. Together $\max(N,\,d-N+2)\le g\le d/2$, so $N\le d/2$ and $N\ge d/2+2$, a contradiction.
Proposition~\ref{prop:spinconfig} now applies with $m=d$.
\end{proof}

Since $\mathcal{H}^{S,M}_N\ne0$ means $K_0\ge0$ and $K_0+2S\le d$, with $K_0$ and $2S$ integers,
Theorem~\ref{thm:spindim} misses only $K_0=0$ and $K_0+2S=d$, that is $N=2S$ and $N=2d-2S$.
With Propositions~\ref{prop:polarized} and~\ref{prop:dim} we conclude the following.

\begin{corollary}\label{cor:spinfull}
$\Delta^S_{d,N}$ is full-dimensional unless $N=2S$ or $N=2d-2S$. In those two cases $\Delta^S_{d,N}$ is $\Delta_{d,2S}$ or its particle--hole image, whose dimension is given by Proposition~\ref{prop:dim}. In particular, $\Delta^S_{d,N}$ fails to be full-dimensional exactly when $N\in\{2S,2d-2S\}$ and, in addition, $2S\le2$, or $2S\ge d-2$, or $(d,2S)=(6,3)$.
\end{corollary}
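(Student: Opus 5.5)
The corollary only assembles results already proved; the plan is to combine Theorem~\ref{thm:spindim}, Proposition~\ref{prop:polarized} and Proposition~\ref{prop:dim}. First I recall that $\mathcal{H}^{S,M}_N\ne0$ means $K_0\ge0$ and $K_0+2S\le d$, with $K_0=(N-2S)/2$ and $2S$ both non-negative integers. The complement of the hypothesis of Theorem~\ref{thm:spindim} inside this range is therefore $K_0=0$ or $K_0+2S=d$. Since $K_0+2S=(N+2S)/2$, these conditions read $N=2S$ and $N=2d-2S$ respectively. In every other admissible case Theorem~\ref{thm:spindim} gives full-dimensionality, which proves the first sentence.

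In the two exceptional cases I will invoke Proposition~\ref{prop:polarized}. It identifies $\Delta^S_{d,N}$ with $\Delta_{d,2S}$ when $N=2S$. When $N=2d-2S$ it identifies $\Delta^S_{d,N}$ with the image of $\Delta_{d,2S}$ under $\lambda\mapsto(2-\lambda_d,\dots,2-\lambda_1)$. This map is an affine bijection of $\R^d$. It carries the hyperplane $\sum_j\lambda_j=2S$ onto $\sum_j\lambda_j=2d-2S=N$, so it preserves dimension relative to the trace hyperplane. Hence full-dimensionality of $\Delta^S_{d,N}$ is equivalent to that of $\Delta_{d,2S}$.

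By Proposition~\ref{prop:dim}, $\Delta_{d,2S}$ fails to be full-dimensional exactly when $2S\le2$, or $2S\ge d-2$, or $(d,2S)=(6,3)$. This gives the final sentence. The case $d\ge2$ is assumed throughout, as required by Proposition~\ref{prop:dim}.

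There is no real obstacle here. The only points to check are that the two exceptional cases are exhaustive, which follows from integrality of $K_0$ and $2S$, and that the particle--hole map preserves relative dimension, which holds because it is affine and invertible.
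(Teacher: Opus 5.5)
Your proposal is correct and is the paper's own argument. Theorem~\ref{thm:spindim} covers everything except $K_0=0$ and $K_0+2S=d$, that is $N=2S$ and $N=2d-2S$, and Propositions~\ref{prop:polarized} and~\ref{prop:dim} settle those two cases; your check that the particle--hole map is an affine bijection between the trace hyperplanes is a detail the paper leaves implicit.
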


Both exceptions are harmless and these cases are treated directly in the proof of Theorem~\ref{thm:practical}.

\section{Moment maps and the general selection rule}\label{app:general}
The proof in Section~\ref{sec:proof} used nothing about fermions beyond the fact that
$\Psi\mapsto\gamma^\Psi$ is a moment map, so it proves the following in general. This is also
the setting of Theorem~\ref{thm:open}, and it yields Corollary~\ref{cor:spin} and the
spin--orbital and distinguishable-particle versions for free.

\paragraph{Setting.}
Let $K$ be a compact connected Lie group with Lie algebra $\kk$ and maximal torus $T$ with Lie
algebra $\tfr$, and fix an invariant inner product to identify $\kk^*\cong\kk\supseteq\tfr$.
Every adjoint orbit meets a fixed closed Weyl chamber $\tfr^*_+$ in exactly one point, which
defines a continuous $\pi:\kk^*\to\tfr^*_+$. A point of $\tfr^*_+$ is \emph{regular} if it lies
in the interior of $\tfr^*_+$ in $\tfr^*$, equivalently if its stabilizer is $T$. The others lie
on a wall $\ker\beta$ with $\beta$ a simple root. Let $(M,\omega)$ be a compact connected
symplectic manifold with a Hamiltonian $K$-action and equivariant moment map $\mu:M\to\kk^*$, so
that for $\xi\in\kk$ the fundamental vector field $\xi_M$ is the Hamiltonian vector field of
$\mu^\xi:=\langle\mu,\xi\rangle$,
\begin{equation}\label{eq:moment}
d\mu^\xi=\omega(\xi_M,\,\cdot\,).
\end{equation}
Put $\psi:=\pi\circ\mu$. By Kirwan's theorem \cite{Kirwan84}, $\Delta:=\psi(M)$ is a convex
polytope. For $m\in M$ let $\kk_m:=\{\xi\in\kk\,|\,\xi_M(m)=0\}$ be the Lie algebra of the
stabilizer of $m$.

\paragraph{The fermionic case.}
Take $K=U(d)$ acting on $M=\mathbb{P}(\wedge^N\C^d)$ by $u\mapsto u^{\wedge N}$, with the
Fubini--Study form. Identifying $\mathfrak{u}(d)^*$ with Hermitian matrices, the moment map is
$\mu([\Psi])=\gamma^\Psi$ (up to normalization), by \eqref{eq:Tr}
\cite{Klyachko06,AltunbulakKlyachko08}, \cite[Section~2.3]{Walter14}. With $T$ the diagonal
unitaries and $\tfr^*_+$ the decreasingly ordered real diagonal matrices,
$\psi([\Psi])=\lambda^\Psi$, regular points are the non-degenerate $\lambda$, and
$\Delta=\Delta_{d,N}$. The centre of $U(d)$ acts trivially on $M$, so nothing here changes if
one works with $SU(d)$, as in Appendix~\ref{app:dim}. Finally, for Hermitian $h$ one has
$h\in\kk_{[\Psi]}$ if and only if $\Gamma(h)\Psi\in\C\Psi$, and then
$\Gamma(h)\Psi=\tr[h\gamma^\Psi]\Psi$ by \eqref{eq:Tr}.

\paragraph{Openness.}
Being compact and connected, $M$ is \emph{convex} in the sense of \cite[Thm.~4.2(i)]{Knop02},
and convexity includes the openness of $\psi$ onto $\Delta$ \cite[Thm.~2.2(iii)]{Knop02}.
Theorem~\ref{thm:open} is stated on the unit sphere rather than on $\mathbb{P}(\wedge^N\C^d)$. It follows because the quotient map $q$ to projective space is open and $\lambda^\Psi$ factors
through it, so that for $U$ open in the sphere, $\psi(q(U))=\{\lambda^{\Psi'}\mid\Psi'\in U\}$ is
open in $\Delta_{d,N}$.

\begin{theorem}\label{thm:general}
Let $\xi\in\tfr$ and $c\in\R$ be such that $\langle\alpha,\xi\rangle\ge c$ for all
$\alpha\in\Delta$, and let $F:=\{\alpha\in\Delta\,|\,\langle\alpha,\xi\rangle=c\}$. Suppose $F$
contains a regular point. Then, for every $m\in M$ with $\psi(m)\in F$, there is $k\in K$ such
that
\[
\mu(k\cdot m)\in\tfr^*_+\qquad\text{and}\qquad \xi\in\kk_{k\cdot m}.
\]
Moreover, $k$ can be chosen such that $\xi'\in\kk_{k\cdot m}$ for every $\xi'\in\tfr$ for which
$\langle\cdot,\xi'\rangle$ attains its minimum over $\Delta$ on all of $F$.
\end{theorem}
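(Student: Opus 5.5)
We follow the three-step scheme of Section~\ref{labelmain} verbatim, now in the language of moment maps. Let $R$ be the set of $m\in M$ for which there is $k\in K$ with $\mu(k\cdot m)\in\tfr^*_+$ and $\xi'\in\kk_{k\cdot m}$ for every $\xi'$ in the set $\Xi$ of those $\xi'\in\tfr$ whose linear functional attains its minimum over $\Delta$ on all of $F$ (this set contains $\xi$). The goal is to show that every $m$ with $\psi(m)\in F$ lies in $R$.

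\emph{Step 1 (regular case).} Let $\psi(m)\in F$ be regular. Choose $k$ with $\mu(k\cdot m)=\psi(m)\in\tfr^*_+$, and put $m':=k\cdot m$. Fix $\xi'\in\Xi$ with minimum value $c'$. Consider $g(\zeta):=\langle\mu(\exp(\zeta)\cdot m'),\xi'\rangle$ for $\zeta\in\kk$ near $0$.

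Since $\mu(m')$ is regular, $\pi$ is smooth near it. To first order, $\pi(\mu(m')+\delta)=\mu(m')+\mathrm{pr}_\tfr\delta$, because moving along the adjoint orbit changes only off-torus components, to first order. Hence $\langle\psi(m''),\xi'\rangle=\langle\mu(m''),\xi'\rangle+O(|m''-m'|^2)$ for $m''$ near $m'$.

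Since $\langle\psi(\cdot),\xi'\rangle\ge c'$ with equality at $m'$, the smooth function $\mu^{\xi'}$ has a critical point at $m'$, up to that quadratic error. By \eqref{eq:moment} this means $\omega(\xi'_M(m'),\cdot)=0$, so $\xi'_M(m')=0$ by nondegeneracy of $\omega$. Thus $\xi'\in\kk_{m'}$ for every $\xi'\in\Xi$, with one $k$ serving all of them. This is \cite[Lemma~2.13]{Walter14}.

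\emph{Step 2 (closedness).} Let $m_n\in R$ with witnesses $k_n$, and suppose $m_n\to m$. By compactness of $K$ we may pass to a subsequence with $k_n\to k$. Then $\mu(k\cdot m)=\lim\mu(k_n\cdot m_n)\in\tfr^*_+$, since $\tfr^*_+$ is closed. Moreover the condition $\xi'_M(k_n\cdot m_n)=0$ passes to the limit by continuity of the action, giving $\xi'_M(k\cdot m)=0$. So $R$ is closed.

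\emph{Step 3 (density).} Let $\psi(m)\in F$ and let $\alpha^*\in F$ be regular. The segment from $\psi(m)$ to $\alpha^*$ lies in $F$, by convexity of $F$ as a face. It consists of regular points for $t>0$, because the interior of the closed cone $\tfr^*_+$ absorbs segments: a point of the cone plus a positive multiple of an interior point, suitably convexly combined, is interior.

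By openness of $\psi$ onto $\Delta$ \cite[Thm.~2.2(iii)]{Knop02}, for every neighbourhood $U_n$ of $m$ the set $\psi(U_n)$ contains $\psi(m)+t_n(\alpha^*-\psi(m))$ for some $t_n>0$. This produces $m_n\to m$ with $\psi(m_n)$ regular and in $F$. By Step 1 each $m_n$ lies in $R$, and by Step 2 so does $m$.

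The main obstacle is Step 1's first-order claim that $\psi$ and $\mu$ agree to first order at regular points in the $\xi'$ direction. The plan here is to use equivariance: near a regular value, write $\mu(m'')=\mathrm{Ad}^*_{h}\tau$ with $h$ near the identity and $\tau\in\tfr^*$ near $\mu(m')$, smoothly via the local diffeomorphism $K/T\times\tfr^*_{\mathrm{reg}}\to\kk^*$. Then $\psi(m'')=\tau$, and $\langle\mathrm{Ad}^*_h\tau-\tau,\xi'\rangle=O(|h-1|^2)$, because its derivative $\langle[\eta,\tau],\xi'\rangle=\langle\tau,[\xi',\eta]\rangle$ vanishes: $[\xi',\eta]$ is orthogonal to $\tfr$ when $\xi'\in\tfr$. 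Everything else is formal.
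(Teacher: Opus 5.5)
Your proposal is correct and follows the paper's proof essentially step for step. The three steps match: the regular case uses that $d\pi$ at a regular point is restriction to $\tfr^*$, so $\langle\psi,\xi'\rangle$ and $\mu^{\xi'}$ agree to first order and non-degeneracy of $\omega$ gives $\xi'_M=0$; closedness comes from compactness of $K$; density comes from the segment to a regular point of $F$ together with Knop's openness. (The auxiliary function $g(\zeta)$ in your Step 1 is never used and can be dropped.)
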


For fermions, $\xi=\mathrm{diag}(\kappa)$ and $c=-\kappa_0$, and the theorem says:
$\Psi':=k^{\wedge N}\Psi$ has diagonal, decreasingly ordered $\gamma^{\Psi'}$ and
$\Gamma(\mathrm{diag}(\kappa))\Psi'=\tr[\mathrm{diag}(\kappa)\gamma^{\Psi'}]\Psi'=-\kappa_0\Psi'$.
Translating back to $\Psi$ with the ordered natural orbital basis $u_j:=k^{-1}e_j$, this is
$\Dhat_B\Psi=0$, and Theorem~\ref{thm:main} follows. (Rescaling the Fubini--Study form rescales
$\mu$, hence $\Delta$, $\xi$ and $c$, and this changes nothing.)

\begin{proof}
\emph{Step 1: regular case} \cite[Lemma~2.13]{Walter14}. Suppose $\alpha:=\mu(m)\in\tfr^*_+$ is
regular. Then $\pi$ is smooth near $\alpha$, and $d\pi_\alpha$ is the restriction map
$\kk^*\to\tfr^*$: it is the identity on $\tfr^*$ and vanishes on the tangent space
$\{\mathrm{ad}^*_\eta\alpha\}$ of the orbit, which is the annihilator of $\tfr$. Hence
$f:=\langle\psi,\xi\rangle-c$ is smooth near $m$ with $df_m=d\mu^\xi_m$, as $\xi\in\tfr$. Since
$f\ge0$ and $f(m)=0$, $df_m=0$, so $\xi_M(m)=0$ by \eqref{eq:moment} and the non-degeneracy of
$\omega$: $\xi\in\kk_m$. The same argument applies to every $\xi'$ as in the last sentence of
the theorem. If $\mu(m)\notin\tfr^*_+$, run it at $k\cdot m$ for a $k$ with
$\mu(k\cdot m)=\psi(m)$.

\emph{Step 2: closedness.} Let $\Xi\subseteq\tfr$ be the set of all $\xi'$ as in the theorem
(note $\xi\in\Xi$) and let $R$ be the set of $m\in M$ for which there is $k\in K$ with
$\mu(k\cdot m)\in\tfr^*_+$ and $\Xi\subseteq\kk_{k\cdot m}$. If $m_n\in R$ with witnesses $k_n$
and $m_n\to m$, pass to a subsequence $k_n\to k$. Then $\mu(k_n\cdot m_n)\to\mu(k\cdot m)$,
which lies in the closed set $\tfr^*_+$, and $\xi'_M(k_n\cdot m_n)=0\to\xi'_M(k\cdot m)=0$ for
each $\xi'\in\Xi$ by continuity of the vector field $\xi'_M$. So $m\in R$, and $R$ is closed.

\emph{Step 3: density.} Let $\psi(m)=\alpha\in F$ and let $\alpha^*\in F$ be regular. For
$t\in(0,1]$ the point $\alpha(t):=\alpha+t(\alpha^*-\alpha)$ lies in $F$ by convexity and is
regular: for each simple root $\beta$,
$\langle\alpha(t),\beta\rangle=(1-t)\langle\alpha,\beta\rangle+t\langle\alpha^*,\beta\rangle>0$.
By openness, every neighbourhood of $m$ contains points $m'$ with $\psi(m')=\alpha(t)$ for some
small $t>0$. These $m'$ lie in $R$ by Step~1, so $m\in\overline{R}=R$ by Step~2.
\end{proof}

The closest statement in the literature is Sjamaar's condition at a vertex
\cite[Theorem~6.7(2)]{Sjamaar98}: if $\alpha$ is a vertex of $\Delta$ and is regular, then $T$
fixes every $m$ in the moment fibre $\mu^{-1}(\alpha)$. (At other points of the larger fibre
$\psi^{-1}(\alpha)$ a conjugate of $T$ is needed.) For fermions a $T$-fixed state is a Slater
determinant, whose occupation numbers $(1,\dots,1,0,\dots,0)$ are degenerate once $N\ge2$ or
$d-N\ge2$; so $\Delta_{d,N}$ has no non-degenerate vertex and the criterion is vacuous.
Theorem~\ref{thm:general} is the analogue for faces, where it is not. Sjamaar's argument uses
that the local cone at a vertex is proper, which fails at a face, and the hypothesis that $F$
contain a regular point replaces that input.

\begin{proof}[Proof of Corollary~\ref{cor:spin}]
The group $U(d)$ acts on $\wedge^N(\C^d\otimes\C^2)$ by $u\mapsto(u\otimes\ind_2)^{\wedge N}$.
This action commutes with the spin operators $\hat S^2$ and $\hat S_z$, which are built from
$\ind_d\otimes\sigma$, so it preserves $\mathcal{H}^{S,M}_N$, whose projectivization, with the
Fubini--Study form, is a compact connected Hamiltonian $U(d)$-manifold. By \eqref{eq:Tr} applied
to $h\otimes\ind_2$, the moment map is $[\Psi]\mapsto\gamma_l^\Psi$, so
$\psi([\Psi])=\lambda_l^\Psi$ and $\Delta=\Delta^S_{d,N}$. Theorem~\ref{thm:general} applies,
and its translation is as in the fermionic case, with
$\Gamma(\mathrm{diag}(\kappa)\otimes\ind_2)=\sum_j\kappa_j(\nhat_{j\uparrow}+\nhat_{j\downarrow})$.
\end{proof}

\paragraph{Spin and orbital occupation numbers.}
Altunbulak and Klyachko also constrain the orbital occupation numbers jointly with the spin
occupation numbers $\nu^\Psi$, the spectrum of the reduced state $\rho^\Psi_s$ on $\C^{2S+1}$, for a state of
definite $S$ but not of definite $M$ \cite[Sec.~3.1.1, Cor.~1 and
Sec.~6.1]{AltunbulakKlyachko08}. The pairs $(\lambda^\Psi_l,\nu^\Psi)$ form the moment polytope
of $U(d)\times U(2S+1)$ on $\mathbb{P}(\mathcal{V}^S_{d,N}\otimes\C^{2S+1})$, so
Theorem~\ref{thm:general} applies here too: on a face carrying a point where $\lambda_l$ and
$\nu$ are both non-degenerate, every pinned state has ordered eigenbases of $\gamma^\Psi_l$ and
$\rho^\Psi_s$ in which only the product configurations annihilated by the constraint contribute.

\begin{remark}[Distinguishable particles]
For $K=U(d_1)\times\dots\times U(d_r)$ acting on
$\mathbb{P}(\C^{d_1}\otimes\dots\otimes\C^{d_r})$, the moment map collects the $r$ one-party
marginals, the chamber is a product, and $\Delta$ is the polytope of compatible local spectra.
Theorem~\ref{thm:general} then gives a selection rule in local eigenbases for every pure state
pinned to a face of $\Delta$ that contains a point where all $r$ local spectra are
non-degenerate. For $r\ge3$ qubits the polygon inequalities \cite{HiguchiSudberySzulc03} bound the smaller
eigenvalue $\mu_i$ of each marginal by the sum of the others. On the $i$th facet, where
$\mu_i=\sum_{j\ne i}\mu_j$, taking $\mu_j$ near $0$ for $j\ne i$ forces $\mu_i$ near $0$, so all
$r$ marginals have $\mu<\tfrac12$ and are non-degenerate, and one recovers
\cite[Theorem~18]{Maciazek20II}. 
\end{remark}

\section{The constant in Theorem~\ref{thm:quasi}}\label{app:exponent}
Theorem~\ref{thm:exponent} follows once $f$ is known to decrease at a definite
rate wherever it is positive, since Ekeland's variational principle converts such a rate into a
distance to $\{f=0\}$. At a non-degenerate state the rate is read off the derivative of $f$, as
in \cite{SchillingBenavidesVrana17}. At a degenerate state $f$ has no derivative: the direction
of the perturbation selects the natural orbitals in which the rate is computed, and different
directions select different bases. The remedy is to average over those bases---the smearing of
the third subsection---and $\beta_D$, defined in the second, bounds what the averaging costs.

Throughout, $D$ is valid, $f(\Psi):=D(\lambda^\Psi)\ge0$, and $R_D:=\{f=0\}$, the same set as in
Step~1 of the proof of Theorem~\ref{thm:quasi}. Distances are chordal on the unit sphere $S$
of $\wedge^N\C^d$, with tangent space
$T_\Psi S=\{\Theta\,|\,\mathrm{Re}\langle\Psi,\Theta\rangle=0\}$. For a Hermitian $K$ write $\Dhat_K:=\kappa_0+\Gamma(K)$, so that $\Dhat_B=\Dhat_{K_B}$ for
$K_B=\sum_j\kappa_j\ket{u_j}\bra{u_j}$ (the smearing below needs $\Dhat_K$ for operators $K$
that are not of this form). A \emph{block} of a normalized $\Psi$ is a maximal interval $b$ of positions on which
$\lambda^\Psi_j$ is constant, and $\kappa_b:=(\kappa_j)_{j\in b}$. A position where
$\lambda^\Psi_j$ is simple is a block of one element.

\subsection*{Directional derivatives}
First, a technical lemma. At a degenerate state the derivative of $f$ depends on the direction through the way it splits
the blocks, so a single basis controls it only when $\kappa$ is non-decreasing along them.

\begin{lemma}\label{lem:dirderiv}
Let $\Psi$ be normalized, $\Theta\in T_\Psi S$ a unit vector,
$\Psi_s:=(\Psi+s\Theta)/\sqrt{1+s^2}$, and
$\gamma':=N\tr_{2\dots N}\big[\ket\Psi\bra\Theta+\ket\Theta\bra\Psi\big]$, the first-order change
of $\gamma^\Psi$ along $\Theta$. For a block $b$ of $\Psi$, with eigenspace $W_b$ and projection
$P_b$, let $(\mu(j))_{j\in b}$ be the eigenvalues of $P_b\gamma'P_b$ on $W_b$, labelled so as to
be non-increasing along $b$. Put $G_B:=2\big(\Dhat_B\Psi-f(\Psi)\Psi\big)$, which lies in
$T_\Psi S$. Then,
\begin{enumerate}
\item[(i)] the one-sided derivative
$f'(\Psi;\Theta):=\lim_{s\downarrow0}\big(f(\Psi_s)-f(\Psi)\big)/s$ exists and equals
$\sum_j\kappa_j\mu(j)$, and $f'(\Psi;\Theta)=\mathrm{Re}\langle G_{B_\Theta},\Theta\rangle$ for
some ordered natural orbital basis $B_\Theta$ of $\Psi$.
\item[(ii)] if $\kappa$ is non-decreasing on every block, then
$f'(\Psi;\Theta)\le\mathrm{Re}\langle G_B,\Theta\rangle$ for \emph{every} ordered natural orbital
basis $B$ of $\Psi$.
\end{enumerate}
\end{lemma}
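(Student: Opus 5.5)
The plan is to expand the ordered spectrum of $\gamma^{\Psi_s}$ to first order in $s$ using degenerate perturbation theory, and then to express the resulting first-order term through the one-body operator $\Gamma(K_B)$ and the identity \eqref{eq:Tr}.

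\emph{Expansion of $\gamma^{\Psi_s}$.} Since $\Psi_s=(\Psi+s\Theta)/\sqrt{1+s^2}$ and $\gamma^\Psi$ is quadratic in $\Psi$, we have $\gamma^{\Psi_s}=\gamma^\Psi+s\gamma'+O(s^2)$. The normalisation factor $(1+s^2)^{-1}$ only enters at second order.

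\emph{Degenerate first-order eigenvalue expansion.} I expect this to be the only step that is not bookkeeping. The claim is that, for small $s>0$, the eigenvalues of $\gamma^{\Psi_s}$ lying near $\lambda_b$ are $\lambda_b+s\mu(j)+o(s)$ for $j\in b$. I would cite this from Kato's analytic perturbation theory, since $s\mapsto\gamma^{\Psi_s}$ is analytic and Hermitian. A self-contained alternative is to take the Riesz projection onto the cluster near $\lambda_b$: the compressed matrix equals $\lambda_bP_b+sP_b\gamma'P_b+O(s^2)$ up to a unitary conjugation close to the identity, and Weyl's inequality then controls the remainder.

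\emph{Consequence for the ordering and for $f$.} For $s$ small the clusters belonging to different blocks do not cross. Within a block, the decreasing order of $\lambda^{\Psi_s}$ is therefore the order of the $\mu(j)$, which are labelled non-increasingly. Hence $\lambda^{\Psi_s}_j=\lambda^\Psi_j+s\mu(j)+o(s)$ for every $j$, and so $f'(\Psi;\Theta)=\sum_j\kappa_j\mu(j)$.

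\emph{The basis $B_\Theta$ in (i).} On each $W_b$, choose an orthonormal eigenbasis of $P_b\gamma'P_b$, ordered so that the eigenvalues $\mu(j)$ are non-increasing. The union of these bases is an ordered natural orbital basis $B_\Theta$ of $\Psi$, and in it $\langle u_j,\gamma'u_j\rangle=\mu(j)$. Therefore
\[
\sum_j\kappa_j\mu(j)=\tr[K_{B_\Theta}\gamma'].
\]
Polarising \eqref{eq:Tr} gives, for any Hermitian $K$,
\[
\tr[K\gamma']=2\,\mathrm{Re}\langle\Gamma(K)\Psi,\Theta\rangle .
\]
Since $\mathrm{Re}\langle\Psi,\Theta\rangle=0$, this equals $\mathrm{Re}\langle G_{B},\Theta\rangle$ for any $B$ with $K=K_B$: the terms $\kappa_0\Psi$ and $-f(\Psi)\Psi$ in $G_B$ drop out. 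This proves (i).

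\emph{$G_B$ is tangent.} In an ordered natural orbital basis, $\langle\Psi,\Dhat_B\Psi\rangle=D(\lambda^\Psi)=f(\Psi)$. Hence $\mathrm{Re}\langle\Psi,G_B\rangle=2\big(f(\Psi)-f(\Psi)\big)=0$, so $G_B\in T_\Psi S$.

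\emph{Part (ii).} Let $B$ be an arbitrary ordered natural orbital basis. The computation above still gives
\[
\mathrm{Re}\langle G_B,\Theta\rangle=\sum_j\kappa_j d_j,\qquad d_j:=\langle u_j,\gamma'u_j\rangle .
\]
On each block, $(d_j)_{j\in b}$ is the diagonal of $P_b\gamma'P_b$ in some basis of $W_b$. By Schur's theorem it is therefore majorized by $(\mu(j))_{j\in b}$, i.e. it lies in the permutohedron spanned by the permutations of the $\mu(j)$.

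The linear function $x\mapsto\sum_{j\in b}\kappa_jx_j$, with $\kappa$ non-decreasing on $b$, attains its minimum over that permutohedron at a vertex. By the rearrangement inequality, that vertex is the one pairing the non-increasing $\mu$ with the non-decreasing $\kappa$, i.e. $x=\mu$ itself. Hence $\sum_{j\in b}\kappa_jd_j\ge\sum_{j\in b}\kappa_j\mu(j)$ on every block. Summing over the blocks gives $f'(\Psi;\Theta)\le\mathrm{Re}\langle G_B,\Theta\rangle$.
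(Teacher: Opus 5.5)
Your proposal is correct and follows the paper's proof essentially step for step: you polarize the trace identity to get $\tr[K_B\gamma']=\mathrm{Re}\langle G_B,\Theta\rangle$ (the real multiples of $\Psi$ drop out by tangency), use Kato's degenerate first-order perturbation theory for the block eigenvalues, build $B_\Theta$ from eigenbases of $P_b\gamma'P_b$, and prove (ii) from Schur's majorization theorem together with the rearrangement inequality. Your Riesz-projection alternative to citing Kato and your explicit permutohedron argument are only more self-contained versions of the same steps.
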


\begin{proof}
\emph{The two vectors $G_B$ and $\gamma'$.} Since $\gamma'$ is built from
$\ket\Psi\bra\Theta+\ket\Theta\bra\Psi$ exactly as $\gamma^\Psi$ is from $\ket\Psi\bra\Psi$,
\eqref{eq:Tr} gives $\tr[h\gamma']=2\,\mathrm{Re}\langle\Theta,\Gamma(h)\Psi\rangle$ for
Hermitian $h$. As $\mathrm{Re}\langle\Theta,\Psi\rangle=0$, adding multiples of $\Psi$ to the
right-hand argument changes nothing, so with $h=K_B$,
\begin{equation}\label{eq:polar}
\tr[K_B\gamma']=2\,\mathrm{Re}\langle\Theta,\Dhat_B\Psi\rangle=\mathrm{Re}\langle G_B,\Theta\rangle .
\end{equation}
Also $\mathrm{Re}\langle\Psi,G_B\rangle=2\big(\langle\Psi,\Dhat_B\Psi\rangle-f(\Psi)\big)=0$ by
\eqref{eq:average}, so $G_B\in T_\Psi S$. It remains to compute $f'(\Psi;\Theta)$ and compare it
with $\tr[K_B\gamma']$.

\emph{(i).} By \eqref{eq:gamma}, $\gamma^{\Psi_s}=\gamma^\Psi+s\gamma'+O(s^2)$ with $\gamma'$
Hermitian, so by first-order perturbation theory \cite[Ch.~II, Thm.~6.8 and its proof]{Kato95} the eigenvalues of
$\gamma^{\Psi_s}$ emanating from a block $b$ are $\lambda^\Psi_j+s\mu(j)+o(s)$, $j\in b$,
occupying the positions of $b$ in that order. Hence $f'(\Psi;\Theta)=\sum_j\kappa_j\mu(j)$.
Assembling an eigenbasis of $P_b\gamma'P_b$ on each $W_b$ gives an ordered natural orbital basis
$B_\Theta$ whose diagonal for $\gamma'$ is $(\mu(j))_j$, so
$\tr[K_{B_\Theta}\gamma']=\sum_j\kappa_j\mu(j)$, and \eqref{eq:polar} finishes (i).

\emph{(ii).} For any ordered natural orbital basis $B$, the diagonal of $P_b\gamma'P_b$ in
$(u_j)_{j\in b}$ is majorized by $(\mu(j))_{j\in b}$, and a non-decreasing $\kappa_b$ paired
with the non-increasing $\mu$ minimizes $\sum_{j\in b}\kappa_jx_j$ over all $x$ majorized by
$\mu$; see \cite[Exercise~II.1.12]{Bhatia97} for Schur's theorem. Summing over blocks, $f'(\Psi;\Theta)=\sum_j\kappa_j\mu(j)\le\tr[K_B\gamma']$, which is
$\mathrm{Re}\langle G_B,\Theta\rangle$ by \eqref{eq:polar}.
\end{proof}

For pinned states this already settles the case of non-decreasing $\kappa$, with no hypothesis
on $F_D$. The degeneracy is not resolved---$f$ is still only one-sided differentiable, and a
block still carries many bases---but each of them then overestimates the derivative, so the fact
that $f$ is minimal at a pinned state forces $\Dhat_B\Psi=0$ for all of them at once.

\begin{corollary}[Monotone blocks]\label{cor:mono}
Let $D$ be valid and $\Psi$ pinned to $D$. If $\kappa$ is non-decreasing on every block of
$\Psi$, then $\Dhat_B\Psi=0$ for every ordered natural orbital basis $B$ of $\Psi$.
\end{corollary}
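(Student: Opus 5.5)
The proof is a first-order minimality argument. It uses Lemma~\ref{lem:dirderiv}(ii) together with the fact that $f\ge0$ everywhere on the unit sphere while $f(\Psi)=0$.

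Fix any ordered natural orbital basis $B$ of $\Psi$. Since $\Psi$ is pinned, $f(\Psi)=0$, and so $G_B=2\Dhat_B\Psi$. By Step~1 of the proof of Theorem~\ref{thm:quasi}, or directly from \eqref{eq:average}, $\langle\Psi,\Dhat_B\Psi\rangle=f(\Psi)=0$. Hence $\Dhat_B\Psi$ is orthogonal to $\Psi$, and $G_B\in T_\Psi S$, as already noted in the lemma.

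Suppose $G_B\ne0$. Take the unit tangent vector $\Theta:=-G_B/\|G_B\|$. Because $f\ge0$ on the sphere and $f(\Psi)=0$, every difference quotient $(f(\Psi_s)-f(\Psi))/s$ with $s>0$ is non-negative. So the one-sided derivative, which exists by Lemma~\ref{lem:dirderiv}(i), satisfies $f'(\Psi;\Theta)\ge0$. The hypothesis that $\kappa$ is non-decreasing on every block lets us apply Lemma~\ref{lem:dirderiv}(ii) to this particular $B$. It gives $f'(\Psi;\Theta)\le\mathrm{Re}\langle G_B,\Theta\rangle=-\|G_B\|<0$, which is a contradiction. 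Therefore $G_B=0$, that is, $\Dhat_B\Psi=0$. Since $B$ was arbitrary, this holds for every ordered natural orbital basis.

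No step here is a real obstacle. The one point to check is that (ii) holds for \emph{every} basis $B$, not merely for the special $B_\Theta$ of (i). That uniformity is exactly what lets us choose $\Theta$ after fixing $B$, with $\Theta$ depending on $B$. Without monotonicity, (i) alone would only control $\mathrm{Re}\langle G_{B_\Theta},\Theta\rangle$, and $B_\Theta$ depends on $\Theta$, which is what Example~\ref{ex:A} exploits.
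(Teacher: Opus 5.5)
Your proof is correct and is essentially the paper's own argument. Minimality of $f$ at the pinned $\Psi$ gives $f'(\Psi;\Theta)\ge0$, Lemma~\ref{lem:dirderiv}(ii) turns this into $\mathrm{Re}\langle G_B,\Theta\rangle\ge0$ for every unit $\Theta\in T_\Psi S$ and every ordered natural orbital basis $B$, and the choice $\Theta=-G_B/\|G_B\|$ forces $G_B=2\Dhat_B\Psi=0$; the paper states this directly rather than by contradiction.
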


\begin{proof}
As $f\ge0=f(\Psi)$, the difference quotients defining $f'(\Psi;\Theta)$ are non-negative, so
$f'(\Psi;\Theta)\ge0$ and hence $\mathrm{Re}\langle G_B,\Theta\rangle\ge0$ for every unit
$\Theta\in T_\Psi S$, by Lemma~\ref{lem:dirderiv}(ii). Since $G_B\in T_\Psi S$, taking
$\Theta=-G_B/\|G_B\|$ gives $G_B=0$, that is $\Dhat_B\Psi=f(\Psi)\Psi=0$.
\end{proof}

When $\kappa$ decreases somewhere on a block, the basis matters (Examples~\ref{ex:A},
\ref{ex:B}), and the slope of $f$ can be controlled only after averaging over the bases of that
block. That averaging is the smearing of $\mathcal K(\Psi)$ below, and its cost is the constant
$\beta_D$, which we define next.

\subsection*{The constant}

What survives the averaging is the average of $D$ over the configurations the state occupies,
weighted by $|c_I|^2$, and the constant is the smallest positive value such an average can take.
The weights are not free: on a block the occupation numbers are equal, so, since $\lambda_j=\sum_{I\ni j}|c_I|^2$ in an ordered natural orbital basis,
each position of the block is occupied equally often by the configurations, counted with their
weights. Only the average matters, so one can work with integer weights.

Here is such a list for the Borland--Dennis constraint $D=\lambda_5+\lambda_6-\lambda_4$, five
configurations with integer weights $r_I$ making the orbitals $1,\dots,5$ each occupied four
times in all:
\[
\begin{array}{c|ccccc|c}
 & \{1,2,3\} & \{2,4,5\} & \{1,3,5\} & \{1,4,6\} & \{3,4,5\} & \\\hline
r_I & 2 & 2 & 1 & 1 & 1 & |r|:=\sum_Ir_I=7\\
D(\ind_I) & 0 & 0 & 1 & 0 & 0 & \textstyle\sum_Ir_ID(\ind_I)=1
\end{array}
\]
The average of $D$ over the list is $\tfrac17$: one unit of constraint value spread over seven.
The weights of a state whose occupation numbers are constant on $\{1,\dots,5\}$ obey the same
equal-occupation condition, for the same reason; balanced lists relax that condition to arbitrary
non-negative integers, and the smallest positive average over them is the constant for this
pattern. The relaxation is strict---the list above normalizes to
$(\tfrac47,\dots,\tfrac47,\tfrac17)$, which is not in $\Delta_{6,3}$---which is one reason
$\beta_D$ is not sharp (see Remark~\ref{rem:alpha} below).

Two features of the example drive the general definition. The equal occupation is imposed on
the interval $\{1,\dots,5\}$ of positions, and the longer such an interval is, the more
equalities it imposes, the further the constraint value must be spread, and the smaller the
average --- so degeneracy is what costs, and the constant must minimize over all the ways a
state can be degenerate. But only degeneracy along which $\kappa$ falls costs anything: where
$\kappa$ is non-decreasing on a block, Corollary~\ref{cor:mono} gives the rule outright and no
averaging is needed. Here $\kappa=(0,0,0,-1,1,1)$ falls at the fourth position, inside
$\{1,\dots,5\}$.

Accordingly, a \emph{pattern} is a partition of $\{1,\dots,d\}$ into intervals, recording which
positions a state may have degenerate, and it is \emph{admissible} if $\kappa$ is not
non-decreasing on any of its intervals of length $\ge2$---that is, if along each such interval
$\kappa$ drops at some step, $\kappa_j>\kappa_{j+1}$ (the non-decreasing case is already settled by Corollary~\ref{cor:mono}). Fix a pattern and let $r$ be
a non-zero vector of non-negative integers $r_I$, one for each configuration. Call $r$
\emph{balanced} if within each interval every position is occupied equally often,
$\sum_{I\ni j}r_I=\sum_{I\ni j'}r_I$ for $j,j'$ in the same interval, and \emph{minimal} if no
$r$ supported on a proper subset of its support is balanced. In the example every position of
$\{1,\dots,5\}$ is occupied four times, so $r$ is balanced; and it is minimal, since the four balance equations on these five configurations have a one-dimensional solution space, spanned by $r$, so a balanced vector supported on fewer of them would be a multiple of $r$, which is supported on all five.

A minimal balanced $r$ spans an extreme ray of the rational cone cut out by the balance
conditions, so it is determined by its support up to a positive scalar. Normalize it to the
integer vector whose entries have no common factor; this fixes $|r|$, which \eqref{eq:rho} below
bounds, while the average $\langle D\rangle_r:=|r|^{-1}\sum_Ir_ID(\ind_I)$ does not depend on
the choice. Set
\begin{equation}\label{eq:beta}
\beta_D:=\min\big\{\langle D\rangle_r\ \big|\ r\ \text{minimal balanced for an admissible
pattern},\ \langle D\rangle_r>0\big\}.
\end{equation}
The example gives $\beta_D\le\tfrac17$ for Borland--Dennis; equality holds, by the enumeration
described in the last remark of this appendix. The set in \eqref{eq:beta} is non-empty unless $f\equiv0$: decomposing the weights of
a state with $f>0$ as in the proof of Lemma~\ref{lem:beta} produces a minimal balanced $r$ with
positive average. It is finite, so $\beta_D>0$. The pattern with singleton intervals is
admissible and its minimal balanced vectors are the single configurations, so $\beta_D\le a$,
the smallest positive value of $D(\ind_I)$.

For $D$ with integer coefficients, which every generalized Pauli constraint has up to scaling
\cite{AltunbulakKlyachko08} (and scaling $D$ leaves $f/\beta_D$ unchanged), one has
\begin{equation}\label{eq:rho}
\beta_D\ \ge\ \frac{1}{d\,h(d)}\ \ge\ \frac{2^d}{d\,(d+1)^{(d+1)/2}},
\end{equation}
where $h(n)$ is the largest determinant of an $n\times n$ matrix with entries $0,1$. For
Borland--Dennis this gives $\tfrac1{54}$, since $h(6)=9$, and about $\tfrac1{85}$ for the closed-form estimate. To
see \eqref{eq:rho}, write the balance conditions with unknowns $(r,c)\ge0$ as
$\sum_Ir_I\ind_I(j)=c_b$ for $j$ in the interval $b$. Up to the sign of the $c$-columns the
coefficient matrix has entries $0,1$ and at most $d$ rows, so an extreme ray of the cone these
equations cut out of $\{(r,c)\ge0\}$ has support of size at most $d+1$, at most $d$ of it in the
$r$-variables since some $c_b$ is positive, and by Cramer's rule a primitive integer
representative whose entries are minors of order $\le d$. Hence $|r|\le d\,h(d)$, while
$|r|\langle D\rangle_r$ is a positive integer when it is positive. Finally
$h(n)\le2^{-n}(n+1)^{(n+1)/2}$: bordering an $n\times n$ matrix $A$ with entries $0,1$ by a
first row and column of ones and replacing $A$ by $J-2A$, $J$ the all-ones matrix, gives an
$(n+1)\times(n+1)$ matrix with entries $\pm1$ and determinant $\pm2^n\det A$, to which
Hadamard's inequality applies.

With $\beta_D$ in hand, the distance bound reads as follows.

\begin{theorem}\label{thm:exponent}
Let $D$ be valid, with $F_D\ne\emptyset$ and $f\not\equiv0$. For every normalized $\Psi$ with
$f(\Psi)<\beta_D$,
\[
\dist(\Psi,R_D)\ \le\ \arcsin\sqrt{f(\Psi)/\beta_D}.
\]
In particular $\dist(\Psi,R_D)\le\sqrt{2f(\Psi)/\beta_D}$ when $f(\Psi)\le\beta_D/2$.
\end{theorem}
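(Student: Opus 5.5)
Our plan is to combine a slope bound with Ekeland's variational principle, as the appendix introduction announces. The key estimate we aim for is that at every normalized $\Psi$ with $0<f(\Psi)<\beta_D$ there is a unit tangent direction $\Theta\in T_\Psi S$ with one-sided derivative
\[
f'(\Psi;\Theta)\le-2\sqrt{f(\Psi)\big(\beta_D-f(\Psi)\big)} .
\]
Given this, set $g:=\arcsin\sqrt{f/\beta_D}$. On $\{f<\beta_D\}$ one has $d\arcsin\sqrt{x}/dx=1/(2\sqrt{x(1-x)})$, so the estimate becomes a slope of $g$ at least $1$ in the chordal metric, up to the $\sqrt{1+s^2}$ normalization, which is second order. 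Ekeland's principle, applied to the lower semicontinuous (indeed continuous) $g\ge0$ on the complete metric space $S$ with $\epsilon=g(\Psi)$ and any $\delta<1$, produces $\Phi$ with $g(\Phi)\le g(\Psi)$, $\dist(\Psi,\Phi)\le g(\Psi)/\delta$, and $g(\Phi')>g(\Phi)-\delta\,\dist(\Phi,\Phi')$ for all $\Phi'\ne\Phi$. If $f(\Phi)>0$, moving along the descent direction $\Theta$ at $\Phi$ contradicts the last inequality since the slope exceeds $\delta$. Hence $f(\Phi)=0$, so $\Phi\in R_D$ by Step~1 of the proof of Theorem~\ref{thm:quasi}. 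Letting $\delta\uparrow1$ and using closedness of $R_D$ gives the claimed bound. The last assertion is elementary: $\arcsin\sqrt{x}\le\sqrt{2x}$ for $x\le 1/2$, since $\sin^2 y\ge y^2/2$ for $y\le\pi/4$.

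The slope bound is the heart of the proof. At a non-degenerate $\Psi$ the natural route is to take $\Theta=-G_B/\|G_B\|$ with $G_B=2(\Dhat_B\Psi-f\Psi)$, where $B$ is the unique basis. Then $f'=-\|G_B\|=-2\sqrt{\langle\Psi,\Dhat_B^2\Psi\rangle-f^2}$, and by \eqref{eq:average} $\langle\Dhat_B^2\rangle=\sum_I|c_I|^2D(\ind_I)^2$. Since $D(\ind_I)$ takes values that are $\le0$, $0$, or $\ge a\ge\beta_D$, a variance argument gives $\sum|c_I|^2D(\ind_I)^2\ge\beta_D\sum|c_I|^2 D(\ind_I)=\beta_D f$. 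So the slope is at least $2\sqrt{f(\beta_D-f)}$.

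At degenerate states Lemma~\ref{lem:dirderiv}(i) only gives $f'(\Psi;\Theta)=\mathrm{Re}\langle G_{B_\Theta},\Theta\rangle$ with a basis depending on $\Theta$, which is the obstacle. We would try the following.
\begin{enumerate}
\item Split the blocks into those on which $\kappa$ is non-decreasing, where Lemma~\ref{lem:dirderiv}(ii) lets any basis overestimate, and the rest.
\item On each remaining block, replace $K_B$ by the smeared operator in which $\kappa$ is averaged over that block, $K=\sum_b\bar\kappa_bP_b$ plus the monotone parts. This operator is basis independent and, by majorization (Schur), still satisfies $f'(\Psi;\Theta)\le\mathrm{Re}\langle 2(\Dhat_K\Psi-f\Psi),\Theta\rangle$, because the averaged $\kappa$ is constant on the block and is dominated in the relevant sense.
\item Take $\Theta$ along $-(\Dhat_K\Psi-f\Psi)$. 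The task is then to show $\|\Dhat_K\Psi\|^2\ge\beta_D f$.
\item Expand in a basis adapted to the blocks. The state's weights $|c_I|^2$ satisfy the equal-occupation (balance) conditions on the admissible pattern given by $\Psi$'s blocks. Decompose them into a conic combination of minimal balanced $r$, each with average $\ge\beta_D$ or $\le0$, and run the same variance inequality block-wise, using Cauchy--Schwarz on the smeared eigenvalues.
\end{enumerate}
We expect step 4 to be the main difficulty. Off-diagonal parts of $\Dhat_K$ within blocks must be shown not to reduce the norm below this bound, and we would first try an averaging over the unitary group of each block, conjugating $\Dhat_B$ and using convexity of $\|\cdot\|^2$ to reduce to diagonal weights.
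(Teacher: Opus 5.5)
Your Ekeland step and your non-degenerate slope bound are sound and match the paper. The paper applies Ekeland to $\arcsin\sqrt{\min(f,\beta_D)/\beta_D}$; that truncation is what makes your $g$ defined on all of $S$. The gap is step~2, and step~2 is the entire degenerate case.

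On a block where $\kappa$ drops, replacing $\kappa_b$ by its mean $\bar\kappa_b$ need not give an upper bound for $f'(\Psi;\Theta)$. When $\kappa_b$ is non-increasing it gives a \emph{lower} bound. A non-increasing $\kappa_b$ paired with the non-increasing $\mu$ maximizes $\sum_{j\in b}y_j\mu(j)$ over $y\prec\kappa_b$, and the constant vector is one such $y$. For $b=\{j,j+1\}$ with $(\kappa_j,\kappa_{j+1})=(0,-1)$ the difference is $\sum_{i\in b}\kappa_i\mu(i)-\bar\kappa_b\sum_{i\in b}\mu(i)=\tfrac12\big(\mu(j)-\mu(j+1)\big)\ge0$. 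This is the mechanism of Example~\ref{ex:C}, and the error is not small.

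Here is a concrete failure. For Borland--Dennis take $\Psi_\delta=\sqrt{0.5-\delta}\,\ket{u_1\wedge u_2\wedge u_3}+\sqrt{0.3}\,\ket{u_1\wedge u_4\wedge u_5}+\sqrt{0.2}\,\ket{u_2\wedge u_4\wedge u_6}+\sqrt{\delta}\,\ket{u_3\wedge u_5\wedge u_6}$. By \eqref{eq:diag}, $\lambda=(0.8-\delta,0.7-\delta,0.5,0.5,0.3+\delta,0.2+\delta)$ and $f=2\delta$, and the block $\{3,4\}$ carries $\kappa_b=(0,-1)$.
\begin{itemize}
\item Your averaged $K$ gives the four configurations the values $-\tfrac12,\tfrac12,\tfrac12,\tfrac32$, so $\|G_K\|^2=1+8\delta-16\delta^2$.
\item Take $\Theta=-G_K/\|G_K\|$. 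The four configurations pairwise differ in two orbitals, so $\gamma'$ is diagonal in $(u_j)$, with $\gamma'_{33}\to1$ and $\gamma'_{44}\to-1$.
\item Hence $B_\Theta=(u_1,\dots,u_6)$. In this basis $\Dhat_B\Psi_\delta=2\sqrt\delta\,\ket{u_3\wedge u_5\wedge u_6}$, and $f'(\Psi_\delta;\Theta)=\mathrm{Re}\langle G_B,\Theta\rangle=-4\delta(\tfrac32-2\delta)/\sqrt{\tfrac14+2\delta-4\delta^2}=-12\delta+O(\delta^2)$.
\end{itemize}
Your step~2 asserts that this is $\le-\|G_K\|\approx-1$. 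Moreover, $-6f$ is not below the required $-2\sqrt{f(\beta_D-f)}$ for small $f$. So your direction does not descend at rate $\sqrt f$, and no lower bound on $\|\Dhat_K\Psi\|$ in step~4 can repair that.

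The missing idea is the paper's Lemma~\ref{lem:smear}. Let $\mathcal K(\Psi)$ be the smearing set: operators equal to $K_{B_0}$ off the decreasing blocks, with spectrum majorized by $\kappa_b$ on each decreasing block. Let $C=\{2(\Dhat_{\Kbar}\Psi-f\Psi)\mid\Kbar\in\mathcal K(\Psi)\}$, a convex compact set. Every direction selects its own block basis, so all one can say is that $f'(\Psi;\Theta)\le\mathrm{Re}\langle G,\Theta\rangle$ for \emph{some} $G\in C$ depending on $\Theta$. In effect $f'(\Psi;\cdot)$ is bounded by the support function of $C$, not by any fixed element of $C$.

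Now take $\Theta$ opposite to the point $G^*$ of $C$ nearest $0$. Then $-\mathrm{Re}\langle G,\Theta\rangle\ge\|G^*\|$ for all $G\in C$ at once, whatever basis $\Theta$ selects, so the slope is at least $\dist(0,C)$. In the example, the unsmeared $G_B$ lies in $C$ with $\|G_B\|=4\sqrt{\delta(1-\delta)}$. So $\dist(0,C)$ is of order $\sqrt f$, while your $G_K$ has norm near $1$.

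Because you do not know which $\Kbar$ is nearest, the estimate $\|\Dhat_{\Kbar}\Psi\|^2\ge\beta_Df$ is needed for \emph{every} $\Kbar\in\mathcal K(\Psi)$, not only the block-scalar one. That is Lemma~\ref{lem:beta}. It is proved in the eigenbasis of each $\Kbar$, where the weights are still balanced. For balanced $r$ one has $\sum_Ir_I\Dbar(\ind_I)=\sum_Ir_ID(\ind_I)$, since $\kbar-\kappa$ sums to zero on each block while the occupations are constant there. Your step~4 for the averaged $K$ is the easy special case. The unitary averaging over each block that you propose there is exactly what breaks step~2.
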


For a given state, most patterns cannot occur nearby, and the constant improves accordingly:
nothing in \eqref{eq:beta} need be minimized over degeneracies that no state within reach can
develop. Fix a radius $\rho>0$ and call a pattern \emph{compatible} with $(\Psi,\rho)$ if
$\lambda^\Psi_j-\lambda^\Psi_k\le2\rho$ for the endpoints $j<k$ of each of its intervals of
length $\ge2$, and let $\beta_D(\Psi,\rho)\ge\beta_D$ be the minimum \eqref{eq:beta} over
admissible patterns compatible with $(\Psi,\rho)$. Only compatible patterns arise within distance $\rho$ of $\Psi$: moving that far moves each
occupation number by at most $\rho$, so two of them can meet only if they start within $2\rho$
of each other.

\begin{proposition}[Local version]\label{prop:local}
Let $\Psi$ be normalized, $\rho>0$, $\beta:=\beta_D(\Psi,\rho)$ with $f(\Psi)<\beta$, and
$\eta:=\arcsin\sqrt{f(\Psi)/\beta}$. If $\eta<\rho$, then $\dist(\Psi,R_D)\le\eta$.
\end{proposition}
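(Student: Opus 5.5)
The plan is to rerun the argument for Theorem~\ref{thm:exponent}, with the slope estimate used only on a ball around $\Psi$. The constant $\beta_D$ is replaced by $\beta=\beta_D(\Psi,\rho)$. Ekeland's principle localizes automatically, because it produces a point whose distance from $\Psi$ is controlled.

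\emph{Step 1: only compatible patterns occur near $\Psi$.} Let $\Psi'$ be normalized with $\|\Psi'-\Psi\|<\rho$. By the argument in Step~3 of the proof of Theorem~\ref{thm:quasi}, $\|\gamma^{\Psi'}-\gamma^\Psi\|\le\|\Psi'-\Psi\|$. By Weyl's inequality, $|\lambda^{\Psi'}_j-\lambda^\Psi_j|<\rho$ for all $j$. Hence, if positions $j<k$ lie in one block of $\Psi'$, then $\lambda^\Psi_j-\lambda^\Psi_k<2\rho$. Now form a pattern from the blocks of $\Psi'$: keep each block on which $\kappa$ drops, and split each block on which $\kappa$ is non-decreasing into singletons. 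This pattern is admissible, and by the previous estimate it is compatible with $(\Psi,\rho)$.

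\emph{Step 2: slope bound.} I will use the lower bound on the slope of $f$ that underlies Theorem~\ref{thm:exponent}. At any $\Psi'$ with $f(\Psi')>0$, the descending slope is at least $2\sqrt{f(\Psi')(\beta'-f(\Psi'))}$, with $\beta'$ the minimum in \eqref{eq:beta} over the pattern of $\Psi'$ built in Step~1. That bound comes from smearing over the bases of the blocks on which $\kappa$ drops, using Lemma~\ref{lem:dirderiv}. On blocks where $\kappa$ is non-decreasing, every basis already overestimates the derivative, by Lemma~\ref{lem:dirderiv}(ii), so those blocks may be treated as singletons. By Step~1, $\beta'\ge\beta$ for every $\Psi'$ within $\rho$ of $\Psi$ with $f(\Psi')<\beta$. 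I then pass to $h:=\arcsin\sqrt{\min(f,\beta)/\beta}$. Since $\frac{d}{dx}\arcsin\sqrt{x/\beta}=1/\bigl(2\sqrt{x(\beta-x)}\bigr)$, the chain rule turns the slope bound into: $h$ has descending slope at least $1$ at every such $\Psi'$ with $0<f(\Psi')<\beta$.

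\emph{Step 3: Ekeland.} Fix $\sigma<1$ with $\eta/\sigma<\rho$; this is possible because $\eta<\rho$. Apply Ekeland's principle \cite{Ekeland74} to the continuous function $h\ge0$ on the complete metric space $S$. It gives $v$ with $h(v)\le h(\Psi)=\eta$, $\|v-\Psi\|\le\eta/\sigma<\rho$, and $h(w)>h(v)-\sigma\|w-v\|$ for all $w\ne v$. The last property says the descending slope of $h$ at $v$ is at most $\sigma<1$. Since $v$ lies in the ball and $f(v)\le f(\Psi)<\beta$, Step~2 forces $f(v)=0$, that is $v\in R_D$. Letting $\sigma\downarrow1$ and using that $R_D$ is closed (Step~1 of the proof of Theorem~\ref{thm:quasi}) gives $\dist(\Psi,R_D)\le\eta$.

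A small technical point is the choice of metric. Distances are chordal, while slopes are computed along curves $\Psi_s$ whose arc length exceeds the chord. This only helps, since a slope bound with respect to arc length gives one with respect to the chord.

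\emph{Main obstacle.} The main difficulty is Step~2. I must check that the slope lemma behind Theorem~\ref{thm:exponent} is genuinely pointwise: its constant must depend only on the admissible pattern read off from the blocks of the point $\Psi'$ where it is applied, and not on a global minimum over all patterns. I would first re-derive the smeared inequality at $\Psi'$: average $\mathrm{Re}\langle G_{B},\Theta\rangle$ over bases of the dropping blocks, and bound the averaged gradient norm below via the decomposition of the weights $|c_I|^2$ into minimal balanced vectors for that pattern (as in Lemma~\ref{lem:beta}). I would then confirm that only the intervals of that pattern enter the balance conditions.
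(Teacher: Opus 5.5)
Your proposal is correct and follows the paper's proof: the pattern you build in Step~1 is exactly the pattern $\mathcal S$ of Lemma~\ref{lem:beta}, which is already stated pointwise, so your ``main obstacle'' is settled by Lemmas~\ref{lem:smear} and~\ref{lem:beta} as written, and the arcsine reparametrization followed by Ekeland's principle is the paper's argument. The only difference is that you apply Ekeland on all of $S$ rather than on the closed $\rho$-ball $Y$; this works equally well, because the slope bound is needed only at the Ekeland point, which lies in the open ball. Two small corrections: the limit should be $\sigma\uparrow1$, and closedness of $R_D$ is not needed for the distance bound.
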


\subsection*{Smearing}

It remains to make the averaging precise and to bound the slope of $f$ by it. Fix a normalized
$\Psi$ and an ordered natural orbital basis $B_0$ of $\Psi$. Call a block $b$ of $\Psi$
\emph{decreasing} if $\kappa$ drops somewhere along it, $\kappa_j>\kappa_{j+1}$ for some
consecutive pair in $b$---the condition that makes an interval admissible---and let $W_b$ be
the corresponding eigenspace of $\gamma^\Psi$. By Lemma~\ref{lem:dirderiv}(ii) only these blocks
need averaging, so let the \emph{smearing set} $\mathcal K(\Psi)$ consist of the Hermitian
operators that agree with $K_{B_0}$ off the decreasing blocks and on each decreasing block $b$
act on $W_b$ with spectrum majorized by $\kappa_b$. Two properties are wanted. It is convex,
since majorization constrains the sums of the $k$ largest eigenvalues, each a convex function of a Hermitian matrix (Ky Fan, \cite[Exercise~II.1.13]{Bhatia97}). And it contains $K_B$ for every ordered natural
orbital basis $B$ agreeing with $B_0$ off the decreasing blocks, since such a $K_B$ acts on $W_b$ with spectrum $\kappa_b$.

Each $\Kbar\in\mathcal K(\Psi)$ is itself diagonal in one ordered natural orbital basis
$\bar B=(\bar u_j)$: take $B_0$ off the decreasing blocks and an eigenbasis of $\Kbar$ on each
$W_b$. Write $\Kbar=\sum_j\kbar_j\ket{\bar u_j}\bra{\bar u_j}$, so that $\kbar_j=\kappa_j$ off
the decreasing blocks, while on each of them $\kbar_b$ is majorized by $\kappa_b$ and in
particular $\sum_{j\in b}\kbar_j=\sum_{j\in b}\kappa_j$. With
$\Dbar(x):=\kappa_0+\sum_j\kbar_jx_j$ and $\Psi=\sum_I\bar c_I\ket{\bar u_I}$, the operator
$\Dhat_{\Kbar}$ is diagonal in the configurations of $\bar B$ with eigenvalues $\Dbar(\ind_I)$,
whence
\begin{equation}\label{eq:smearnorm}
\|\Dhat_{\Kbar}\Psi\|^2=\sum_I|\bar c_I|^2\,\Dbar(\ind_I)^2,\qquad
\langle\Psi,\Dhat_{\Kbar}\Psi\rangle=\sum_I|\bar c_I|^2\,\Dbar(\ind_I)=f(\Psi) .
\end{equation}
The second identity is the one that matters: smearing changes the individual values
$\Dbar(\ind_I)$ but not their average, because $\lambda^\Psi_j=\sum_{I\ni j}|\bar c_I|^2$ is
constant on blocks, so $\sum_j(\kbar_j-\kappa_j)\lambda^\Psi_j=0$, and \eqref{eq:average}
applies.

Since $f$ has no derivative at a degenerate state, its descent is measured by the \emph{strong
slope} of a continuous $h$ at $\Psi$,
\[
|\partial h|(\Psi):=\limsup_{\Phi\to\Psi}\big(h(\Psi)-h(\Phi)\big)_+/\|\Phi-\Psi\| ,
\]
the fastest rate at which $h$ decreases near $\Psi$. The next lemma is the point of the
smearing: it replaces the direction-dependent basis of Lemma~\ref{lem:dirderiv}(i) by the single
convex set $\mathcal K(\Psi)$, over which a worst case can be taken.

\begin{lemma}\label{lem:smear}
For every normalized $\Psi$,\quad
$|\partial f|(\Psi)^2\ \ge\
4\Big(\min_{\Kbar\in\mathcal K(\Psi)}\|\Dhat_{\Kbar}\Psi\|^2-f(\Psi)^2\Big)$.
\end{lemma}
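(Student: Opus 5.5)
The plan is to bound the strong slope from below by the best one-sided descent rate of $f$ along great circles. We then control that rate with Lemma~\ref{lem:dirderiv}, make it basis-independent using the convex set $\mathcal K(\Psi)$, and exchange an infimum over directions with a maximum over $\mathcal K(\Psi)$ by a minimax argument.

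\emph{Step 1: slope versus directional derivatives.} For a unit $\Theta\in T_\Psi S$ and $\Psi_s=(\Psi+s\Theta)/\sqrt{1+s^2}$ we have $\|\Psi_s-\Psi\|=s+O(s^2)$. Taking $\Phi=\Psi_s$ with $s\downarrow0$ in the definition of $|\partial f|$ therefore gives $|\partial f|(\Psi)\ge -f'(\Psi;\Theta)$ for every such $\Theta$, using the one-sided derivative of Lemma~\ref{lem:dirderiv}(i). Hence it suffices to produce a unit $\Theta$ with
\[
f'(\Psi;\Theta)\le-\min_{\Kbar\in\mathcal K(\Psi)}\|G_{\Kbar}\|,\qquad G_{\Kbar}:=2\big(\Dhat_{\Kbar}\Psi-f(\Psi)\Psi\big).
\]
By \eqref{eq:smearnorm} we have $\langle\Psi,\Dhat_{\Kbar}\Psi\rangle=f(\Psi)$. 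This gives $\mathrm{Re}\langle\Psi,G_{\Kbar}\rangle=0$, so $G_{\Kbar}\in T_\Psi S$, and it also gives $\|G_{\Kbar}\|^2=4\big(\|\Dhat_{\Kbar}\Psi\|^2-f(\Psi)^2\big)$. That is exactly the claimed bound once squared.

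\emph{Step 2: a basis-free upper bound for $f'$.} By Lemma~\ref{lem:dirderiv}(i), $f'(\Psi;\Theta)=\sum_j\kappa_j\mu(j)$, where on each block $\mu$ is the non-increasing spectrum of $P_b\gamma'P_b$. The identity $\tr[h\gamma']=2\,\mathrm{Re}\langle\Theta,\Gamma(h)\Psi\rangle$ from the proof of that lemma gives, exactly as in \eqref{eq:polar}, $\tr[\Kbar\gamma']=\mathrm{Re}\langle G_{\Kbar},\Theta\rangle$ for every Hermitian $\Kbar$ that commutes with $\gamma^\Psi$. I bound $\sum_{j\in b}\kappa_j\mu(j)$ block by block.
\begin{itemize}
\item On simple positions the two sides agree.
\item On non-decreasing blocks, the majorization argument of Lemma~\ref{lem:dirderiv}(ii) gives $\sum_{j\in b}\kappa_j\mu(j)\le\tr[P_bK_{B_0}P_b\gamma']$.
\item On a decreasing block, choose $\Kbar_b$ on $W_b$ diagonal in an eigenbasis of $P_b\gamma'P_b$, with eigenvalues $\kappa_b$ rearranged decreasingly and paired with the decreasing $\mu$. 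Its spectrum is $\kappa_b$, so it is admissible in $\mathcal K(\Psi)$. By the rearrangement inequality, $\sum_{j\in b}\kappa_j\mu(j)\le\sum\kappa^\downarrow_j\mu(j)=\tr[\Kbar_b\gamma']$.
\end{itemize}
Assembling the blocks gives $f'(\Psi;\Theta)\le\max_{\Kbar\in\mathcal K(\Psi)}\mathrm{Re}\langle G_{\Kbar},\Theta\rangle$.

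\emph{Step 3: minimax, the main point.} The map $(\Kbar,\Theta)\mapsto\mathrm{Re}\langle G_{\Kbar},\Theta\rangle$ is affine in $\Kbar$ on the convex compact set $\mathcal K(\Psi)$; convexity was noted after its definition via Ky Fan. It is linear in $\Theta$ on the closed unit ball of the real Hilbert space $T_\Psi S$. Sion's (or von Neumann's) minimax theorem therefore gives
\[
\inf_{\|\Theta\|\le1}\max_{\Kbar}\mathrm{Re}\langle G_{\Kbar},\Theta\rangle=\max_{\Kbar}\inf_{\|\Theta\|\le1}\mathrm{Re}\langle G_{\Kbar},\Theta\rangle=-\min_{\Kbar}\|G_{\Kbar}\|.
\]
If this value is $0$ the lemma is trivial, since $|\partial f|\ge0$. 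Otherwise the left-hand infimum is negative and attained. By positive homogeneity in $\Theta$ it is attained at a unit vector $\Theta^*$, which is in the domain of Step 1.

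Combining Steps 1–3 gives $|\partial f|(\Psi)\ge-f'(\Psi;\Theta^*)\ge\min_{\Kbar}\|G_{\Kbar}\|$, and squaring yields the lemma.

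I expect the main care to be needed in Step 2 on decreasing blocks. There the right inequality direction comes only from taking a maximum over the smearing set rather than from a fixed basis, and one must check that the rearranged operator genuinely lies in $\mathcal K(\Psi)$ and agrees with $K_{B_0}$ elsewhere. One must also check that the compactness and convexity hypotheses of the minimax theorem hold.
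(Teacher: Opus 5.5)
Your proof is correct and is essentially the paper's. The paper likewise dominates $f'(\Psi;\Theta)$, direction by direction, by $\mathrm{Re}\langle G,\Theta\rangle$ for some $G$ in the convex compact set $C=\{2(\Dhat_{\Kbar}\Psi-f(\Psi)\Psi)\,|\,\Kbar\in\mathcal K(\Psi)\}$; on decreasing blocks it uses the exact basis $B_\Theta$ of Lemma~\ref{lem:dirderiv}(i), where you use the rearrangement bound with $\kappa_b$ sorted decreasingly. It then takes $\Theta^*=-G^*/\|G^*\|$ with $G^*$ the point of $C$ nearest to $0$, which is the elementary Hilbert-space proof of the minimax identity you get from Sion's or von Neumann's theorem.
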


\begin{proof}
Every direction gives a lower bound for the slope. Indeed, for a unit $\Theta\in T_\Psi S$ let
$B'_\Theta$ agree with the basis $B_\Theta$ of Lemma~\ref{lem:dirderiv}(i) on the decreasing
blocks and with $B_0$ elsewhere; then $K_{B'_\Theta}\in\mathcal K(\Psi)$, and
Lemma~\ref{lem:dirderiv}(ii), applied on the non-decreasing blocks, gives
$f'(\Psi;\Theta)\le\mathrm{Re}\langle G_{B'_\Theta},\Theta\rangle$. Since
$\|\Psi_s-\Psi\|=s+O(s^3)$, the quotients defining $|\partial f|(\Psi)$ along $\Psi_s$ tend to
$-f'(\Psi;\Theta)$, so
\[
|\partial f|(\Psi)\ \ge\ -f'(\Psi;\Theta)\ \ge\ -\mathrm{Re}\langle G_{B'_\Theta},\Theta\rangle .
\]

It remains to choose $\Theta$ well. Put
$C:=\{2(\Dhat_{\Kbar}\Psi-f\Psi)\,|\,\Kbar\in\mathcal K(\Psi)\}$, a convex compact subset of
$T_\Psi S$, being the affine image of $\mathcal K(\Psi)$; note that $G_{B'_\Theta}\in C$ for
every $\Theta$. If $0\in C$ then $\Dhat_{\Kbar}\Psi=f\Psi$ for some $\Kbar$, so
$\min_{\Kbar}\|\Dhat_{\Kbar}\Psi\|^2\le f^2$ and the claim is trivial. Otherwise let $G^*$ be
the point of $C$ nearest to $0$, so that $\mathrm{Re}\langle G,G^*\rangle\ge\|G^*\|^2$ for all
$G\in C$, and take $\Theta^*:=-G^*/\|G^*\|$. Then
$-\mathrm{Re}\langle G,\Theta^*\rangle\ge\|G^*\|$ for all $G\in C$, in particular for
$G=G_{B'_{\Theta^*}}$, so by the display above
\[
|\partial f|(\Psi)\ \ge\ \|G^*\|=\dist(0,C)=2\min_{\Kbar}\|\Dhat_{\Kbar}\Psi-f\Psi\| .
\]
Squaring, and substituting $\|\Dhat_{\Kbar}\Psi-f\Psi\|^2=\|\Dhat_{\Kbar}\Psi\|^2-f^2$ from
\eqref{eq:smearnorm}, gives the lemma.
\end{proof}

The second lemma bounds the right-hand side of Lemma~\ref{lem:smear} from below, and is where
$\beta_D$ enters: the weights of $\Psi$ are decomposed into the balanced vectors of the previous
subsection, and each of those is controlled by its average of $D$.

\begin{lemma}\label{lem:beta}
Let $\Psi$ be normalized, let $\mathcal S$ be the pattern whose intervals of length $\ge2$ are
the decreasing blocks of $\Psi$, and let $\beta>0$ be such that $\langle D\rangle_r\ge\beta$ for
every minimal balanced $r$ for $\mathcal S$ with $\langle D\rangle_r>0$. Then
$\|\Dhat_{\Kbar}\Psi\|^2\ge\beta f(\Psi)$ for every $\Kbar\in\mathcal K(\Psi)$.
\end{lemma}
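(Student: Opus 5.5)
The plan is to compare $\|\Dhat_{\Kbar}\Psi\|^2$ and $f(\Psi)$ as a second and a first moment of the same distribution over configurations. Fix $\Kbar\in\mathcal K(\Psi)$ and the ordered natural orbital basis $\bar B$ that diagonalizes it, as constructed before \eqref{eq:smearnorm}. Put $p_I:=|\bar c_I|^2$. By \eqref{eq:smearnorm},
\[
\|\Dhat_{\Kbar}\Psi\|^2=\sum_Ip_I\,\Dbar(\ind_I)^2,\qquad \sum_Ip_I\,\Dbar(\ind_I)=f(\Psi).
\]
So it suffices to show $\sum_Ip_I\Dbar(\ind_I)^2\ge\beta\sum_Ip_I\Dbar(\ind_I)$.

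\emph{Step 1: $p$ is balanced for $\mathcal S$.} Since $\bar B$ is an ordered natural orbital basis, $\sum_{I\ni j}p_I=\lambda^\Psi_j$, and this is constant on every block of $\Psi$. In particular it is constant on the decreasing blocks, which are the only intervals of $\mathcal S$ of length $\ge2$. Hence $p$ is a non-negative real vector in the cone $\mathcal C$ cut out by the balance conditions of $\mathcal S$. This cone is polyhedral, rational and pointed, since it lies in the orthant. By Minkowski's theorem, $p=\sum_kt_kr^{(k)}$ with $t_k\ge0$, where the $r^{(k)}$ are primitive integer generators of extreme rays of $\mathcal C$. As stated in the paper, these are exactly the minimal balanced vectors. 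I expect this step to be the main point needing care: extreme rays must coincide with the support-minimal balanced vectors, and the decomposition must be by real, not integer, multiples. Both are standard facts about pointed polyhedral cones. The only place where care is needed is to use the pattern $\mathcal S$, in which non-decreasing blocks are split into singletons, rather than the full block structure of $\Psi$.

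\emph{Step 2: smearing is invisible on balanced vectors.} For any balanced $r$,
\[
\sum_Ir_I\Dbar(\ind_I)=\sum_Ir_ID(\ind_I).
\]
Indeed, the difference is $\sum_j(\kbar_j-\kappa_j)\sum_{I\ni j}r_I$. Off the decreasing blocks we have $\kbar_j=\kappa_j$. On a decreasing block $b$ the inner sum is a constant $c_b$, and $\sum_{j\in b}(\kbar_j-\kappa_j)=0$ because $\kbar_b$ is majorized by $\kappa_b$. Hence $\delta_k:=\langle D\rangle_{r^{(k)}}=|r^{(k)}|^{-1}\sum_Ir^{(k)}_I\Dbar(\ind_I)$.

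\emph{Step 3: Cauchy--Schwarz and the sign trick.} Cauchy--Schwarz within each $r^{(k)}$ gives
\[
\sum_Ir^{(k)}_I\Dbar(\ind_I)^2\ \ge\ |r^{(k)}|\,\delta_k^2 .
\]
Set $a_k:=t_k|r^{(k)}|\ge0$, so that $\sum_ka_k=\sum_Ip_I=1$ and $\sum_ka_k\delta_k=f(\Psi)$. Then
\[
\|\Dhat_{\Kbar}\Psi\|^2\ \ge\ \sum_ka_k\delta_k^2 .
\]
Now compare term by term. If $\delta_k>0$, the hypothesis gives $\delta_k\ge\beta$, hence $\delta_k^2\ge\beta\delta_k$. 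If $\delta_k\le0$, then $\delta_k^2\ge0\ge\beta\delta_k$. Summing with weights $a_k$ yields $\|\Dhat_{\Kbar}\Psi\|^2\ge\beta\sum_ka_k\delta_k=\beta f(\Psi)$. Because of this case split, no sign information on individual $\langle D\rangle_r$ is needed. That matters, since balanced lists need not come from states and may have negative averages.
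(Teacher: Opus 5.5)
Your proposal is correct and follows essentially the same route as the paper's proof. Both decompose the weights $|\bar c_I|^2$ into minimal balanced extreme rays of the balance cone for $\mathcal S$ (Minkowski--Weyl), observe that smearing leaves each ray's sum $\sum_I r_I\Dbar(\ind_I)=\sum_I r_I D(\ind_I)$ unchanged, apply Cauchy--Schwarz ray by ray, and discard the rays with non-positive average; your normalization $a_k=t_k|r^{(k)}|$ with the pointwise inequality $\delta_k^2\ge\beta\delta_k$ is only a repackaging of that last step.
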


\begin{proof}
Take $\bar B$, $\kbar$, $\bar c$ as in \eqref{eq:smearnorm}, and let $Q$ be the cone of $w\ge0$,
indexed by configurations, with $\sum_{I\ni j}w_I$ constant along each decreasing block. The
weights $w:=(|\bar c_I|^2)_I$ of $\Psi$ lie in $Q$, since $\lambda^\Psi$ is constant on blocks.
As $Q$ is a pointed polyhedral cone, its extreme rays are its elements of minimal support, that
is, the minimal balanced vectors for $\mathcal S$, and $w=\sum_kt_kr_k$ with $t_k\ge0$ and $r_k$
minimal balanced (Minkowski--Weyl).

Put $F(v):=\sum_Iv_ID(\ind_I)$ for $v\in Q$. Smearing does not change $F$: as $\kbar-\kappa$
sums to zero along each decreasing block, while $\sum_{I\ni j}v_I$ is constant there,
$\sum_Iv_I\Dbar(\ind_I)=F(v)$. In particular $F(w)=f(\Psi)$ by \eqref{eq:average}.

Now bound each ray separately. If $F(r_k)>0$ then $\langle D\rangle_{r_k}=F(r_k)/|r_k|\ge\beta$,
so by Cauchy--Schwarz
\[
\sum_Ir_{k,I}\Dbar(\ind_I)^2\ \ge\ \frac{\big(\sum_Ir_{k,I}\Dbar(\ind_I)\big)^2}{|r_k|}
=\frac{F(r_k)^2}{|r_k|}\ \ge\ \beta F(r_k) ,
\]
while for the other $k$ the left-hand side is still non-negative. Summing with the weights
$t_k$, and then dropping the rays with $F(r_k)\le0$, which only increases the right-hand side,
\[
\|\Dhat_{\Kbar}\Psi\|^2=\sum_Iw_I\Dbar(\ind_I)^2\ \ge\ \beta\sum_{k:\,F(r_k)>0}t_kF(r_k)\ \ge\
\beta F(w)=\beta f(\Psi).\qedhere
\]
\end{proof}

Together the two lemmas bound the slope of $f$ from below wherever $0<f<\beta$, and Ekeland's
principle converts a slope bound into a distance. The theorem and the proposition are proved
together, differing only in the space on which the principle is applied.

\begin{proof}[Proof of Theorem~\ref{thm:exponent} and Proposition~\ref{prop:local}]
Let $\beta$ be $\beta_D$, or $\beta_D(\Psi,\rho)$ in the local case. Let $Y$ be $S$, or the closed ball of radius $\rho$ about $\Psi$ in $S$; both are closed in
$\wedge^N\C^d$, hence complete, as Ekeland's principle \cite{Ekeland74} requires.

\emph{The hypothesis of Lemma~\ref{lem:beta} holds with this $\beta$ at every $y\in Y$.} Its
pattern $\mathcal S$ is admissible, since its intervals of length $\ge2$ are decreasing blocks.
In the local case it is also compatible with $(\Psi,\rho)$: by Step~3 of the proof of
Theorem~\ref{thm:quasi}, $\|\gamma^y-\gamma^\Psi\|\le\|y-\Psi\|\le\rho$, so Weyl's inequality
moves each occupation number by at most $\rho$, and two of them can coincide at $y$ only if they
differ by at most $2\rho$ at $\Psi$.

\emph{The slope bound.} By Lemmas~\ref{lem:smear} and~\ref{lem:beta},
$|\partial f|(y)\ge2\sqrt{\beta f(y)-f(y)^2}$ whenever $y\in Y$ and $0<f(y)<\beta$. Put
$h:=\varphi\circ\min(f,\beta)$ with $\varphi(t):=\arcsin\sqrt{t/\beta}$, so that
$\varphi'(t)=1/\big(2\sqrt{t(\beta-t)}\big)$ and the chain rule for the strong slope gives
$|\partial h|(y)\ge1$ at every such $y$: the two factors are exact reciprocals.

\emph{Ekeland.} Fix $\sigma\in(0,1)$, with $h(\Psi)/\sigma<\rho$ in the local case. Ekeland's
principle \cite{Ekeland74} on $Y$, applied to $h$ with $\varepsilon:=h(\Psi)$ and
$\delta:=\varepsilon/\sigma$---legitimate since $h\ge0$, so $h(\Psi)\le\inf_Yh+\varepsilon$---yields $y\in Y$ with
\[
h(y)\le h(\Psi),\qquad \|y-\Psi\|\le h(\Psi)/\sigma,\qquad
h(z)>h(y)-\sigma\|z-y\| \ \text{ for } z\in Y\setminus\{y\} .
\]
The third property says $|\partial h|(y)\le\sigma<1$; the first gives $f(y)<\beta$; and by the
second, in the local case, $\|y-\Psi\|<\rho$, so $y$ is interior to $Y$ and the slope bound
applies there. Were $f(y)>0$ we would therefore have $|\partial h|(y)\ge1$, a contradiction; so
$f(y)=0$, that is $y\in R_D$, and $\dist(\Psi,R_D)\le\|y-\Psi\|\le h(\Psi)/\sigma$. Now let $\sigma\uparrow1$.
\end{proof}

\paragraph{Remarks.}
\begin{enumerate}
\item\label{rem:alpha} The constant $\beta_D$ is not the best the method allows. Replacing it by
$\alpha_D:=\inf\|\Dhat_{\Kbar}\Psi\|^2/f(\Psi)$, the infimum over normalized $\Psi$ with
$f(\Psi)>0$, bases $B_0$ and $\Kbar\in\mathcal K(\Psi)$, leaves the proof intact, and
$\alpha_D\ge\beta_D$ by Lemma~\ref{lem:beta}. Computing $\alpha_D$ needs a non-convex
minimization over the smearings on top of the enumeration of Remark~\ref{rem:computing}.

\item\label{rem:sharp} The exponent $\tfrac12$ cannot be improved, already at a Slater
determinant. For $D=1-\lambda_1$ on $\wedge^4\C^8$ and
$\Psi_t\propto\ket{u_1\wedge\dots\wedge u_4}+t\ket{u_5\wedge\dots\wedge u_8}$ one has
$f(\Psi_t)=t^2/(1+t^2)$, while every pinned $\Phi$ has a fully occupied orbital $u$, so
$|\langle\Phi,\Psi_t\rangle|^2\le\langle\Psi_t,\nhat_u\Psi_t\rangle\le\lambda_1(\Psi_t)$ and
the distance from $\Psi_t$ to the pinned set satisfies
$\dist(\Psi_t,R_D)^2\ge2-2\sqrt{\lambda_1(\Psi_t)}=t^2+O(t^4)$. Thus $\dist(\Psi_t,R_D)$ and
$\sqrt{f(\Psi_t)}$ are of the same order as $t\to0$.

\item\label{rem:local} Near a pinned state with non-degenerate occupation numbers, or with
$\kappa$ non-decreasing on each of its blocks, no block of a nearby state is decreasing. There
$\mathcal K(\Psi)=\{K_B\}$ and $\|\Dhat_B\Psi\|^2\ge af$, since $x^2\ge ax$ for every value
$x=D(\ind_I)$ \cite{SchillingBenavidesVrana17}. Applying Proposition~\ref{prop:local} on a ball
small enough to contain no other degeneracies then gives $\arcsin\sqrt{f/a}$, smaller for small $f$ than the distance $\sqrt{2f}$ obtained in \cite{SchillingBenavidesVrana17}.

\item\label{rem:spin} Everything transfers to $\mathcal H^{S,M}_N$, $\gamma_l^\Psi$ and
$\Delta^S_{d,N}$, with spatial occupation vectors in $\{0,1,2\}^d$ in place of $\ind_I$. Three things change. In \eqref{eq:rho} the balance matrix then has entries $0,1,2$, and since the determinant
is multilinear in its rows this replaces $h(d)$ by $2^dh(d)$, so that
$\beta_D\ge1/\big(d\,(d+1)^{(d+1)/2}\big)$. Since a spatial occupation number operator has eigenvalues $0,1,2$, moving $\Psi$ by $\rho$ moves each spatial occupation number by up to $2\rho$, so the compatibility window $2\rho$ becomes $4\rho$; and the bound on $E$ becomes $\|E\|\le2\eta$, as in Section~\ref{sec:main}.

\item\label{rem:computing} \emph{Computing $\beta_D$.} Balance sees a configuration only through
its occupation of the intervals of the pattern, so configurations may be grouped accordingly;
the extreme rays of the resulting cone are then enumerated exactly (double description), and for
each ray one picks the representative configurations minimizing the positive average. For
Borland--Dennis this gives $\beta_D=\tfrac17$. For beryllium (Example~\ref{ex:Be}) take
$\rho=10^{-2}$: a compatible interval lies inside $\{1,\dots,4\}$ or inside $\{5,\dots,10\}$,
and $\kappa=(-2,-1,-1,0)$ is non-decreasing on the first, so an admissible compatible pattern
has a single interval $b\subseteq\{5,\dots,10\}$ containing $7$ and $8$. The nine such intervals
give $\beta=\tfrac12$ for $b=\{7,8\}$; $\tfrac13$ for $\{6,7,8\}$ and $\{7,8,9\}$; $\tfrac15$
for $\{5,\dots,8\}$, $\{6,\dots,9\}$ and $\{7,\dots,10\}$; $\tfrac18$ for $\{5,\dots,9\}$;
$\tfrac17$ for $\{6,\dots,10\}$; and $\tfrac1{11}$ for $\{5,\dots,10\}$. Hence $\beta_D(\Psi,10^{-2})=\tfrac1{11}$, and
$\arcsin\sqrt{11f}=7.0\cdot10^{-4}<\rho$, as Proposition~\ref{prop:local} requires.
\end{enumerate}

\end{document}